\newcommand{\gia}[1]{{\color{black}{#1}}}
\newcommand{\gdm}[1]{{\color{black} #1}}
\newcommand{\dk}[1]{{\color{black} #1}}
\newcommand{\gst}[1]{{\color{black} #1}}
\newcommand{\cR}{{\mathcal R}}
\newcommand{\cS}{{\mathcal S}}
\newcommand{\known}{{\tt SloFI}}
\newcommand{\unknown}{{\tt SPoRD}}
\newcommand{\unknownack}{{\tt SPoRDAck}}
\newcommand{\E}{{\rm I\kern-.3em E}}
\newcommand{\lN}{\ln N }
\newcommand{\remove}[1]{}
\begin{document}

\title{Deterministic non-adaptive contention resolution\\ on a shared channel
	\thanks{A preliminary version of this paper  appeared in the proceedings of the 39th IEEE International Conference on Distributed Computing Systems (ICDCS 2019).}
}

\author{%
	Gianluca De Marco\footnotemark[4]
	\and
	Dariusz R.~Kowalski\footnotemark[3]
	\and
	Grzegorz Stachowiak\footnotemark[2]
}

\date{}

\maketitle

\footnotetext[4]{
	Dipartimento di Informatica,
	University of Salerno,
	Italy.
	Email: \texttt{gidemarco@unisa.it}.
}

\footnotetext[2]{
	Institute of Computer Science, University of Wroc{\l}aw, Poland.
	Email: \texttt{gst@cs.uni.wroc.pl}.
}

\footnotetext[3]{School of Computer and Cyber Sciences, Augusta University, GA, USA.
	Email: \texttt{dkowalski@augusta.edu}}
	
\maketitle

\newtheorem{fact}{Fact}[section]

\newtheorem{theorem}{Theorem}[section]
\newtheorem{lemma}{Lemma}[section]
\newtheorem{corollary}{Corollary}[section]
\newtheorem{claim}{Claim}[section]
\newtheorem{proposition}{Proposition}[section]
\newtheorem{definition}{Definition}[section]
\newtheorem{example}{Example}[section]

\begin{abstract}
In a multiple access channel, autonomous stations are able 
\gia{to transmit and listen to a shared device}. A fundamental problem, called \textit{contention resolution}, is to allow any station to successfully deliver its message by resolving the conflicts that arise when several stations transmit simultaneously. 
Despite a long history on such a problem, most of the results deal with the static
setting when all stations start simultaneously, while many fundamental questions remain open in the realistic scenario when stations can join 
the channel at arbitrary times.  

In this paper, we explore the impact that three major channel features (asynchrony among stations, knowledge of the number of contenders and possibility of switching off
stations after a successful transmission) can have on the time complexity of non-adaptive 
deterministic algorithms. 
We establish upper and lower bounds allowing to understand which parameters permit time-efficient contention resolution and which do not. 

\end{abstract}

\providecommand{\keywords}[1]{\textbf{\textit{Key words---}} #1}
\keywords{multiple-access channel, contention resolution, deterministic algorithm, lower bound.}

\section{Introduction}
A shared channel, also called a \textit{multiple access channel}, is one of the fundamental communication models. 
In this model, $N$ autonomous computing entities (called stations) are attached
to a shared medium. 
Only a subset of $k < N$ stations 
(called \textit{active} stations) have packets to be transmitted. 
These $k$ stations can join the channel independently at different times
as new packet transmission requests arrive (\textit{asynchronous activation times})
and \gia{can transmit messages} in synchronous rounds.
In a nutshell, the \textit{contention resolution} problem is to 
\gdm{allow} 
every active station to \gdm{eventually} transmit its packet successfully.

One of the main difficulties is how to efficiently resolve collisions occurring 
when more than one entity attempts to access the common resource at the same time.
The problem is particularly challenging for 
\textit{deterministic} distributed algorithms,
for which not much work has been done so far, as the lack of randomness
makes breaking symmetry among the contending stations more difficult and time
consuming. 

This work focuses on answering the following crucial questions regarding
deterministic contention resolution:
\begin{itemize}
	\item 
	Which impact does the asynchrony among activation times have
	on {time complexity}?
	\item
	How important is the knowledge/estimate of the number of contenders?
	\item
	Could non-adaptive protocols or codes (see Section~\ref{s:conclusions} for further discussion)
	be asymptotically as efficient as adaptive protocols?
\end{itemize}

These questions have been already answered, to some extent, in the case of
randomized distributed algorithms~\cite{Bend-20,DS-17}; this works
addresses the case of {\em deterministic} distributed solutions.
Before presenting the results of this work, we need a more formal 
definition of both the model and the problem under consideration.

\subsection{The Model}

The formal model that is taken as the basis for theoretical studies 
on contention resolution,
is defined as follows, \textit{cf.} the surveys by Gallager~\cite{Gal} and Chlebus~\cite{Chl},
as well as some recent work for randomized solutions~\cite{Bend-20,DS-17}.

\paragraph{\textbf{Stations.}} 
A set of $N$ {\em stations}, also called {\em nodes},
are connected to the same  
transmission medium (called a {\em shared channel}).
Stations have distinct IDs in the range $[N]=\{0,1,\ldots,N-1\}$.  
Only up to $k$ stations, out of these $N$, might become active
(possibly in different time rounds).
There is no central control: every station acts autonomously by means of a distributed algorithm.
We assume that while the parameter $N$ is always 
known to the stations, the actual number $k$ of active stations may be known
or unknown; and we study, among other things, the impact of this knowledge
on the {time complexity}.
\gia{Throughout the paper, when we talk about ``knowledge of a parameter'' $N$ and/or $k$, we mean that a 
\textit{linear upper bound} of $N$ and/or $k$ is known to a station,
\textit{i.e.,} any upper bound of the form $c N + b$
(respectively $c k +b$) for some constants $c\ge 1$ and $b\ge 0$.}

\paragraph{\textbf{Communication.}}
Stations \gia{can transmit and receive messages} on the 
shared channel in synchronous {\em rounds}
(also called {\em time steps} or {\em time slots}).
If $m\leq k$ stations transmit at the same round, then the result of the transmission depends on the
parameter $m$ as follows: 
\begin{itemize}
	\item If $m = 0$, the channel is {\em silent} and 
	no packet is transmitted;
	\item If $m = 1$, the packet owned by the singly transmitting station is {\em successfully transmitted} on the channel and
	therefore 
	delivered
	to all the other stations;
	\item If $m > 1$, simultaneous transmissions interfere with one another (we say that a {\em collision} occurs) and
	as a result no packet is heard by the other stations.
\end{itemize}
No special signal is heard in the case of collision, and therefore it is impossible to distinguish between collision and no transmission. 
This setting is called {\em without collision detection}.\footnote{%
	By contrast, in a {\em collision detection} setting, 
	not considered in this work, the channel elicits a \textit{feedback} (interference noise) in case of collision, 
	allowing to distinguish collision from silence.	
} 
The only feedback a station may receive is when it actually transmits successfully, 
\gia{in which case it gets an {\em acknowledgment}. We distinguish between
two settings: one where acknowledgements are provided to successfully 
transmitting stations (setting \text{with acknowledgment}) 
and one where they are not provided (setting \textit{with no acknowledgement}).}

\paragraph{\textbf{Contention resolution problem.}}
Each of the $k < N$ stations that can become active has a packet that 
can be transmitted in a single time slot. 
The goal is to allow each of these $k$ contending stations to transmit 
successfully its own packet. A {\it contention resolution} 
algorithm is a distributed algorithm that schedules the transmissions 
guaranteeing that every station possessing a packet eventually transmits 
individually (\textit{i.e.}, without causing a packet collision 
with other stations in the same time unit).



\paragraph{\textbf{\gdm{Static vs dynamic scenario.}}}
In this work we focus on the general and realistic situation 
in which each of the $k$ stations 
\gdm{capable of being activated}, can join the channel,
and therefore start its own protocol, in a possibly \textit{different}
time slot (\textit{asynchronous start}). Although the first papers on
contention resolution date back to the 70's, and in
spite of the great attention that the problem has received in {about} 
fifty years of research, it is only recently that the asynchronous start
(also sometimes called \textit{dynamic scenario}) has been 
considered~\cite{Bend-20,Bend-16,CDK,DK,DK_tcs,DS-17}.

Indeed, most of the literature produced so far 
for the multiple-access channel, either assumed that the $k$ stations
are activated simultaneously (\textit{synchronous start}, 
\gdm{also called \textit{static scenario}}) 
~\cite{Cap,CGR,GL,GFL,GW,KG,Kow-PODC-05,AMM13}
or that the activation times are restricted to some statistical model 
(mainly when packet arrivals follow a Poisson distribution) 
or adversarial-queuing models 
\cite{Bend-05,CKR-TALG-12,Goldberg, Kumar, MB,RagUp}. 
%

In the more realistic scenario considered in the present work, 
the stations are totally independent of one another. Consequently, 
each of them can join the channel in a different time, as a new packet becomes 
available and needs to be transmitted. 
We realistically assume that the sequence of activation times, 
also called a {\em wake-up pattern}, is totally determined by a 
\textit{worst-case adversary},
{\textit{i.e.}, the adversary whose goal is to obtain an execution of the algorithm with worst possible measurement (the time measure will be defined later in this section)}.
Clearly, this asynchrony among the activation times 
introduces an additional challenge in designing a contention
resolution algorithm. Indeed, in this setting one has to 
consider that each station can start executing its protocol 
in a different time without knowing, even approximately, the time at which any other
station started its own one. 

Throughout the paper, ``switched on'', ``activated'',
``woken up'' and their derived terms are used interchangeably 
to mean the action, controlled by a worst-case adversary, by which a station wakes up and starts executing the algorithm.

\paragraph{\textbf{Timing.}}
The \gia{transmissions can occur} in synchronous rounds 
(the clocks of all the stations tick at the same rate). 
However, we assume \textit{no global clock} and 
\textit{no system-based synchronization}:
\gia{each station measures time with its own local clock:
it starts counting the time slots when it is activated, so the indices of the time slots of  different stations could be shifted with respect to each other.}
This model, sometimes referred to as \textit{locally synchronous},
must be contrasted with the 
\textit{globally synchronous} model in which all the participating stations 
share a \textit{global clock} and, therefore, 
can synchronize their activities with
respect to the current global round number ticked by the common clock.
Interestingly, this distinction between globally and locally 
synchronous clock, comes into play \textit{only} in the model with 
asynchronous start adopted in the present work.  
Indeed, when all the stations are activated simultaneously, all the local
clocks start together and therefore will tick the same round number.


\paragraph{\textbf{Algorithms.}}\label{algorithms}
This paper focuses on \textit{deterministic non-adaptive
distributed algorithms} for
the contention resolution problem in the model with asynchronous start.
\gia{
Following the standard computational model for distributed processing,
at each time slot a station can execute an arbitrary amount of computation, and 
all computations have to rely on local data only.
Specifically, any station, starting from its activation time has to decide
independently
(\textit{i.e.} without the help of a central control) 
for each time slot, whether to transmit or remain silent. 
Hence, we can identify two distinct tasks/behaviours that have
to be fulfilled in the distributed process: 
there is a \textit{computation} mode aiming at determining the schedule
of transmissions and an \textit{execution} mode which is responsible
of performing them. During the distributed process, the two modes can be temporally scheduled in different
ways depending on  whether the solution is adaptive or non-adaptive.

In {\em non-adaptive} solutions, studied in this paper, 
each station has to produce the entire transmission schedule
beforehand, that is, prior to the execution. In this case,
the entire computation mode has to be completed before the execution
mode starts.
As such, the transmission schedule computed by a station cannot be influenced 
by any information 
gathered during the execution, but can only 
depend on the station ID \dk{and other input parameters (if any)}.
In contrast, in \textit{adaptive} solutions the two modes are interleaved: 
stations are allowed to determine their transmission schedule during 
the execution, so to adapt their transmission behaviour to
the channel feedback and/or any other information gathered while
they are functioning.

The ultimate goal \dk{of non-adaptive algorithms} is to organize the transmission schedules in such a way 
to allow each station to transmit its packet successfully, 
regardless of the activation times of the contending stations. 
}
Among the many advantages of non-adaptive algorithms, we have: fast local 
processing (once the transmission schedules are implemented),
higher resiliency, and independence from collision detection and many other physical resources. 

We consider both algorithms that use acknowledgments to switch off after successful transmission (called {\em acknowledgment-based}), and algorithms which execute 
the whole transmission schedule without switching off 
({\em no-acknowledgment-based}). 

\dk{When specifying input parameters of an algorithm, we separate by a semicolon parameters corresponding to the model setting (i.e., so called ``known parameters'', such as $N$ or $k$) from other parameterization dedicated to a particular algorithm (e.g., constants, integers, transmission sequences). For example, $\mathscr{A}(N,k;c)$ means that algorithm $\mathscr{A}$ knows $N,k$ and its code uses also an external variable $c$.}

\paragraph{\textbf{Performance measures.}}
There are many aspects that can be considered when evaluating the efficiency of a protocol. In this paper we focus on {\textit{time complexity} expressed as \textit{maximum latency}
defined as follows}.

{Consider a single execution of a given algorithm, that is, its run for a fixed wake-up pattern.}
{
From the perspective of a single station, the time efficiency can be expressed
in terms of latency of that station, defined
as the number of rounds necessary for the station to transmit 
its packet successfully, measured since its activation time.}
{From the perspective of the whole execution 
-- the time performance is measured in terms of   
 \textit{maximum execution latency}, 
where the maximum is taken over all activated stations.}

{%
The time performance of the 
algorithm is measured in terms of 
\textit{maximum latency},
defined as the maximum of the maximum execution latencies, over all possible executions (\textit{i.e.}, all possible wake-up patterns).
}

Analogously to the classical Little's Law 
for stochastic queuing systems \cite{little}, we can also evaluate the efficiency of our protocols
in terms of \textit{channel utilization}, defined as the ratio between the
contention size and the maximum latency.
This measure tells us the percentage of slots used for successful 
\gia{transmissions} when the contention on the channel is $k$.




\subsection{Previous work and our contribution}
\label{sec:previous-work}

Contention resolution on a shared channel 
has been studied for decades from various perspectives, including 
communication tasks, scheduling, fault-tolerance, security, energy, 
game-theoretical and many others.
Here we consider only some fundamental results on aspects
that are needed to contextualize our research.

The first theoretical papers on channel contention resolution
date back to more than 40 years ago. The early works, 
inaugurated by the ALOHA algorithm~\cite{Abramson,Roberts}, 
focused on randomized protocols with a collision detection mechanism.
Their performance was analyzed in idealized statistical models in which 
the number of stations was infinite and 
\gdm{packet arrival follows a Poisson process}.  

A new category of protocols (splitting algorithms) was initiated with the
tree algorithm independently found by Capetanakis \cite{Cap}, 
Hayes \cite{Hay}, and Tsybakov and Mikhailov \cite{TM}. 
Although this algorithm was also initially presented in the Poisson statistical model,
it works as a deterministic algorithm for the non-statistical general
situation when only $k$ stations, out of the total
$N$ stations attached to the channel, have packets to be transmitted,
\gdm{provided that the stations start simultaneously (synchronous start)}.

In the worst case, the tree algorithm accomplishes the task in 
$O(k + k\log(N/k))$ rounds, for every $k$ and $N$~\cite{GreenbergPhD,GW}.
The tree algorithm is close to optimal in view of an almost matching 
$\Omega(k\log N/\log k)$ lower bound demonstrated by Greenberg and Winograd \cite{GW}. 
 
Surprisingly, if $k$ (or a linear upper bound on it) is given {\em a priori} to 
the stations, then Koml\'os and Greenberg showed that
the same $O(k + k\log(N/k))$ bound of the tree algorithm 
can be achieved even non-adaptively~\cite{KG} 
in a simple channel with acknowledgments (and, of course, 
without collision detection, which is not usable anyways 
by non-adaptive algorithms). 
\gdm{This is the first paper that shows how acknowledgments can be exploited
to switch off the stations that already transmitted successfully to reduce the 
time complexity of deterministic non-adaptive algorithms.} 
The proof is non-constructive; later Kowalski~\cite{Kow-PODC-05}
showed a more constructive solution, based on selectors~\cite{CGR,Ind}, 
reaching the same asymptotic bound.
Clementi, Monti, and Silvestri \cite{CMS} showed a matching lower bound of 
$\Omega(k \log(N/k))$. 
  
For randomized solutions, an efficient adaptive algorithm using collision detection
was presented by
Greenberg, Flajolet and Ladner \cite{GFL} and Greenberg and Ladner \cite{GL}.
The algorithm works in $2.14 k + O(\log k)$ rounds with high probability 
without any \textit{a priori} knowledge of the number 
$k$ of contenders. 
More recently, Fern\'andez Anta, Mosteiro and Ramon Mu\~noz~\cite{AMM13} 
obtained the same asymptotic (optimal) bound with a non-adaptive algorithm 
also ignoring the contention size $k$.

All the abovementioned results, both for deterministic and
randomized solutions, hold for the static scenario with synchronous start.
While obviously the lower bounds hold also for the general model with
asynchronous start, the upper bounds do not apply as they require synchronization
of the starting points. Therefore, many questions remain open for the
dynamic model with independent asynchronous starting~times.

As far as randomized solutions are concerned, many questions 
have been answered in \cite{DS-17} in the model without collision detection. 
In particular, it has been showed that, in contrast with what happens in
the static model, in the dynamic counterpart there is a separation, 
in terms of {time complexity (maximum latency) and channel utilization}, between non-adaptive algorithms 
ignoring $k$ and algorithms that either are adaptive or know the parameter $k$.  
{
As for adaptive algorithms, efficient solutions have been provided recently
both in terms of throughput and number of transmissions \cite{Bend-20, Bend-16}. 
}

\remove{
\begin{table*}[t]
	\small
	\centering 
	\begin{tabular}{| c | c | c | c |}
		\hline
		\textbf{model feature}    &   \textbf{settings}    & \textbf{upper bound}     &\textbf{lower bound}\\
		\hline
		      non-adaptive                    & $k$ (un)known and no-ack & $O(k\log N ???)$ \cite{DS-17}  & $\Omega(k^2\log_k N)$ \cite{CMS}\\
		randomized & $k$ (un)known and ack & $O(k\log(N/k))$ \cite{KG,Kow-PODC-05}  & $\Omega(k\log(N/k))$ \cite{CMS}\\
		\hline
				     adaptive                     & $k$ (un)known and no-ack & $O(k^2\log N)$ \cite{CMS}  & $\Omega(k^2\log_k N)$ \cite{CMS}\\
		deterministic & $k$ (un)known and ack & $O(k\log(N/k))$ \cite{KG,Kow-PODC-05}  & $\Omega(k\log(N/k))$ \cite{CMS}\\
		\hline
	\end{tabular}
	\caption{
		Related results on latency in {\em adaptive deterministic} and {\em non-adaptive randomized} contention resolution, both in dynamic setting without collision detection (as considered in this work).
	}
	\label{table:related-results2} 
\end{table*}
}

For deterministic solutions very little is known \gdm{in the dynamic scenario}.
Chlebus et al.~\cite{CKR-TALG-12} developed a deterministic non-adaptive 
contention resolution algorithm with latency independent of $k$, essentially
$O(N\log^2 N)$.
Although working for any wake up pattern, \textit{i.e.} 
for any sequence of activation times,
does not provide efficient solutions for $k \ll N$.
When a global clock is available, it has been proved that,
if the stations can switch off after acknowledgements,
the problem can be solved deterministically and non-adaptively 
with a latency of $O(k\log N\log\log N)$ even when $k$
is unknown \cite{DK}. 

\dk{Adaptive deterministic solutions could be easily obtained by combining any known solution to the wake-up problem with any static contention resolution algorithm. In particular, one could apply a wake-up algorithm by Chlebus et al.~\cite{ChlebusGKR05}, which selects a leader in time $O(k \log k \log N)$ from the first awakening, with the contention resolution protocol by Koml\'os and Greenberg~\cite{KG} working with latency $O(k\log(N/k))$. Both these results are existential, and prove an upper bound $O(k \log k \log N)$ on deterministic adaptive dynamic contention resolution. On the other hand, a lower bound 
\gia{ $\Omega(\min\{k^2\log_k N,N\})$ \cite{DR82} }
in the static setting automatically holds in the dynamic one.} 

To the best of our knowledge, the present paper is the
first work on non-adaptive contention resolution algorithms for the
most general model of communication 
with \textit{asynchronous start} and \textit{without global clock}.

\paragraph{\textbf{Our contribution.}}
In this work we explore the impact 
of asynchrony, knowledge of the number of contenders, 
and the availability of acknowledgments, 
on latency for non-adaptive deterministic protocols. 
All our results hold for settings without global clock and 
for any possible sequence of activation times
for up to $k$ competing stations.
\dk{A summary of our results and their comparison with other results obtained in the most relevant settings can be found in Table~\ref{table:results}, while Table~\ref{table:related-results} shows other (a little bit less) related results, which admit fast contention resolution algorithms.}
\gia{For brevity, we use the terms \textit{ack} and \textit{no-ack}
to indicate respectively the case when a station receives an
acknowledgement when it successfully transmits (and so it can
switch off) and the case when no acknowledgement
is received in case of success, and therefore each station remains 
in the system until the end of the execution
(possibly disturbing the other transmissions) even though it  has
already successfully sent its own message.}

We \gia{start showing} that in the acknowledgement-based model,
if the number of contenders $k$ 
is known and each station switches off after receiving the 
acknowledgment of
its successful transmission, the channel admits efficient solutions:
there exists a deterministic non-adaptive distributed algorithm,
called {\known} (\textit{Slow Frequency Increase}), working 
{with maximum latency} $O(k \log k \log N)$.

\begin{table*}[t]
	\small
	\centering 
	\begin{tabular}{| c | c | c | c |}
		\hline
		\textbf{synchrony}    &   \textbf{settings}    & \textbf{upper bound}     &\textbf{lower bound}\\
		\hline
		static                          & $k$ (un)known and no-ack & $O(k^2\log N)$ \cite{CMS}  & \dk{$\Omega(\min\{k^2\log_k N,N\})$} 
  \gia{\cite{DR82}}\\
		& $k$ (un)known and ack & $O(k\log(N/k))$ \cite{KG,Kow-PODC-05}  & $\Omega(k\log(N/k))$ \cite{khasin,CMS}\\
		\hline
		\multirow{3}{*}{dynamic}
		& $k$ known and no-ack      & \bm{$O(k^2\log N)$}  &  
          \gia{ $\Omega(\min\{k^2\log_k N,N\})$ \cite{DR82} }\\
		& $k$ known and ack  & $\bm{O(k\log k\log N)}$ & $\Omega(k\log(N/k))$ \cite{khasin,CMS}\\
		& $k$ unknown and no-ack  &   $\bm{O(k^2\log N)}$  & 
		\gia{ $\Omega(\min\{k^2\log_k N,N\}\dk{+\bm{k^2/\log k}})$ \cite{DR82} }
		\\
		& $k$ unknown and ack &  $\bm{O(k^2\log N/\log k)}$  & $\bm{\Omega(k^2/\log k)}$ \\
		\hline
	\end{tabular}
	\caption{
		Results on latency in {\em non-adaptive deterministic} contention resolution, comparing \dk{our results with those obtained in} the most related settings. Results from this~paper~are shown in bold. Results for static scenarios hold for both known and unknown contention size $k$ (we denoted this by ``$k$~(un)known'').
	}
	\label{table:results} 
\end{table*}

\begin{table*}[t]
	\small
	\centering 
	\begin{tabular}{| c | c | c | c |}
		\hline
		\textbf{\dk{dynamic} model feature}    &   \textbf{settings}    & \textbf{upper bound}     &\textbf{lower bound}\\
		\hline
		      non-adaptive randomized  & $k$ unknown and no-ack & $O\left(k\frac{\ln^2 k}{\ln\ln k}\right)$  \cite{DS-17}  & $\Omega\left(k\frac{\log k}{(\log\log k)^2}\right)$ \cite{DS-17}\\
		    \ & $k$ known and no-ack & $O\left(k\right)$  \cite{DS-17}  & $\Omega\left(k\right)$ \\
		\hline   
		adaptive randomized & $k$ (un)known and ack & $O(k)$ \cite{DS-17}  & $\Omega(k)$\\
		\hline   
		adaptive deterministic & $k$ (un)known and ack & $O(k\log k \log N)$ 
  \cite{ChlebusGKR05,KG} & $\Omega(k\log(N/k))$ \cite{khasin,CMS}\\
		\hline
	\end{tabular}
	\caption{
		\dk{Related results on latency in: {\em dynamic adaptive randomized}, {\em dynamic non-adaptive randomized}, and on {\em dynamic adaptive deterministic} contention resolution. All results hold in the dynamic setting without collision detection (as considered in this work). The lower bound formulas $\Omega(k)$ come directly from the fact that $k$ packets have to be successfully transmitted on a single channel.
      The upper bound $O(k\log k\log n)$ in the last row comes from
      a combination of results in \cite{ChlebusGKR05} and \cite{KG}, see  Section~\ref{sec:previous-work} for more details.}
	}
	\label{table:related-results} 
\end{table*}

\begin{theorem}
\label{t:upper-known}
	There exists some constant $c>0$ and a (deterministic) schedule $\cR$ such
	that algorithm \known$(N,k;c,\cR)$, \gia{in the setting with ack,}
%
\dk{allows 
	any station $v$ to transmit successfully within 
	$O(k\log k\log N)$ rounds, under contention $k$}. 
%
\end{theorem}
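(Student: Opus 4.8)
\emph{Proof plan.}
The plan is to obtain the schedule $\cR$ non-constructively, by the probabilistic method, in the spirit of the argument of Koml\'os and Greenberg~\cite{KG}, adapted to the two features that distinguish our setting from the static one: the adversarial, unknown activation offsets, and the fact that an acknowledged station switches off and stops interfering --- so that, although the protocol is non-adaptive, the execution induced by a wake-up pattern evolves like an adaptive process. I would structure $\cR$ as follows. To each identifier $v\in[N]$ associate a binary transmission sequence $s_v$; station $v$, woken in global round $w_v$, transmits in global round $r$ exactly when $s_v(r-w_v)=1$ and $v$ has not yet been acknowledged. Partition the local time axis into $\Theta(\log k)$ consecutive \emph{epochs} $E_0,E_1,\dots$, each of length $\Theta(k\lN)$ (the constant here being the parameter $c$ of the statement), and prescribe that inside epoch $E_j$ the sequence $s_v$ has density of ones equal to $p_j=\min\{2^j/k,\tfrac14\}$: thus the transmission frequency of each station grows geometrically, one doubling per epoch, until it saturates at $\Theta(1)$ in the last $O(1)$ epochs --- this is the ``slow frequency increase''. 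The horizon is $L=\Theta(k\log k\,\lN)$.

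\emph{Reduction and union bound.} First I would reduce the worst case over wake-up patterns to finitely many configurations. Fixing a station $v$, only stations activated in the window $[w_v-L,w_v+L]$ can interfere with $v$ before round $w_v+L$, and at most $k$ of them become active; so the worst case for $v$ is captured by a pair $(A,\bm{\tau})$ with $v\in A\subseteq[N]$, $|A|\le k$, and an offset vector $\bm{\tau}$ with entries in an interval of length $2L$. Since $L$ is polynomial in $k$ and $\lN$, there are at most $\binom{N}{k}(2L)^{k}=2^{O(k\lN)}$ such configurations. Then I would choose the bits of $\cR$ independently at random, with $s_v(t)=1$ with probability $p_j$ for $t\in E_j$, and show that for every fixed $(A,\bm{\tau})$ and every $v\in A$ the probability that $v$ never transmits alone during $[w_v,w_v+L]$ in the deterministic execution induced by $(A,\bm{\tau})$ is below $2^{-c'k\lN}$, for a constant $c'$ large enough to beat the configuration count; a single union bound then gives a deterministic $\cR$ good against every adversary and every station, which is the theorem.

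\emph{Where the horizon comes from.} The failure probability of a fixed $v$ should be driven down to $2^{-\Omega(k\lN)}$ in two steps. Once the number of un-acknowledged stations still interfering with $v$ has dropped to $O(1)$, I would look at a density-$\tfrac14$ epoch of $v$: there $v$ has $\Theta(k\lN)$ transmission slots, grouped into $\Theta(k)$ disjoint batches of $\Theta(\lN)$ slots, and in each batch the probability that $v$ transmits alone is a positive constant (a constant for $v$'s own transmission, times $(3/4)^{O(1)}$ for the $O(1)$ interferers staying silent); so each batch fails with probability $N^{-\Omega(1)}$ and all $\Theta(k)$ of them fail only with probability $N^{-\Omega(k)}=2^{-\Omega(k\lN)}$, closing the union bound. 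It then remains to prove a \emph{draining lemma}: within $O(\log k)$ of $v$'s epochs the number of its un-acknowledged interferers does fall to $O(1)$. The mechanism is the geometric frequency increase together with switch-off: in a block of $\Theta(k\lN)$ global rounds during which the contention is $m$ and the transmitters use density $\Theta(1/m)$, a constant fraction of those $m$ stations get acknowledged (each being, symmetrically, a lone transmitter with probability $\Omega(1)$ in $\Theta(\lN)$ of its slots); since each station's density reaches $\Theta(1/m)$ within $O(\log(k/m))$ epochs of its own activation, the active population around $v$ halves every $O(1)$ epochs, so after $O(\log k)$ epochs it is $O(1)$.

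\emph{Main obstacle.} The hard part will be making the draining lemma rigorous. The execution is genuinely adaptive, so the random bits of $\cR$ and the evolving set of un-acknowledged stations are entangled: the peeling of the active set cannot be handled by naive independence and must be carried out conditionally, exposing the randomness epoch by epoch (or station by station, in order of activation) and controlling the process with, e.g., a martingale/Azuma bound so that the $2^{-\Omega(k\lN)}$ per-configuration estimate survives the conditioning. A further subtlety is asynchrony: at any given global round different stations sit in different epochs and hence transmit at different densities, so the ``contention $m$ versus density $1/m$'' matching has to be argued in an offset-robust, window-by-window form rather than globally --- and this, I expect, is exactly where the extra $\log k$ factor (compared with the static $O(k\log(N/k))$ of~\cite{KG}) enters. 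Once the draining lemma is established in this conditional, offset-robust form, the reduction and union bound above are routine.
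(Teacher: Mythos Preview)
Your overall framework---random schedule with geometrically increasing density, per-configuration failure bound $2^{-\Omega(k\log N)}$, union bound over $2^{O(k\log N)}$ configurations---is the right skeleton and matches the paper's. The gap is in the draining lemma.

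The claim that ``the number of un-acknowledged interferers drops to $O(1)$'' is false as stated. Consider: the adversary activates $v$ at time $0$ and then activates $k-1$ fresh stations one round before $v$'s last (density-$\tfrac14$) epoch begins. All $k-1$ of them sit in their first epoch throughout $v$'s last epoch, so $v$ has $k-1$ active interferers, not $O(1)$; your endgame calculation ``$(3/4)^{O(1)}$'' does not apply. What saves the situation is that those $k-1$ stations each have density $\Theta(1/k)$, so the \emph{sum of transmission probabilities} $\sigma(t)$ is $O(1)$ and the silence probability is still a constant $\approx 4^{-\sigma(t)}$. The right invariant is $\sigma(t)=O(1)$, not a bound on the head count---and that is exactly what the paper tracks. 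Your halving heuristic (``contention $m$ matches density $1/m$'') suffers from the same problem: with adversarial offsets there is no moment at which all surviving stations are in the epoch matching the current contention, and you have no mechanism proposed to force this.

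The paper's key idea is to control $\sigma(t)$ inductively on global-time \emph{segments} of length $T=ck\log N$. Let $\mathcal A_\ell$ be the event that some round of segment $\ell$ has $\sigma(t)<\tfrac12$. The \emph{slow} increase (factor $\sqrt2$ per phase, not your factor $2$) guarantees deterministically that if $\mathcal A_{\ell-1}$ holds then $\sigma(t)\le 2$ everywhere in segment~$\ell$: stations already active contribute at most $2\cdot\tfrac12=1$, and newly activated ones at most $k\cdot\tfrac1{\sqrt2\,k}<1$. To show $\mathcal A_\ell$ holds with probability $\ge 1-P$ given $\bigcap_{i<\ell}\mathcal A_i$, the paper uses a coupling rather than a martingale: it builds an i.i.d.\ Bernoulli$(\tfrac18)$ sequence $\rho$ over the rounds of the segment so that each $1$ produced while $\tfrac12\le\sigma\le 2$ corresponds to a genuine successful transmission; since at most $k$ genuine successes can ever occur (stations switch off), more than $k$ ones in $\rho$---which Chernoff gives for $T=ck\log N$---forces some round with $\sigma<\tfrac12$. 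This cleanly dodges the randomness/adaptivity entanglement you worried about. Once all $\mathcal A_\ell$ hold, every round of $v$'s last phase has $\sigma\le 2$ and $p_v\ge\tfrac12$, so $v$ succeeds in each such round with probability $\ge\tfrac1{32}$, giving the $N^{-\Omega(k)}$ tail the union bound needs.
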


This is close to the known lower bound $\Omega(k\log(N/k))$~\cite{CMS}
obtained for adaptive algorithms and in the stronger model with global clock.

In the same settings, if $k$ is unknown,  we show an  
$\Omega(k^2/ \log k)$ lower bound.
\gdm{Under the common assumption that $N$ is polynomially bounded in the number $k$ of active stations (that is $N < k^c$, for a constant $c$) this lower bound}
points out that the ignorance of the contention size $k$ makes the channel \gia{substantially} less efficient, even if the  
stations switch off after acknowledgments. 
Namely, we prove the following result.

\remove{
\begin{theorem}\label{lb:unknown_k}
	No deterministic non-adaptive algorithm 
	achieves an average 
	latency $o(k^2/\log k)$,
	even when the stations can switch off after a successful transmission.
\end{theorem}
}

\begin{theorem}
\label{lb:unknown_k}
	For any $c=1/(4+o(1))$,
	no deterministic non-adaptive algorithm $\mathscr{A}$, \dk{having as an input only parameter $N$,} achieves a maximum 
	latency $t_\mathscr{A}(k) = ck^2/\log k$ \dk{for any contention $k$,}
	even \gia{in the setting with ack, that is} when the 
	stations can switch off after a successful transmission.
\end{theorem}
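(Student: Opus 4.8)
Since $\mathscr{A}$ is non-adaptive and its only input is $N$, it is fully described by fixed transmission schedules $S_0,\dots,S_{N-1}\subseteq\{1,2,\dots\}$, where $S_v$ lists the rounds, counted from $v$'s activation, in which $v$ transmits as long as it has not yet been switched off. I would argue by contradiction: assume $\mathscr{A}$ has maximum latency $t(k)=ck^2/\log k$ for every contention $k$, with $c=1/(4+o(1))$. Fix $k$ (to be optimised at the end) and put $t=t(k)$; for each station $v$ let $Q_v=S_v\cap\{1,\dots,t\}$ and $p_v=|Q_v|$.

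\textbf{Step 1: forcing dense schedules.} First I would show that if some station $v^\ast$ has $p_{v^\ast}\le k-1$, then $\mathscr{A}$ already fails at contention $k$. Wake $v^\ast$ at time $0$, let $Q_{v^\ast}=\{a_1<\dots<a_p\}$ with $p\le k-1$, and for each $i$ take a fresh station $u_i\in[N]\setminus\{v^\ast\}$ woken at time $a_i-\min S_{u_i}$, so that the \emph{first} transmission of $u_i$ lands in global round $a_i$. Then $u_i$ transmits nothing before round $a_i$, so it cannot be acknowledged and switched off before then --- this is exactly what makes the attack work in the acknowledgement setting --- while $v^\ast$, which inductively has never been alone, is still active and transmits in round $a_i$, so that round carries at least two transmitters. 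Hence $v^\ast$ collides in every one of its first $t$ transmissions, using only $p+1\le k$ stations, and its latency exceeds $t(k)$ --- a contradiction. So from now on I may assume $p_v\ge k$ for every $v$: each prefix $Q_v$ has density $\Omega(\log k/k)$ in $\{1,\dots,t\}$ (and, applying the same observation with every $k'\le k$, a matching density on every shorter prefix as well).

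\textbf{Step 2: jamming all contenders simultaneously.} Using this density, the plan is to wake up about $k$ stations $v_1,\dots,v_k$ with integer shifts $s_1,\dots,s_k$ chosen so that, over a suitably chosen window of length $\Theta(t)$, no round ever carries exactly one transmitter. In such an execution no acknowledgement is ever issued, nobody switches off, and \emph{all} $k$ stations fail to transmit successfully within their first $t$ rounds --- a contradiction. To find good shifts I would use a first-moment computation: drawing the $s_j$ uniformly at random, each (shifted, truncated) schedule meets a fixed round of the window with probability $\Omega(\log k/k)$, so the probability that a round is occupied by exactly one station is at most $k\cdot O(\log k/k)\cdot(1-\Omega(\log k/k))^{k-1}$, and summing over the $\Theta(t)$ rounds the expected number of ``singleton'' rounds is at most $t\cdot k\cdot O(\log k/k)\cdot e^{-\Omega(\log k)}\le k^{\,2-1/(2c)+o(1)}$, which is below $1$ exactly when $c<1/(4+o(1))$; then a shift vector with no singleton round exists and realises the desired bad wake-up pattern.

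\textbf{Main obstacle.} The delicate point throughout is the acknowledgement setting: a jammer that ever transmits alone vanishes, so a careless jamming schedule unravels. Step 1 sidesteps this with the ``pin the first transmission'' trick, and Step 2 sidesteps it globally through the invariant ``no round is ever a singleton''. The other sensitive spot is making the first-moment estimate in Step 2 tight enough to yield the constant $1/(4+o(1))$ rather than something weaker --- this needs care in the choice of the jamming window and in bounding the per-round occupancy probabilities uniformly (in particular ruling out degenerate, strongly front-loaded schedules, for which the density bound from Step 1 applied to short prefixes is what saves the argument); the rest is routine.
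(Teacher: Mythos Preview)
Your Step~1 is correct and coincides with the paper's Lemma~\ref{bound:sk}. The difficulty is Step~2. If the $k$ shifts are drawn uniformly from an interval of length $\Theta(t)$, the per-round occupancy is \emph{not} uniform: at global round $r$ only stations with $s_j<r$ contribute, and each contributes $|Q_{v_j}\cap[1,r-s_j]|/\Theta(t)$; the short-prefix density bound you invoke gives only about $\sqrt{r\log r/c}$ transmissions in $[1,r]$, not $k$. Hence $\sigma(r)$ ranges from roughly $(\log k)/c$ near $r=t$ down to $o(\log k)$ for $r=o(t)$, and the $e^{-\sigma}$ factor is useless on those early rounds. The prefix-density bound tells you each individual schedule has the right shape, but it does nothing to raise the \emph{sum} $\sigma(r)$ when only a few of the $k$ stations have yet been woken. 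So the first-moment estimate does not go through as written, and even at the favourable end $r\approx t$ the exponent one actually obtains is $2-1/c+o(1)$ (yielding $c<1/2$), not the $2-1/(2c)$ you state.

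The paper repairs precisely this non-uniformity with two ingredients missing from your sketch. First, $2k\epsilon$ of the stations are spent to \emph{deterministically} jam the initial segment $[1,k\epsilon]$, by placing two first-transmissions in every such round (this also covers the regime before any random jammer is awake). Second, the remaining $k'=(1-2\epsilon)k$ stations are partitioned into geometrically shrinking groups $S_0,S_1,\ldots$ with $|S_0|=k'/2$ and $|S_j|\propto(1-\epsilon)^j$, the wake-up times of $S_j$ being uniform over the nested interval $[1,\,(1-\epsilon)^{2j}ck^2/\log((1-\epsilon)^jk)]$; layer $S_j$ is sized so that the cumulative $\sigma(r)$ stays above $(2+\epsilon)\log k$ on the sub-window where the coarser layers have already thinned out (Lemma~\ref{interval} and Lemma~\ref{lem:lb-final}). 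It is this layered construction---and specifically the need to reserve half the stations for the finer layers $S_1,S_2,\ldots$---that produces the constant $1/(4+o(1))$.
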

   
\gdm{For unknown $k$, we also show}   
\gdm{the existence of} an 
algorithm, called \unknown\ 
(\textit{Small Polynomial rate Decrease}), with  $O(k^2 \log N)$ maximum latency, which is achieved even if acknowledgments are not provided
\gia{and so the stations cannot switch-off after a successful transmission}. 
\gdm{For $N$ polynomially bounded in the number $k$ of contending stations, this upper bound nearly matches the lower bound of Theorem \ref{lb:unknown_k}
obtained when the stations \gia{receive an acknowledgement after a successful transmission and therefore can switch off}.}

\begin{theorem}\label{derand}
	There exists some constant $b>0$ and a (deterministic) schedule $\cS$ 
	such that algorithm \unknown $(N;b,\cS)$, \gia{in the
 setting with no-ack,} 
	allows 
	any station $v$ to transmit successfully within 
	$O(k^2\log N)$ rounds, {under contention $k$}. 
\end{theorem}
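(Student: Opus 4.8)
The plan is to construct the schedule $\cS$ by a derandomization / probabilistic-existence argument, mirroring the classical selector-based approach but adapted to the dynamic (asynchronous, no global clock, no acknowledgement) setting. First I would fix a guess on the magnitude of the contention: since $k$ is unknown, the schedule must simultaneously handle all $k\in\{1,\dots,N\}$. I would partition the time line (local to each station, since there is no global clock) into consecutive \emph{epochs}, where epoch $i$ is tuned to contention roughly $2^i$ and has length $\Theta(2^{2i}\log N)$; summing a geometric series, by the end of the epoch matching the true $k$ a station has waited $O(k^2\log N)$ of its own rounds, which is the target latency.

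Within an epoch tuned to contention parameter $m$, the key object is a family of binary sequences — one per station ID in $[N]$ — with a \emph{strong-selectivity / superimposed-code} property robust to arbitrary shifts: for every set $S$ of at most $m$ IDs and every choice of offsets (wake-up times) of the stations in $S$ within the epoch, and for every target station $v\in S$, there is some round in the epoch at which $v$ transmits and no other station of $S$ transmits. This is exactly a ``selector'' or ``strongly-selective family'' strengthened to tolerate the $m$-fold product of shifts. I would show by the probabilistic method that such a family of length $O(m^2\log N)$ exists: let each station transmit independently in each round with probability $\Theta(1/m)$; for a fixed $S$, fixed shifts, and fixed $v$, the probability that a given round is \emph{not} good for $v$ is at most a constant bounded away from $1$, so over $\Theta(m^2\log N)$ rounds the failure probability is at most $\exp(-\Theta(m\log N))$; a union bound over the $\le N^m$ choices of $S$, the $\le (2^{2i})^m$ choices of shift vectors, and the $\le m$ choices of $v$ is then dominated. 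Concatenating these families over all epochs $i=1,\dots,\lceil\log N\rceil$ yields the single universal schedule $\cS$, and $b$ is the hidden constant in the per-epoch length.

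The correctness argument then runs as follows. Consider an execution with true contention $k$ and arbitrary wake-up pattern, and fix a station $v$ that we want to succeed. Let $i^\*$ be the smallest index with $2^{i^\*}\ge k$. Once $v$ has entered its own epoch $i^\*$, at most $k\le 2^{i^\*}=m$ stations are ever active, but the \emph{other} active stations may be in arbitrary (even different) epochs of their own local schedules — this is the crux of the dynamic setting. The point is that no matter what sequences the other $\le k-1$ stations are executing during $v$'s epoch $i^\*$, their transmission pattern during that window is \emph{some} fixed $0/1$ pattern, and we only need $v$'s own epoch-$i^\*$ sequence to hit a round that is silent for all of them; to guarantee this I would actually use the selectivity property with respect to the full set of IDs, reading off that the concatenation over \emph{all} epochs of any collection of $\le k-1$ other stations' sequences still occupies at most $k-1$ ``active slots worth'' of obstruction in any window — more precisely, I would design epoch $i^\*$ of $v$'s schedule to be a strongly selective family for sets of size $m$ drawn from a ground set that encodes (ID, epoch-index) pairs, so that whatever the interfering stations do, their behavior looks like at most $k-1$ elements of that ground set, and selectivity delivers a clean round for $v$. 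Counting from $v$'s activation, $v$ reaches the relevant clean round within $\sum_{i\le i^\*}\Theta(2^{2i}\log N)=O(k^2\log N)$ rounds.

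The main obstacle, and the step I expect to need the most care, is exactly this interaction between $v$'s local epoch structure and the uncoordinated local clocks of the other stations: because there is no global clock and no acknowledgement, an interfering station could be replaying a long-contention (large-$m$) sub-schedule during $v$'s short epoch, and I must make sure the union bound in the probabilistic construction still covers all such cross-epoch shift patterns without blowing the length past $O(m^2\log N)$. Getting the ground set and the exponent in the union bound right — so that the $\log(\text{number of bad events})$ stays $O(m\log N)$ rather than, say, $O(m\log^2 N)$ — is the delicate point; the rest (geometric summation of epoch lengths, extraction of the constant $b$, and verifying the claimed $O(k^2\log N)$ latency) is routine.
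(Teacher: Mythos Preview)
Your epoch-based construction is plausible, but the analysis you sketch has a real gap, and it is not where you locate it. You assert that in $v$'s epoch $i^\star$ (tuned to $m\approx k$) ``the probability that a given round is not good for $v$ is at most a constant bounded away from $1$'' --- effectively, that $v$ succeeds with probability $\Omega(1/m)$ per round. This holds only if every interfering station is also transmitting with probability $\Theta(1/m)$; but a station $w$ that has just woken up is in its own epoch $0$ and transmits with probability $\Theta(1)$, and a handful of such stations drive $v$'s per-round success probability to $2^{-\Theta(k)}/m$, collapsing the selector-style bound. You diagnose the obstacle as interference from stations in \emph{large}-$m$ epochs and as a union-bound blow-up, but both are benign: large-$m$ interferers transmit rarely, and each station has at most $\mathrm{poly}(N)$ possible offsets in a $\mathrm{poly}(N)$-length schedule, giving only $N^{O(k)}$ configurations. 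The actual difficulty is the per-round estimate when interferers sit in \emph{small}-$m$ epochs, and enlarging the ground set to (ID, epoch-index) pairs does nothing for that --- the problem is heterogeneous transmission probabilities, not how you index the adversary's choices.

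The paper sidesteps this by dropping epochs and using a single smoothly decaying probability $p_i=\Theta(1/\sqrt{i})$ in phase $i$ of length $T=\Theta(\log N)$, and --- crucially --- by replacing the per-round selector bound with a total-mass argument. The key fact is that the partial sums obey $s(r)=\sum_{r'\le r}p(r')=O(\sqrt{r\log N})$; by concavity of $s$, any single station contributes at most $s(L)$ mass to any window of length $L$ regardless of its offset, so $k$ stations contribute at most $k\,s(L)=O(k\sqrt{L\log N})$. Taking $L=\Theta(k^2\log N)$ makes this $O(L)$, hence at least half the rounds of $v$'s window have $\sigma(r)<1$, and on those rounds $v$ (whose own probability is still $\ge\Theta(1/k)$) succeeds with probability $\Omega(1/k)$. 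That yields failure probability $N^{-\Omega(k)}$ per configuration, which beats the $N^{O(k)}$ union bound. Your doubling-epoch scheme in fact has the same concave partial-sum behaviour and could be rescued by the same mass argument; what cannot be rescued is the pure selector/superimposed-code analysis you outline.
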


We finally show that the above algorithm could be further improved 
if stations could switch off upon acknowledgments
\gdm{and under the additional assumption that $\ln\ln N = O(\log k)$}.
{In this case, we prove that there exists an algorithm \unknownack\ (\textit{Small Polynomial Rate Decrease with} {Acknowledgements}) achieving}
a maximum latency of $O(k^2 \log N/\log k)$.

\begin{theorem}\label{derandack}
	There exists some constant $c>0$ and a (deterministic) schedule $\cS$ 
	such that algorithm \unknownack $(N;c,\cS)$,
     \gia{in the setting with ack,} allows  
	any station $v$ to transmit successfully within 
	$O(k^2\log N/\ln k)$ rounds \dk{for contention at most $k$, for any integer $1<k\le N$}, \dk{as long as $\ln\ln N = O(\ln k)$}.
\end{theorem}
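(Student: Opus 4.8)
The plan is to reuse the schedule $\cS$ that underlies \unknown\ in Theorem~\ref{derand}, whose defining feature is a \emph{slow (polynomial) decrease of the transmission rate}, and to exploit the one extra power that acknowledgements give — a station switches off as soon as it is isolated once — to trim a $\ln k$ factor off the $O(k^2\log N)$ bound. The order of magnitude to aim at is no accident: $\Theta\!\big(k^2\log N/\ln k\big)$ is, up to constants, the length of an optimal $(N,k)$-strongly selective family (superimposed code), so \unknownack\ should essentially \emph{simulate} such a family, the polynomial rate decrease compensating for the unknown $k$ and for the adversarial wake-up times, and the acknowledgements compensating for the fact that a strongly selective family only guarantees isolation among \emph{time-aligned} rows.

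I would first isolate the right combinatorial target. Say a schedule has the \emph{cascading isolation property with window $W$} if, for every set of at most $k$ stations with arbitrary activation offsets and every window $I$ of $W$ consecutive rounds, there is a round $t_1\in I$ in which exactly one of these stations transmits; and, after removing that station, a later round $t_2\in I$ in which exactly one of the survivors transmits; and so on until all of them are removed. Under the switch-off rule this is precisely what is needed: the station isolated at $t_j$ is acknowledged and leaves, so it does not disturb the rounds after $t_j$, and every station — in particular a given $v$ — gets isolated inside $I$. The key claim is that the rate-decrease schedule of \unknown\ has the cascading isolation property already with $W=O(k^2\log N/\ln k)$, i.e.\ a $\ln k$ factor shorter than the no-acknowledgement window. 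For the aligned case this is immediate from an optimal strongly selective family. For the dynamic case one uses the (quasi) scale-invariance of the polynomial rate decrease: inside $I$ the stations that woke up earliest have the smallest transmission rate and get isolated first; once they leave, the next-oldest stations become the low-rate ones, and the isolations cascade. Here the hypothesis $\ln\ln N=O(\ln k)$ enters: the activation offsets that can matter within a window of length $W=O(k^2\log N/\ln k)$ span $O(\log W)=O(\log k+\log\log N)$ binary scales, and making the schedule robust to all of them simultaneously costs a factor equal to that number of scales; the hypothesis forces it down to $O(\log k)$, which is absorbed into the constant in front of $k^2\log N/\ln k$. (For $k$ below this threshold the overhead of not knowing $k$ dominates and one cannot do better than the $O(k^2\log N)$ of Theorem~\ref{derand}, which is consistent.) Acknowledgements are used exactly to replace \emph{full} shift-robust strong selectivity (the expensive, $\ln k$-costlier object) by this weaker online/cascading variant.

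Granting the cascading isolation property, the theorem follows quickly: fix any wake-up pattern with contention at most $k$ and any station $v$, and apply the property to the window $I$ of length $W=O(k^2\log N/\ln k)$ beginning at $v$'s activation; all at-most-$k$ relevant stations are eliminated inside $I$, so in particular $v$ transmits successfully at some round of $I$, i.e.\ within $O(k^2\log N/\ln k)$ rounds of being switched on. The degenerate cases $k=2$ and $k=N$, and the choice of the constant $c$, are dealt with separately and routinely.

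The main obstacle is the claim that the polynomial-rate-decrease schedule actually achieves cascading isolation with the \emph{short} window $W$: one must simultaneously (i) tame the adversarial, a priori unbounded misalignment of the per-station schedules (handled by the interleaving over offset scales, at the cost the hypothesis absorbs); (ii) keep track of which stations have already been acknowledged and switched off, since the set of live interferers shrinks as the cascade proceeds; and (iii) show that, by the time the cascade reaches $v$, $v$'s own rate has dropped low enough that a strong-selectivity argument isolates it against the at most $k$ live stations — all inside a window a $\ln k$ factor shorter than the no-acknowledgement bound. Getting the geometric sums over scales and the induction on the residual contention to close with clean constants is the remaining technical work.
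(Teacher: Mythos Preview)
Your plan rests on two choices that diverge from the paper and that, as stated, do not yield a proof.

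First, you propose to ``reuse the schedule $\cS$ that underlies \unknown''. The paper does \emph{not} do this: for \unknownack\ it uses a strictly different random schedule (Definition~\ref{schedack}) with transmission probability $\sqrt{(\ln i)/i}$ in phase~$i$, rather than the $1/\sqrt{i}$ of Definition~\ref{sched}. That extra $\sqrt{\ln i}$ factor is precisely what buys the $\ln k$ improvement: it makes $s(r)\lesssim\sqrt{Tr\ln r}$ (Lemma~\ref{smack}) so that at the target horizon $i=ck^2/\ln k$ the total probability mass is still only $O(Tk^2)$, forcing half the rounds to have $\sigma(r)=O(\ln k)$ rather than merely $\sigma(r)<1$ as in Lemma~\ref{sub}. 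With the old $1/\sqrt{i}$ schedule this accounting collapses a $\ln k$ factor too early and you cannot reach the shorter window.

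Second, and more seriously, your entire argument is conditioned on the ``cascading isolation property with window $W=O(k^2\log N/\ln k)$'', which you explicitly leave unproven. This is not a loose end; it is the whole theorem. The paper's mechanism for exploiting acknowledgements is quite different and does not go through any cascade or any strongly-selective-family analogy. It is a counting trick (Lemma~\ref{suback}): among the $L/2$ rounds with $\sigma(r)\le\gamma\ln k$, one couples each round to a biased coin so that a ``heads'' in case~1 certifies a successful transmission by \emph{some} station. Since acknowledgements cap the total number of successes at $k$, at most $k$ heads can come from case~1; but the coin has bias $k^{-1/2}$ on $L/4$ trials, so with high probability there are far more than $k$ heads, which forces at least $L/4$ rounds to fall in case~2, i.e.\ to have $\sigma(r)\le 1$. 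From there Lemma~\ref{success} gives each such round an $\Omega(\sqrt{\ln k}/k)$ chance of isolating $v$ itself, and $L/4=\Theta(k^2\ln N/\ln k)$ such rounds suffice. The role of the hypothesis $\ln\ln N=O(\ln k)$ is not to control a number of ``offset scales'' as you suggest, but simply to keep $\ln(T\cdot i)=\ln(\ln N\cdot ck^2/\ln k)=O(\ln k)$ inside Lemma~\ref{sumack}. Your scale-interleaving picture and the ``oldest stations isolated first'' heuristic are not used, and it is not clear they can be made to work: a low-rate station is less likely to transmit, hence \emph{less} likely to be the unique transmitter, so the direction of the cascade you sketch is at best unclear.
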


\paragraph{\textbf{\gia{Our methodology.}}}
	As in the mentioned seminal work of Koml\'os and Greenberg \cite{KG} for the 
	simplified static model, our \dk{non-adaptive protocols have similar structure (\textit{i.e.}, iterating two loops, for carefully selected parameters) and depend on a schedule computed at the very beginning. Deterministic schedules that make the whole algorithms efficient are non-explicitly-constructive (\textit{i.e.}, no polynomial-time algorithm constructing them is known)} and 
	are obtained using 
	the probabilistic method~\cite{AS}: 
	in order to prove the existence of a transmission schedule 
	with some prescribed combinatorial properties
 \gia{that guarantee a solution to our problem, \textit{i.e.}} a 
 successful
	transmission for each of the contending stations
 \gia{within a certain time},
	we construct a suitable probability space, whose elements are transmission
	schedules, and show that a randomly chosen element in this space has the desired properties with positive probability. 
 \gia{Being the probability strictly larger than zero, 
 a protocol with such properties
 (i.e., a protocol solving the contention resolution problem) must exist.}
 \gia{Analogously, in order to prove our lower bound, we consider
 a probability space whose elements are wake-up patterns and show that
 a randomly chosen element in such a space exhibits the following
 property with positive probability: any non-adaptive deterministic
 algorithm must spend more than $ck^2/\log k$ time slots to get
 the first successful transmission, \dk{for some constant $c>0$.} This implies the existence of
 a wake-up pattern with that property, \textit{i.e.} an input
 instance for the contention resolution problem forcing any 
 non-adaptive deterministic algorithm 
 to have a $ck^2/\log k$ maximum latency.
 }
	
	A transmission schedule, {\em i.e.} 
	a binary sequence specifying for each active station the time slots at which 
	it has to transmit, is a very common notion that can be found in many
	fundamental problems such as wakeup \cite{BKwake, GPP}, 
	broadcast in radio networks \cite{CGR,CMS,GDM}, 
	group testing \cite{du2000combinatorial, PoratR11},
	and, in general, in every selection problem where one or more items, out of an arbitrary ensemble of objects, need to be somehow separated from the others.     
	
	Finding an explicit construction of such schedules 
	turned out to be a very difficult task, so that for many of the above 
	mentioned problems, despite a long history, there are still substantial 
	\gdm{complexity separations}
	between existential bounds and constructive solutions. 
	The interested reader is referred to the book by
	Du and Hwang \cite{du2000combinatorial} and to Indyk's work \cite{Ind}
	to find many examples of such a discrepancy between constructive and
	non-constructive solutions.
	We need only think that for the wakeup problem, which asks for only one station 
	to transmit successfully (as opposed to our problem requiring this task for every active station), the existential bound tells us that a schedule with
	$O(n\log^2 n)$ rounds suffices \cite{GPP}, 
	but the currently best constructive solution is $O(n^{3/2}\log n)$ \cite{BKwake}
	and came after a series of papers with algorithms of increasing efficiency.
	No better ``fully'' constructive 
	solution is known up to date, despite a lot of efforts on such a fundamental problem. 
	Indyk \cite{Ind} showed a schedule of $O(n^{1+\epsilon})$ rounds, for any $\epsilon > 0$,
	but it needs a polylogarithmic number of random bits to be performed
	and therefore cannot be deterministically computed in polynomial time,
	but only in quasi-polynomial time by enumerating all random bits.
	
	Given such difficulties, it is quite
	natural that in many situations it is worth doing a deep investigation 
	on what one could expect from deterministic solutions,
	before even attempting to find an explicit one. 
	Moreover, existential proofs are often theoretically interesting 
	in themselves as they
	can show complexity separations between different settings, and also because
	they could reveal some important clue on the explicit algorithm that could 
	inspire for possible constructions.  
    In this perspective, we believe that our results have a number of 
    important conceptual implications that can be outlined as follows.

First, they imply that the knowledge of the contention size has
an important impact on time complexity 
as it is shown by the upper bound for $k$ known (Theorem \ref{t:upper-known}) 
and the lower bound for $k$ unknown (Theorem \ref{lb:unknown_k}). 
\gdm{Indeed, when $N$ is polynomially bounded in $k$, they imply 
	a nearly quadratic separation on latency between 
	the cases $k$ known and $k$ unknown.
This in turn implies an exponential complexity separation
on channel utilization, 
defined as the ratio between the
contention size and the maximum latency. Indeed,
Theorem \ref{t:upper-known} implies that when $k$ is known there is an 
$\Omega(1/(\log k\log N))$ lower bound
on channel utilization, while for $k$ unknown Theorem \ref{lb:unknown_k}
implies an $O(\log k/ k)$ upper bound. 
}
This is a remarkable fact as it is known that for the static model with
synchronous start the knowledge of the contention size does not influence
asymptotically the latency (and the channel utilization), 
as follows from the matching upper bound by Koml\'os and Greenberg \cite{KG} and
the lower bound by Clementi, Monti and Silvestri \cite{CMS}.


The second implication concerns the impact of acknowledgments:
they quadratically improve the latency
if (some estimate of) $k$ is known, as it can be inferred from our
upper bound when $k$ is known and acknowledgments are available (Theorem \ref{t:upper-known}) and the 
$\Omega((k^2\log N)/\log k)$ lower bound obtained in the context of 
superimposed codes~\cite{Chau96}.
This translates into an exponential improvement on channel utilization,
\gdm{namely we have an $\Omega((k^2\log N)/\log k)$ lower bound with 
	acknowledgments and
an $O(\log k/ (k\log N) )$ upper bound without 
acknowledgments.}

This is also notable in view of the fact that for non-adaptive randomized algorithms 
it is known that acknowledgments are not particularly helpful, as
$\tilde{O}(k)$ time slots are sufficient in absence of acknowledgments 
(even for $k$ unknown)~\cite{DS-17}. 
\gdm{In terms of channel utilization, this implies that 
the improvement is only at most polynomial in this case.}

\paragraph{\textbf{Structure of the paper.}}
In Section~\ref{s:known-k}
we provide a non-adaptive deterministic solution (making use of acknowledgments) with latency 
$O(k\log k\log N)$. The lower bound $\Omega(k^2/\log k)$ on latency when the contention
is unknown is given in Section~\ref{s:lower}. 
The almost matching solutions with latency $O(k^2 \log N)$
and $O(k^2 \log N/\log k)$, 
without using acknowledgments and with acknowledgments respectively, are described
in Section~\ref{s:unknown-k}.
We conclude with Section~\ref{s:conclusions} by discussing various implications
 and open directions.

Throughout the paper, for the sake of simplicity, 
we will ignore rounding to integers as this does not affect
the asymptotic bounds in our proofs: \textit{e.g.} we will assume that $\log N$
and $\log k$ are integers. The same simplification will be used 
on fractions of stations like $k/3$ or time rounds like $k/\log k$ whenever this does not cause ambiguity.

\section{Algorithm for known contention size \gia{and acknowledgements}}
\label{s:known-k}

In this section we assume that the number $k$ of contenders 
(or some linear upper bound on it) is known. Moreover, it is supposed that the
stations can use acknowledgments to switch off after a successful 
transmission.  

In this setting, we prove that it is possible to design a deterministic distributed algorithm solving the contention with latency $O(k\log k\log N)$.
\gia{
The algorithm is non-adaptive: all decisions regarding when to transmit 
and when to stay silent are made before the station starts transmitting.
Namely, recalling the definition given in paragraph ``Algorithms''
on page \pageref{algorithms}, the transmission schedule
$\cR$, which specifies for each active station when to transmit and when to stay silent, is entirely created in the computation mode, 
\textit{i.e.} \dk{in the very beginning},
and can no longer be changed during the execution mode.
Recall that, according to the standard distributed computational model, 
at any time slot a station can execute an arbitrary amount
of computation, so the transmission schedule can be
computed in the same slot when the station is activated.
}
The transmission schedule $\cR$ is composed of $N$ arrays $\cR_v$, one for 
each station $v$. Each transmission array $\cR_v$ is a sequence of bits
corresponding to the time slots of $v$'s local clock: the station transmits in the $r$th slot of its local time if the $r$th bit of $\cR_v$ is 1 and stays silent otherwise.

We proceed by showing first that a specific random instantiation of 0-1
arrays $\cR_v$, for each $v$, allows our algorithm to solve 
the contention with latency $O(k\log k\log N)$ 
with very high probability;
then, we derandomize the array to obtain the desired deterministic solution.
All of our arguments are meant to hold for $k$ and $N$ sufficiently large.

Our contention resolution algorithm when the contention size is known 
is formally described by the following protocol
executed by any station $v$ starting from the time at which it is activated. 
Apart from parameters $N$ and $k$, it takes as input a constant parameter $c$, whose
value will be specified later in the analysis, 
and the transmission schedule $\cR$ that will be specified shortly. 
Each array $\cR_v$ is synchronized with $v$'s local clock in the
sense that $v$ starts reading the array at the activation time and 
at the $r$th round of its clock, $v$ reads the $r$th bit of $\cR_v$
and transmits if and only if this bit is 1.
 
\gdm{The protocol's execution mode is organized
in $2\log k+1$ consecutive phases, 
each one lasting $T = ck\log N$ time rounds.
For $j = 1,2,\ldots,T$,
the $j$th round of phase $i$, for $i = 0,1,\ldots, 2\log k$, 
will correspond to round}
number $i\cdot T+j$ of $v$'s local clock. The entry of $\cR_v$ 
corresponding to this round will be denoted $\cR_{v,i,j}$. 
Hence, station $v$ will transmit in the $j$th round of phase $i$ if and only if $\cR_{v,i,j} = 1$. 
{
Our protocol is called \known (\textit{Slow Frequency Increase}) as 
its main feature  is a
slow increase of the frequency of transmissions, which remains constant 
during each phase and is less than doubled from one phase to the next one (see Definition~\ref{def:sched}).}

\begin{algorithm}[ht]\label{prt:known}
	\caption{\known$(N,k;c)$: executed by station $v$}
	\label{alg:suniform}
	\begin{algorithmic}[1]

        \State {\gia{Compute $\cR_v$}}
 
		\For{$i\gets 0,1,2,3, \ldots,  2\log k$}
		\For{$j\gets 1,2,\ldots,T=ck\log N$} 
		\State at round $i\cdot T+j$ of its local clock $v$ transmits if and only if $\cR_{v,i,j}=1$
		\EndFor
		\EndFor
	\end{algorithmic}
\end{algorithm}

{
The protocol execution for any station $v$ starts when $v$ 
is activated and lasts
until it successfully transmits (getting an acknowledgment),
in which case it immediately switches off.
Essentially, we need only to show that any activated station 
will transmit successfully before the termination of its protocol 
execution -- 
then the algorithm's latency follows 
immediately from the maximum duration of the protocol
execution, which is $T (2\log k+1) = O(k\log k\log N)$ 
rounds.
}

\gst{Let us call \textit{meaningful rounds} the $(2\log k + 1)T$ rounds that follow the activation time of a station.
An interval of consecutive rounds, each of them being meaningful, is called a 
\textit{block}.}
Since at most $k$ stations can be activated during the execution of the algorithm, 
the following fact holds.

\begin{fact}\label{fact:kt}
	Any execution of the algorithm can be partitioned into at most $k$ 
	disjoint blocks, each of them having length at most 
	$k\cdot T (2\log k+1)$.
\end{fact}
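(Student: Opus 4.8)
The plan is to recognize Fact~\ref{fact:kt} as an elementary statement about unions of integer intervals. Fix an arbitrary execution, i.e. a wake-up pattern, and index the synchronous rounds by a single integer along the common physical timeline; this is legitimate because, although there is no global clock, asynchrony only \emph{shifts} where each station begins reading its array $\cR_v$, so the rounds themselves are shared. At most $k$ stations are ever activated; let $m \le k$ be their number and $a_1 \le \cdots \le a_m$ their activation times on this timeline. By definition, the station activated at round $a_i$ owns exactly the meaningful window $I_i = \{a_i, a_i+1, \dots, a_i + (2\log k+1)T - 1\}$, an interval of $(2\log k+1)T$ consecutive rounds, and a round is meaningful precisely when it lies in $\bigcup_{i=1}^{m} I_i$. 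The blocks are, by definition, the maximal runs of consecutive rounds inside this union --- i.e., its connected components --- so they are automatically pairwise disjoint and together they cover every meaningful round.

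Two things then remain, both routine. \emph{Counting the blocks:} scanning $I_1, \dots, I_m$ in nondecreasing order of left endpoint, each interval either merges into the component currently under construction or starts a fresh one; the latter happens at most $m \le k$ times, so there are at most $k$ blocks. \emph{Bounding a block's length:} a fixed block $B$ is the union of the sub-collection of the $I_i$'s that intersect it, say $p$ of them; since $B$ is a single run of consecutive rounds and each $I_i$ contributes at most $(2\log k+1)T$ of them, one gets $|B| \le p\,(2\log k+1)T$ (formally: ordering those intervals by left endpoint, connectedness of $B$ forces consecutive left endpoints to differ by at most $(2\log k+1)T$, whence the bound). Distinct blocks are fed by disjoint sub-collections, hence $p \le m \le k$, and therefore $|B| \le k\,T(2\log k+1)$, which is the claimed bound. (The same bookkeeping even yields that the blocks have total length at most $k\,T(2\log k+1)$, but only the per-block estimate is needed in the sequel.)

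I do not foresee a genuine obstacle: the statement is a direct consequence of ``at most $k$ windows, each of length $(2\log k+1)T$''. The only point deserving an explicit line of justification is the reduction to a single shared index in the locally-synchronous model --- namely, observing that each station's meaningful window, though defined against its own local clock, is still a genuine interval of the common physical round sequence --- after which the interval arithmetic above applies verbatim.
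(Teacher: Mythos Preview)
Your proposal is correct and matches the paper's approach; in fact, the paper treats this fact as immediate from the single observation ``since at most $k$ stations can be activated during the execution of the algorithm,'' so your write-up supplies the interval-union bookkeeping that the paper leaves implicit.
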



Let us number $1,2,3,\ldots$ the subsequent rounds of any block.
We can partition these rounds into \textit{segments}, 
each  of them consisting of $T$ subsequent rounds.
Thus, segment 1 has rounds $1,2,\ldots,T$, segment 2 has rounds $T+1,T+2,\ldots,2T$, and so on. Altogether there are at most 
$k(2\log k+1)$ segments in a block.

\gdm{
Recall that in our model there is no global clock, so the  
round number of a block is a global round number that will not be visible 
to the stations, but will only be used for the purpose of the analysis. 
In fact, it is worth noting that a segment is a sequence of $T$ consecutive 
rounds of such an invisible global measuring, while the previously defined 
phase is a sequence of $T$ consecutive rounds of a single station's local clock. 
To avoid confusion, we will use variable $t$ for the rounds of a block, while 
indices $i,j$ will be used to refer to the round number of some station's local
clock (specifically the $j$th round of the $i$th phase of the protocol execution
of this station). Observe that, as a consequence of the asynchrony among 
activation times, at any round $t$ of a block, different stations can be involved in different rounds $j$ and phases $i$ of their respective protocol executions. 
}

We can observe that any station is activated within a block and remains active 
only in this block. Consequently, the set of activated stations can be partitioned 
in as many subsets as the number of blocks.


\subsection{Analysis of algorithm \known\ for a random~$\cR$}
\label{s:known-random}

As announced earlier, we first analyze our algorithm for a
randomly instantiated schedule $\cR$.
Namely, we consider a random schedule defined as follows.

\begin{definition}(Random schedule $\cR$)\label{def:sched}
	Let $\cR$ be formed by the following sequence of independent Bernoulli trials:
	for any station $v$ and indices $i,j$, let $R_{v,i,j}=1$ with probability 
	$p = \min\{\frac{1}{2},\frac{{2}^{i/2}}{2k}\}$, and $R_{v,i,j} = 0$ with
	probability $1-p$.
\end{definition}

For any fixed block, the 
\textit{sum of transmission probabilities} at 
a given round $t$ (of the block) is defined as 
\[
\sigma(t)=\sum_{v\in V(t)} p_v(t),
\]
where $V(t)$ is the set of active stations at round $t$ and $p_v(t)$ is the probability of transmission of station $v$ in round $t$.


We start with the following simple lemma that gives us a lower bound on the 
probability that a given station transmits successfully when the sum of 
transmission probabilities is small.

\begin{lemma}\label{lemma:cici}
	If for a round $t$ of a block $\sigma(t) \le 2$, then the probability that a given station $v$ transmits successfully in round $t$ is larger than $p_{v}(t)/16$.
\end{lemma}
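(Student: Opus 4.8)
**Proof proposal for Lemma 1.1 (the statement that if $\sigma(t)\le 2$ then $v$ transmits successfully in round $t$ with probability $> p_v(t)/16$).**

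The plan is to condition on station $v$ transmitting in round $t$ (an event of probability $p_v(t)$) and then lower-bound the probability that no other active station transmits in the same round. Since the entries $R_{u,i,j}$ are independent across stations by Definition~\ref{def:sched}, the event ``$v$ transmits'' is independent of the joint behaviour of the other stations in $V(t)$, so the probability that $v$ succeeds equals $p_v(t)\cdot\prod_{u\in V(t)\setminus\{v\}}(1-p_u(t))$. The work is therefore entirely in showing $\prod_{u\in V(t)\setminus\{v\}}(1-p_u(t)) > 1/16$ under the hypothesis $\sigma(t)=\sum_{u\in V(t)}p_u(t)\le 2$.

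For that I would use the standard inequality $1-x \ge e^{-x/(1-x)}$, or more simply $1-x \ge 4^{-x}$ valid for $x\in[0,1/2]$; note every $p_u(t)\le 1/2$ by the $\min$ in Definition~\ref{def:sched}, so this applies. Then
\[
\prod_{u\in V(t)\setminus\{v\}}(1-p_u(t)) \ \ge\ 4^{-\sum_{u\in V(t)\setminus\{v\}} p_u(t)} \ \ge\ 4^{-\sigma(t)} \ \ge\ 4^{-2} \ =\ \frac{1}{16}.
\]
Multiplying by $p_v(t)$ gives that $v$ succeeds with probability at least $p_v(t)/16$; a strict inequality follows because at least one of the intermediate bounds is strict (e.g. the set $V(t)\setminus\{v\}$ omits $v$'s own probability mass, so $\sum_{u\ne v}p_u(t) < \sigma(t) \le 2$ whenever $p_v(t)>0$, and if $p_v(t)=0$ the statement is vacuous as the claimed bound is $0$). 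Alternatively one can argue the strictness from the fact that $1-x > 4^{-x}$ is strict on $(0,1/2)$.

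The only mildly delicate point — hardly an obstacle — is justifying the elementary inequality $1-x\ge 4^{-x}$ on $[0,1/2]$ and making sure every $p_u(t)$ really lies in $[0,1/2]$ so that it can be invoked termwise; both follow immediately from Definition~\ref{def:sched}. Everything else is a one-line independence argument plus this convexity estimate, so I expect the proof to be short.
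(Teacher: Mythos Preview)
Your argument is correct and essentially identical to the paper's: both factor the success probability as $p_v(t)\prod_{w\ne v}(1-p_w(t))$, use that each $p_w(t)\le 1/2$ from Definition~\ref{def:sched}, apply the elementary bound $1-x\ge 4^{-x}$ on $[0,1/2]$ (the paper writes this equivalently as $(1-x)^{1/x}\ge 1/4$), and then invoke $\sigma(t)\le 2$. Your handling of the strict inequality via the omitted term $p_v(t)$ is exactly the mechanism behind the paper's strict ``$>$'' as well.
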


\begin{proof}
	The random schedule $\cR$ in the input of the protocol is such that 
	$p_{v}(t) = \min\left\{\frac{1}{2},\frac{{2}^{i/2}}{2k}\right\} \le 1/2$  
   (see Definition~\ref{def:sched}),
	for any round $t$ and station $v$. 
	Therefore, the probability that a given station $v$ transmits successfully in round $t$ is:
	\begin{eqnarray*}
		p_{v}(t) \cdot \prod_{w\neq v} (1-p_{w}(t)) &=&    p_{v}(t) \cdot \prod_{w\neq v} (1-p_{w}(t))^{\frac{1}{p_w(t)}p_{w}(t)}\\ 
		&>&    p_{v}(t)\cdot (1/4)^{\sigma(t)} 
		\ge p_{v}(t)/16 \ ,
	\end{eqnarray*}
	\gia{where in the second-to-last inequality we have used 
 $p_w(t) \le 1/2$,} while the
 last inequality follows from the assumption that $\sigma(t) \le 2$.
\end{proof}


%

Let ${\cal A}_{\ell}$ be the event that in some round $t$ of segment 
$\ell$ of a block the sum $\sigma(t) < 1/2$. 
The following lemma establishes a relationship between the sum of
transmission probabilities
for two subsequent segments of a block. 

\begin{lemma}\label{fact:aj}
	If for some segment $\ell \ge 1$ of a block event ${\cal A}_{\ell}$ occurs, 
	then in all rounds 
	$t$ of the next segment $\ell+1$, we have $\sigma(t)\leq 2$.
\end{lemma}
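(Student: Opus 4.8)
The plan is to analyze what happens to the sum of transmission probabilities $\sigma(t)$ across one segment boundary, exploiting two structural facts: (i) the transmission probability of any fixed station is constant within a phase and grows by a factor of exactly $\sqrt{2}$ from one phase to the next (until it saturates at $1/2$), as dictated by Definition~\ref{def:sched}; and (ii) a segment has the same length $T$ as a phase, so as a block's global clock advances by one segment, each active station advances by exactly one phase in its own local clock (or has already terminated). First I would fix the segment $\ell$ for which ${\cal A}_\ell$ holds, so there is a round $t_0$ in segment $\ell$ with $\sigma(t_0) < 1/2$, and then take an arbitrary round $t$ in segment $\ell+1$; the goal is to bound $\sigma(t)$.

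**The main estimate.**

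The key step is to compare $\sigma(t)$ with $\sigma(t_0)$ station by station. For a station $v$ active at round $t$, it was either active at round $t_0$ (one segment earlier) or was activated somewhere in segment $\ell$ between $t_0$ and $t$; in the latter case its contribution to $\sigma(t)$ is at most $1/2$, and at most two such "new" stations can matter because — wait, more carefully: there is no bound of two here, so instead I would argue as follows. For stations active already at $t_0$: since exactly $T$ rounds elapse between $t_0$ and $t$, and $T$ is the phase length, each such station has advanced by at most one phase, so its transmission probability at $t$ is at most $\sqrt{2}$ times its probability at $t_0$ (the $\min$ with $1/2$ can only help). For stations activated during segment $\ell$ after $t_0$ but counted at $t$: each is in phase $0$ or phase $1$ at round $t$, hence has probability at most $\frac{2^{1/2}}{2k} = \frac{\sqrt 2}{2k}$, and there are at most $k$ stations in total, so their combined contribution is at most $\frac{\sqrt 2}{2} < 1$. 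Actually I want a cleaner split: let $\sigma_{\text{old}}(t)$ and $\sigma_{\text{new}}(t)$ denote the contributions at $t$ from stations active-at-$t_0$ and from stations activated after $t_0$, respectively. Then
\[
\sigma(t) = \sigma_{\text{old}}(t) + \sigma_{\text{new}}(t) \le \sqrt{2}\,\sigma(t_0) + \sigma_{\text{new}}(t) < \frac{\sqrt 2}{2} + \sigma_{\text{new}}(t) \le \frac{\sqrt 2}{2} + \frac{\sqrt 2}{2} = \sqrt 2 < 2 ,
\]
where the bound $\sigma_{\text{new}}(t) \le \sqrt 2 / 2$ uses that every newly-activated station contributing at $t$ is in phase $0$ or $1$ (probability $\le \sqrt 2/(2k)$) and there are at most $k$ of them.

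**The obstacle and how to handle it.**

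The delicate point I expect to wrestle with is the phase-advancement bound "$p_v(t) \le \sqrt{2}\, p_v(t_0)$": it relies on the fact that moving forward by exactly $T$ rounds on a station's local clock moves it forward by exactly one phase, and on the fact that consecutive phase probabilities differ by the factor $2^{1/2}$. I would make this precise by writing $p_v(t_0) = \min\{1/2, 2^{i/2}/(2k)\}$ where phase $i$ is $v$'s phase at $t_0$, noting that at $t$ station $v$ is in phase $i+1$ (it cannot be later since only $T$ rounds passed, and it is still active so the protocol has not ended), so $p_v(t) = \min\{1/2, 2^{(i+1)/2}/(2k)\} \le \sqrt 2 \cdot \min\{1/2, 2^{i/2}/(2k)\} = \sqrt 2\, p_v(t_0)$, the inequality holding because $\min\{1/2, \sqrt 2 x\} \le \sqrt 2 \min\{1/2, x\}$ for all $x \ge 0$. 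A secondary care point is that ${\cal A}_\ell$ only guarantees \emph{some} round of segment $\ell$ has small $\sigma$, not all of them, so I must be sure my comparison uses that one good round $t_0$ and that $t_0$ precedes every round of segment $\ell+1$, which holds by definition of segments. Once these two points are nailed down, summing the per-station bounds and the at-most-$k$ count of new stations gives $\sigma(t) \le 2$ as required, for every round $t$ of segment $\ell+1$.
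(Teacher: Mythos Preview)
Your decomposition into ``old'' stations (already active at $t_0$) and ``new'' stations (activated after $t_0$) is exactly the paper's approach. However, there is a quantitative slip in your bound on the old stations. You assert that ``exactly $T$ rounds elapse between $t_0$ and $t$'' and hence each old station advances by at most one phase. This is not true: $t_0$ is \emph{some} round of segment $\ell$ and $t$ is an \emph{arbitrary} round of segment $\ell+1$, so $t - t_0$ ranges over $\{1,\ldots,2T-1\}$. In particular, an old station can cross two phase boundaries (e.g.\ from the last round of phase $i$ to a round of phase $i+2$), so the correct per-station growth factor is $2$, not $\sqrt{2}$. The paper uses precisely this: probabilities at most double over $2T$ rounds, so $\sigma_{\text{old}}(t) \le 2\,\sigma(t_0) < 1$.

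Your bound on the new stations is fine (at most $k$ of them, each in phase $0$ or $1$, hence probability at most $1/(\sqrt{2}k)$, total $<1$), and after correcting the old-station factor to $2$ the argument goes through: $\sigma(t) < 1 + 1 = 2$. So the strategy is right and matches the paper; only the ``exactly $T$ rounds / at most one phase'' claim needs to be replaced by ``at most $2T-1$ rounds / at most two phases''.
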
 
\begin{proof}
	Since the transmission probability of each station doubles every $2T$ rounds,
	and each segment lasts $T$ rounds, we have that from segment $\ell$ to segment 
	$\ell+1$ 
	the stations can at most double their probabilities. By hypothesis
	we know that $\sigma(t) < 1/2$ from some round $t$ in segment $\ell$, therefore
	the sum of transmission probabilities of stations activated not later than 
	$t$, cannot be larger than $2\cdot(1/2) = 1$.	
	
	Moreover, possibly new activated stations after round $t$
	can cause an additional increase of less than 1 unit to the sum, 
	since new stations are no more than $k$ and start sending with probability 
	$1/(2k)$ 
	\gdm{in the first $T$ rounds and $1/(\sqrt{2}k)$ in the following T rounds}.
\end{proof}

The main purpose of this subsection is to prove that our algorithm fails, 
on a random schedule $\cR$, with probability exponentially small in $k$,
namely smaller than $(Nk^2T)^{-k}$. 
First we focus on the probability of failing in a fixed block 
(Lemma~\ref{l:known-random}) 
and then, in Theorem \ref{t:known-random}, by taking the union bound over
all possible blocks, we will derive the final probability bound.

Let $h \le k\cdot (2\log k+1)$ be the number of segments of a block. 
We start with the following result, preparatory to Lemma~\ref{l:known-random}, 
which shows that all events ${\cal A}_1, {\cal A}_2, \ldots, {\cal A}_h$ hold simultaneously with high probability.
This intersection ${\cal A}_1\cap {\cal A}_2\cap \cdots \cap {\cal A}_h$ of events does not occur 
if and only if event ${\cal A}_{\ell}$ 
does not occur for some $1\le \ell \le h$.
The probability that $\ell$ is the smallest index for which ${\cal A}_{\ell}$ 
does not occur is
not larger than $\Pr(\overline {{\cal A}_{\ell}}|{\cal A}_1 \cap {\cal A}_2 \cap \cdots \cap {\cal A}_{\ell-1})$.

\dk{To simplify the presentation in this section, let us define a temporary variable} $P = k^{-3}(Nk^2T)^{-k}$.
We recall that all the following results are meant to hold for $k \leq N$ with $N$ and $k$ being sufficiently large.

\begin{lemma}\label{lemma:neg}
	For a sufficiently large constant $c$ we have that 
	for any segment 
	$\ell \ge 1$ of a block, 
	\[
	\Pr\left(\overline {{\cal A}_{\ell}} \middle|  \bigcap_{i<\ell}  {\cal A}_i\right)\le P.
	\]

\end{lemma}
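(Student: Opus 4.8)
The plan is to expose the random schedule $\cR$ round by round along the (global) round numbering of the block: at round $t$, reveal for every station that is awake and has not yet succeeded the entry of its array governing round $t$, and let $(\mathcal F_t)_t$ be the resulting filtration. Two observations make this useful. First, the set $V(t)$ of stations active in round $t$, hence the quantity $\sigma(t)=\sum_{v\in V(t)}p_v(t)$, is determined by $\mathcal F_{t-1}$ (it depends only on the fixed wake-up pattern and on who has already succeeded), while the occurrence of a successful transmission in round $t$ is revealed only at step $t$. Second, each event ${\cal A}_i$ concerns rounds of segment $i$ only, so the conditioning event $B=\bigcap_{i<\ell}{\cal A}_i$ is $\mathcal F_{s_\ell-1}$-measurable, where $s_\ell$ denotes the first round of segment $\ell$. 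It therefore suffices to prove $\Pr(\overline{{\cal A}_\ell}\mid\mathcal F_{s_\ell-1})\le P$ pointwise on $B$; this sidesteps having to lower-bound the possibly tiny probability $\Pr(B)$.

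So I would fix any history lying in $B$. If $\ell\ge2$, then ${\cal A}_{\ell-1}$ holds, and Lemma~\ref{fact:aj} gives $\sigma(t)\le2$ for every round $t$ of segment $\ell$; crucially this conclusion is forced by the past alone and does not constrain the fresh randomness of segment~$\ell$. If $\ell=1$ the same bound is immediate: in segment~$1$ every active station is still in phase~$0$ and hence transmits with probability $1/(2k)$, so $\sigma(t)=|V(t)|/(2k)\le1/2$. On the other hand, the complementary event $\overline{{\cal A}_\ell}$ asserts $\sigma(t)\ge1/2$ in every round of segment~$\ell$. Combining, on $B\cap\overline{{\cal A}_\ell}$ we have $\sigma(t)\in[1/2,2]$ throughout segment~$\ell$, so by the argument of Lemma~\ref{lemma:cici} applied conditionally on $\mathcal F_{t-1}$ (given which the round-$t$ transmissions are independent Bernoulli trials with parameters $p_v(t)\le1/2$), the conditional probability that some station transmits successfully in round $t$ is at least $\sigma(t)/16\ge1/32$. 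But $\overline{{\cal A}_\ell}$ also caps the number of successful transmissions occurring in segment~$\ell$ at $k$, since at most $k$ stations are ever activated and each succeeds at most once.

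These two facts are contradictory with overwhelming probability, which I would make precise by a routine supermartingale / stochastic-domination step: with $G_t\in\mathcal F_{t-1}$ the (decreasing in $t$) event that $\sigma(s)\ge1/2$ for all rounds $s$ of segment~$\ell$ up to $t$, one has $\overline{{\cal A}_\ell}=G_{s_\ell+T-1}$, and on $B\cap G_t$ the round-$t$ success indicator has conditional mean at least $1/32$; overwriting this indicator by an independent $\mathrm{Bernoulli}(1/32)$ off $B\cap G_t$ produces a process whose partial sums dominate $\mathrm{Bin}(T,1/32)$ and, on $\overline{{\cal A}_\ell}$, coincide with the true success count, which is $\le k$. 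Hence $\Pr(\overline{{\cal A}_\ell}\mid\mathcal F_{s_\ell-1})\le\Pr\big(\mathrm{Bin}(T,1/32)\le k\big)$ on $B$. With $T=ck\log N$, for $c$ large enough $\E[\mathrm{Bin}(T,1/32)]=T/32\ge2k$, so a Chernoff bound yields $\Pr(\mathrm{Bin}(T,1/32)\le k)\le e^{-\Omega(T)}=N^{-\Omega(ck)}$; since $k\le N$ and $T$ is polynomial in $N$ we have $P=k^{-3}(Nk^2T)^{-k}\ge N^{-O(k)}$, so choosing $c$ sufficiently large makes $N^{-\Omega(ck)}\le P$. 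Averaging the pointwise estimate over $B$ completes the proof.

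The step I expect to be the real obstacle is not any single computation but the bookkeeping around dependence: the per-round ``there is a success'' events are far from independent and sit inside a messy conditioning, so one has to be disciplined about (i) exposing the randomness in the right order and isolating the $\mathcal F_{s_\ell-1}$-measurable part $B$ so that no division by $\Pr(B)$ is needed, and (ii) packaging the ``at most $k$ successes versus expected $\Omega(T)$ successes'' tension as a stochastic-domination statement rather than an illegitimate appeal to independence. Everything else --- Lemmas~\ref{lemma:cici} and~\ref{fact:aj}, the phase-$0$ remark for $\ell=1$, the Chernoff estimate, and the final comparison with $P$ --- is routine.
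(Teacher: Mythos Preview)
Your argument is correct and follows the same skeleton as the paper's proof: use the conditioning (via Lemma~\ref{fact:aj}, or the phase-$0$ observation when $\ell=1$) to force $\sigma(t)\le 2$ throughout segment~$\ell$; note that $\overline{{\cal A}_\ell}$ additionally forces $\sigma(t)\ge 1/2$, yielding a constant per-round success probability; cap the total number of successes by $k$; and finish with a Chernoff bound on $\mathrm{Bin}(T,\text{const})$ versus $k$.

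The only real difference is in how the concentration step is packaged. The paper introduces auxiliary coins to \emph{thin} each round's success event down to probability exactly $1/8$ (and inserts independent $\mathrm{Bernoulli}(1/8)$ ``artificial'' bits whenever $\sigma(t)<1/2$), thereby manufacturing a genuinely i.i.d.\ sequence and reducing to $\Pr(\mathrm{Bin}(T,1/8)\le k)$. You instead keep the raw success indicators, observe the conditional lower bound $1/32$ on $B\cap G_t$, and invoke stochastic domination by $\mathrm{Bin}(T,1/32)$. These are two standard, interchangeable ways to handle adapted Bernoulli sequences with a uniform conditional lower bound; the paper's coupling buys a slightly better constant, while your filtration framing is more explicit about why conditioning on $B$ costs nothing (it is $\mathcal F_{s_\ell-1}$-measurable) and about why the dependence among the per-round success events is harmless. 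Your identification of the measurability bookkeeping as the delicate point is apt --- the paper's claim that the $\rho_t$ are independent is exactly the same statement, argued more tersely.
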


\begin{proof}
	We have to prove that if the intersection 
	${\cal A}_1 \cap {\cal A}_2 \cap \ldots \cap {\cal A}_{\ell-1}$ occurs, then 
	the event that $\sigma(t) \ge 1/2$ 
	in every round $t$
	of the $\ell$-th segment, occurs 
	with probability at most~$P$.
	\gst{It will be easy to see that for the first segment
	this probability is obtained unconditionally.}
	
	Fix any $\ell\ge 1$. 
	The $\ell$th segment of any block goes from round $(\ell-1)T+1$ to round $\ell T$
	of the block. For any execution of our algorithm,
	we now define a random binary sequence 
	$\rho = \rho_{(\ell-1)T+1},\rho_{(\ell-1)T+2},\ldots, \rho_{\ell T}$, 
	associated with segment $\ell$. Bit $\rho_t$, for $(\ell-1)T+1 \le t \le \ell T$, 
	 corresponds to round $t$ of the block. 
	Since we are assuming 
	that ${\cal A}_1 \cap {\cal A}_2 \cap \ldots \cap {\cal A}_{\ell-1}$ occurs, it follows by Lemma~\ref{fact:aj} that $\sigma(t) \leq 2$ for all $T$ rounds of segment $\ell$. 
	Consequently, for each round $t$ inside the $\ell$th segment, 
	\gdm{two cases can arise:
	either $1/2\leq\sigma(t)\leq 2$ or $\sigma(t)< 1/2$.
	The following procedure 
	sets the value of $\rho_t$ according to different random choices
	depending on which of the two cases above arises.
	The aim will be to show that the probability that the procedure
	never assigns a $\rho_t$ according to the random choice made when 
    case $\sigma(t)< 1/2$ arises, that is when event ${\cal A}_{\ell}$
    occurs, is at most $P$.
    } 
	Let $q_t$ be the probability of having a successful transmission in round $t$.
	All the following random choices are made independently.
	
	\begin{enumerate}
		\item If $1/2\leq\sigma(t)\leq 2$:
		\begin{enumerate}
			\item if there is no successful transmission in round $t$, we set $\rho_t=0$;
			
			\item if there is a successful transmission in round $t$, we set 
			$$
			        \rho_t = \begin{cases}
				             1,  \text{with probability}\ \frac{1}{8q_t}; \\
				             0,  \text{with probability}\ 1-\frac{1}{8q_t}.
			        \end{cases}
			$$
		\end{enumerate}
		\item If $\sigma(t)< 1/2$, then we set 
	    $$
		    \rho_t = \begin{cases}
		             1, & \text{with probability}\ \frac{1}{8}; \\
		             0, & \text{with probability}\ 1-\frac{1}{8}.
		\end{cases}
		$$
	\end{enumerate}
    Note that $q_t$ is defined by the transmission probabilities of the protocol:
    \[
    q_t = \sum_{v} p_{v}(t) \cdot \prod_{w\neq v} (1-p_{w}(t)). 
    \]
    Therefore, the probability of having a 1 in the sequence is well defined. In case 2, this is $1/8$.
    In case 1, $\rho_t = 1$ with probability $q_t(1/(8q_t)) = 1/8$. 
	Hence, for $(\ell-1)T \le t \le \ell T$, 
	each bit $\rho_t = 1$ with
	probability $1/8$, independently of the values of other bits in the sequence $\rho$.

	Let us call \textit{artificial} all the bits $\rho_t$ produced 
	in case 2 of the random assignment above, 
	\textit{i.e.}, when $\sigma(t) < 1/2$. 
	In order to prove the lemma, it suffices to show that the probability that there is no artificial bit in $\rho$ (which is equivalent to event $\overline {{\cal A}_{\ell}}$)  is \gdm{at most} $P$.

	Let $X$ be the random variable defined by the number of 1's in 
	$\rho$. 
	Each time a bit is set to 1 in case 1, 
	there is some station that successfully transmits.
	The number of such successful transmissions cannot
	exceed $k$, as we have at most $k$ active stations and each of them can 
	transmit successfully only once (after which it switches off).
	Therefore, 
	if $X > k$ there must be some artificial bit in $\rho$, which implies that ${\cal A}_{\ell}$ occurs. 
 \gia{Hence, we can \dk{upper} bound the probability of $\overline {{\cal A}_{\ell}}$
 \dk{by}
 the probability that $X\le k$}.
 Therefore, to conclude the proof, it remains to show that $\Pr(X\leq k)\leq P$.
	
	The expected number of 1's in $\rho$ is $\mu = \E X = T/8$.	
	By the Chernoff bound, $\Pr(X\leq (1-\delta)\mu)\leq\exp\left(-\frac{\delta^2\mu}{2}\right)$, for any $0<\delta<1$.
	Since $T = ck\log N$, for a sufficiently large $c$ we can write $k \le T/16$.
	Therefore, 
	\begin{eqnarray*}
		\Pr(X\leq k) &\leq & \Pr(X\leq T/16) \\
		&=    & \Pr(X\leq (1-1/2)\cdot T/8)\\
		&\leq & e^{-\mu/8} = e^{-T/64}=e^{-(c/64)k\log N}
		\ .   
	\end{eqnarray*}
	Assuming that $N$ and $k$ are sufficiently large and $k\leq N$, 
	the lemma follows by
	observing that for a sufficiently large constant $c$,  
	\[
	e^{-(c/64)k\log N} \le k^{-3}(Nk^2T)^{-k} = P.
	\]
\end{proof}

Now we can estimate the probability of having a failure inside a block.

\begin{lemma}
	\label{l:known-random}
	For any fixed block, the probability that there is a station that is active 
	in this block but fails to transmit successfully, is smaller than $k^2 P$.
\end{lemma}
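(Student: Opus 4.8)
The plan is to combine Lemma~\ref{lemma:neg} with Lemma~\ref{lemma:cici} by a union bound over the (at most $h \le k(2\log k+1)$) segments of the block and over the (at most $k$) stations active in the block. First I would bound the probability that the ``good'' event $\mathcal{A}_1 \cap \mathcal{A}_2 \cap \cdots \cap \mathcal{A}_h$ fails: as noted in the text preceding Lemma~\ref{lemma:neg}, if $\ell$ is the smallest index for which $\mathcal{A}_\ell$ does not hold, then the conditional probability of this is at most $\Pr(\overline{\mathcal{A}_\ell} \mid \bigcap_{i<\ell}\mathcal{A}_i) \le P$ by Lemma~\ref{lemma:neg}; summing over the at most $k(2\log k+1)$ choices of $\ell$ gives
\[
\Pr\!\left(\overline{\mathcal{A}_1 \cap \cdots \cap \mathcal{A}_h}\right) \le k(2\log k+1)\, P.
\]

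Next I would argue that, conditioned on $\mathcal{A}_1 \cap \cdots \cap \mathcal{A}_h$, every fixed station $v$ active in the block transmits successfully with overwhelming probability. Fix such a $v$; it is active for its whole protocol execution of $(2\log k+1)T$ meaningful rounds, which span at least $2\log k$ full segments of the block (one may lose at most a partial segment at each end, but there are $\Theta(\log k)$ full segments entirely inside $v$'s activity window — this is the one bookkeeping point to state carefully). For each such segment $\ell$, event $\mathcal{A}_\ell$ guarantees $\sigma(t) < 1/2 \le 2$ at some round $t$ of segment $\ell$, and by Lemma~\ref{fact:aj}, $\sigma(t) \le 2$ at \emph{every} round $t$ of segment $\ell+1$. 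So in each full segment lying inside $v$'s window we have $T = ck\log N$ consecutive rounds in which $\sigma(t)\le 2$ and hence, by Lemma~\ref{lemma:cici}, $v$ succeeds in round $t$ with probability at least $p_v(t)/16$. Since $v$'s transmission probability in phase $i$ is $p_v = \min\{1/2, 2^{i/2}/(2k)\} \ge 1/(2k)$ (it is at least $1/(2k)$ in the very first phase and non-decreasing), each such round gives success probability at least $1/(32k)$. Over $\Theta(\log k)$ segments there are $\Omega(k\log N \cdot \log k)$ such rounds, and — using independence of the Bernoulli trials $\cR_{v,i,j}$ across rounds within the conditioning on the \emph{other} stations' behaviour, handled as in the proof of Lemma~\ref{lemma:neg} via the auxiliary independent randomization — the probability $v$ never succeeds is at most $(1 - 1/(32k))^{\Omega(k\log N\log k)} \le N^{-\Omega(\log k)}$, which for $c$ large enough is at most $P = k^{-3}(Nk^2T)^{-k}$.

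Finally I would take the union bound over the at most $k$ stations active in the block: the probability that some active station fails, \emph{and} $\mathcal{A}_1\cap\cdots\cap\mathcal{A}_h$ holds, is at most $k\cdot P$. Adding the probability $k(2\log k+1)P$ that the good event fails, the total failure probability inside the block is at most $\big(k + k(2\log k+1)\big)P \le k^2 P$ for $k$ sufficiently large, which is the claimed bound. The main obstacle is the second step: one must be careful to (i) correctly count how many full segments lie strictly inside a given station's $(2\log k+1)T$-round activity window regardless of how the window is offset against the block's segmentation, and (ii) justify the independence used in the concentration step, namely that conditioning on the event $\bigcap_i \mathcal{A}_i$ (a statement about the joint process) still leaves enough independence in $v$'s own coin flips across its $\Omega(k\log N\log k)$ relevant rounds — this is exactly the role played by the artificial-bit construction in Lemma~\ref{lemma:neg}, which can be reused here, or alternatively one reproves the needed concentration directly via a per-round conditioning argument. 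Everything else is a routine union bound.
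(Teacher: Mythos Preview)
Your overall structure---first bounding $\Pr(\overline{\mathcal{A}_1\cap\cdots\cap\mathcal{A}_h})$ via Lemma~\ref{lemma:neg}, then bounding $\Pr(\mathcal{B}_v\mid \mathcal{A}_1\cap\cdots\cap\mathcal{A}_h)$ for each $v$, then taking a union bound over $v$---is exactly the paper's decomposition. The problem is in the second step, and it is quantitative rather than conceptual.

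You lower-bound $p_v(t)$ by its \emph{minimum} value $1/(2k)$ (the first-phase probability) and then run the calculation over $\Theta(ck\log N\cdot\log k)$ rounds. This yields
\[
\Pr(\mathcal{B}_v\mid\textstyle\bigcap_\ell\mathcal{A}_\ell)\ \le\ \bigl(1-\tfrac{1}{32k}\bigr)^{\Theta(ck\log N\log k)}\ \le\ N^{-\Theta(c\log k)}.
\]
But the target is $P=k^{-3}(Nk^2T)^{-k}$, which is of order $N^{-\Theta(k)}$. An exponent of order $c\log k$ cannot reach an exponent of order $k$ for any constant $c$; you would need $c=\Omega(k/\log k)$. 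So the step ``which for $c$ large enough is at most $P$'' fails.

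The paper avoids this by looking only at the \emph{last} phase of $v$'s execution, phase $i=2\log k$, where $p_v(t)=1/2$. Under $\bigcap_\ell\mathcal{A}_\ell$ and Lemma~\ref{fact:aj} one still has $\sigma(t)\le 2$ throughout these $T$ rounds, so Lemma~\ref{lemma:cici} gives a per-round success probability of at least $1/32$ (not $1/(32k)$), and hence
\[
\Pr(\mathcal{B}_v\mid\textstyle\bigcap_\ell\mathcal{A}_\ell)\ \le\ (31/32)^{T}\ =\ (31/32)^{ck\log N}\ =\ N^{-\Theta(ck)}\ \le\ P
\]
for a large enough constant $c$. This also sidesteps your bookkeeping concern about how many full segments fit inside $v$'s activity window: one only needs that the last $T$ rounds of $v$'s execution lie inside the block and are covered by segments $\ell$ with $\ell\ge 2$, so that Lemma~\ref{fact:aj} applies. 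Your independence worry is legitimate in principle, but the paper handles it implicitly (the bound on $\sigma(t)$ under the conditioning is used pointwise, and $v$'s own Bernoulli coin in round $t$ is independent of the other stations' coins in that round); there is no need to re-import the artificial-bit machinery of Lemma~\ref{lemma:neg} here.
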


\begin{proof}
	Fix a block.
	Let ${\cal B}_v$ be the event that a station $v$ (active in this block) 
	does not transmit successfully
	in any of the rounds of the block.
	The probability defined in the statement of the lemma can be bounded as follows:
	\begin{eqnarray}
	\lefteqn{\Pr(\exists v:\; {\cal B}_v)} \nonumber \\ 
	&\le &  \sum_v\Pr({\cal B}_v|{\cal A}_1 \cap {\cal A}_2 \cap \cdots \cap {\cal A}_h) + 
	  \Pr( \overline{ {\cal A}_1 \cap {\cal A}_2 \cap \cdots \cap {\cal A}_h}) \ , \label{eq:Bv}
	\end{eqnarray}
	where $h \le k(2\log k+1)$ is the number of segments of the block.
	
	Let us start with the estimation of $\Pr({\cal B}_v|{\cal A}_1 \cap {\cal A}_2 \cap \cdots \cap {\cal A}_h)$, 
	\textit{i.e.}, the probability that a given station $v$ does not manage 
	to transmit successfully when all events ${\cal A}_{\ell}$, for $1\leq \ell \leq h$, occur.
	Let us consider the last $T$ rounds of the protocol execution for $v$.
	These are the $T$ rounds of the last phase $i = 2\log k$.
	\gdm{In each of these rounds}, $v$ transmits (if it hasn't transmitted successfully yet) 
	with probability  
	$\min\{\frac{1}{2},\frac{{2}^{i/2}}{2k}\}$ (see 
    Definition~\ref{def:sched}), which is at least  
	$1/2$ for $i = 2\log k$.
	Moreover, the occurrence of event ${\cal A}_1 \cap {\cal A}_2 \cap \ldots \cap {\cal A}_h$ 
	guarantees (by Lemma~\ref{fact:aj})
	that we must have 
	$\sigma(t) \le 2$ for every round $t$ of these last $T$ rounds.
	By Lemma~\ref{lemma:cici}, it follows that the probability that $v$ transmits successfully in any of these 
	$T$ rounds is larger than $\frac{1}{2}\cdot \frac{1}{16} = \frac{1}{32}$.
	Thus,
	\begin{eqnarray*}
		\lefteqn{\Pr({\cal B}_v|{\cal A}_1 \cap {\cal A}_2 \cap \cdots \cap {\cal A}_h)}\\ 
		& \leq & (31/32)^T  
		=   (31/32)^{ck\log N}
		=  N^{-ck \log(32/31)}\ .
	\end{eqnarray*}
	Assuming that $N$ and $k$ are sufficiently large and $k\leq N$,
	for a sufficiently large constant $c$ we have 
	$N^{-ck \log(32/31)}
	    \leq k^{-3}(Nk^2T)^{-k} = P$.
	
	Now, resuming from (\ref{eq:Bv}) we are ready to estimate the 
	probability of a failure in the fixed block:
	\begin{eqnarray*}
		\Pr(\exists v:\; {\cal B}_v) &\leq &  k P + \Pr\left(\overline {{\cal A}_1 \cap {\cal A}_2 \cap \cdots \cap {\cal A}_h}\right)\\
		&\leq &  k P + \Pr(\overline{{\cal A}_{1}}) + \sum_{\ell = 2}^{h}\Pr(\overline{{\cal A}_{\ell}}|{\cal A}_1 \cap {\cal A}_2 \cap \cdots \cap {\cal A}_{\ell-1})\\
		\mbox{(Lemma~\ref{lemma:neg})}&\leq &  k P +  \sum_{\ell = 1}^{h}P
		\leq   k^2 P \ ,
	\end{eqnarray*}
	where the last inequality holds because $h \le k(2\log k+1)$ and $k$ is sufficiently large.
\end{proof}

The following theorem shows that algorithm \known\ instantiated by the 
random schedule $\cR$ fails with probability exponentially small in $k$.

\begin{theorem}
	\label{t:known-random}
	The latency of algorithm \known\ is $O(k\log k \log N)$,
	with probability larger than $1-k^3P$.
\end{theorem}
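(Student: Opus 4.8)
The plan is to combine Lemma~\ref{l:known-random} with a union bound over all blocks. First I would recall that, by Fact~\ref{fact:kt}, any execution of the algorithm can be partitioned into at most $k$ disjoint blocks. The key observation is that a station is active only within the single block containing its activation time, so the algorithm has latency $O(k\log k\log N)$ precisely when every station succeeds within its own block. Thus the probability that the latency exceeds the claimed bound is at most the probability that \emph{some} block contains a station that fails to transmit successfully.

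Next I would bound the number of \emph{possible} blocks (not just the at-most-$k$ that actually occur in a given execution). A block is determined by its starting round within the execution; since the whole execution spans at most $k\cdot T(2\log k+1)$ rounds (Fact~\ref{fact:kt}) and each block has length at most $k\cdot T(2\log k+1)$, and the set of stations assigned to a block can also be chosen in at most $2^k \le \binom{N}{k}\cdot(\ldots)$ ways, the total number of (block, active-station-set) configurations is crudely bounded by something polynomial in $N$, $k$, $T$ raised to the power $O(k)$ — this is exactly why the target failure probability $P = k^{-3}(Nk^2T)^{-k}$ carries the factor $(Nk^2T)^{-k}$. Concretely: each block is specified by at most $k$ stations out of $N$ (at most $N^k$ choices), each with an activation offset in a window of size at most $kT(2\log k+1)$ (at most $(k^2 T)^k$ choices, absorbing the $2\log k+1$ factor into constants), giving at most $(Nk^2 T)^{k}$ configurations, up to constants. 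Multiplying by the per-block failure bound $k^2 P$ from Lemma~\ref{l:known-random} and by a further factor accounting for the at-most-$k$ blocks per execution yields a total failure probability at most $k\cdot (Nk^2T)^{k}\cdot k^2 P = k^3 \cdot (Nk^2T)^k \cdot k^{-3}(Nk^2T)^{-k} = k^3\cdot k^{-3}(Nk^2 T)^{-k}\cdot(Nk^2T)^k$; re-examining the bookkeeping so the constants land on $k^3 P$, the union bound gives failure probability at most $k^3 P$, hence success probability at least $1 - k^3 P$. Since $P = k^{-3}(Nk^2T)^{-k}$, we have $k^3 P = (Nk^2T)^{-k}$, which is exponentially small in $k$.

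The main obstacle I expect is the careful accounting of how many (block, wake-up-pattern) pairs must be covered by the union bound. Lemma~\ref{l:known-random} is stated for a \emph{fixed} block, meaning a fixed window of global rounds together with a fixed assignment of which stations are active in it and when; to conclude something about \emph{all} executions I must enumerate these configurations and verify that the product of their count with $k^2 P$ stays below $k^3 P$ — equivalently, that the number of relevant configurations is at most $k/1$, which is false as stated, so in fact the role of the $(Nk^2T)^{-k}$ factor inside $P$ is precisely to kill this enumeration, and the $k^3$ in the statement is what is left over after the $k^2$ from Lemma~\ref{l:known-random} and an extra factor $k$ for the number of blocks per execution. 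I would therefore be careful to state the union bound over all possible choices of (i) the $\le k$ active stations, (ii) their activation offsets within the block, and (iii) which of the $\le k$ blocks of the execution we are looking at, and check that $P$ was defined with exactly enough slack.

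Finally, I would note that the latency bound $O(k\log k\log N)$ itself is deterministic given that the event ``every active station succeeds in its block'' holds: each station runs its protocol for at most $T(2\log k+1) = O(k\log k\log N)$ rounds of its own local clock, and on the good event it transmits successfully (and switches off) within that window. So on an event of probability at least $1 - k^3 P$, the maximum latency over all stations and all wake-up patterns is $O(k\log k\log N)$, which is the statement of the theorem. (The derandomization to an explicit deterministic schedule $\cR$, giving Theorem~\ref{t:upper-known}, then follows because $k^3 P < 1$, so a good schedule exists.)
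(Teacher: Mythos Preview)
Your core intuition --- union-bound the per-block failure probability $k^2P$ from Lemma~\ref{l:known-random} over the blocks, and observe that each station's protocol runs for $T(2\log k+1)=O(k\log k\log N)$ rounds --- is exactly the paper's argument, and it appears verbatim in your third and fourth paragraphs. The paper's proof is three lines: at most $k$ blocks (Fact~\ref{fact:kt}), so failure probability at most $k\cdot k^2P = k^3P$; done.

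The problem is everything you layer on top. You are conflating Theorem~\ref{t:known-random} with Theorem~\ref{t:upper-known}. Theorem~\ref{t:known-random} concerns a \emph{random} schedule $\cR$ against an \emph{oblivious} adversary, i.e.\ a fixed wake-up pattern: the randomness is entirely in $\cR$, and for that fixed input the blocks are completely determined --- there are at most $k$ of them, full stop. There is no need to enumerate ``possible'' blocks or station/offset configurations here; Lemma~\ref{l:known-random} already applies to each of the (at most $k$) blocks that actually occur. Your second paragraph attempts a union bound over $(Nk^2T)^k$ configurations, but that enumeration belongs to the derandomization step (the proof of Theorem~\ref{t:upper-known}), not here --- and indeed your own arithmetic there collapses to $1$ rather than $k^3P$, which you paper over with ``re-examining the bookkeeping.'' The $(Nk^2T)^{-k}$ factor baked into $P$ is slack reserved for that \emph{later} union bound over wake-up patterns, not for this theorem. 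Strip out the configuration-counting entirely; what remains (your sentence ``the $k^3$ \ldots\ is the $k^2$ from Lemma~\ref{l:known-random} and an extra factor $k$ for the number of blocks'' plus the final paragraph) is the complete proof.
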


\begin{proof}
\gdm{	
	Lemma~\ref{l:known-random} implies that 
	for any given block, the probability that an active 
	station does not manage to transmit successfully during
	this block, is smaller than $k^2 P$. 
	Applying the union bound over all the blocks,
	that are at most $k$,
	it follows that the probability that any station 
	could fail to transmit successfully (within the block in which it is active) is smaller than 
	$k\cdot k^2 P = k^3 P$.
	Now the theorem follows by the simple 
		observation that each station is active for at most 
		$T(2\log k+1) = O(k\log k \log N)$ 
		consecutive rounds.
	}
\end{proof}

\subsection{Analysis of algorithm \known\ for specific deterministic $\cR$}

In this subsection we finally give our theorem on deterministic algorithms when
$k$ is known, which 
represents the main result of the entire section. 
{In order to prove Theorem~\ref{t:upper-known},
we derandomize the results obtained in 
the previous subsection, Theorem~\ref{t:known-random},} by considering all possible input configurations of up to $k$ stations.
\gdm{In particular, this implies that although the result of 
	Theorem~\ref{t:known-random} holds
	for a randomized algorithm working against an oblivious
	adversary, \textit{i.e.} with respect to a fixed arrival pattern for the stations, 
	the deterministic solution of Theorem~\ref{t:upper-known} will work against 
	any kind of adversary.
}  

\remove{
\begin{theorem}
	There exists some constant $c>0$ and a (deterministic) schedule $\cR$ such
	that algorithm \known$(N,k;c,\cR)$ accomplishes contention resolution 
	with latency $O(k\log k\log N)$ (and channel utilization
    $\Omega(1/(\log k\log N))$).
\end{theorem}
}

\begin{proof}[Proof of Theorem~\ref{t:upper-known}]
	Consider a randomly generated schedule $\cR$ as defined in  Subsection~\ref{s:known-random} (see Definition~\ref{def:sched}).
	By Theorem~\ref{t:known-random} there exists a constant $c$ such that
	the probability that our algorithm \known$(N,k;c,\cR)$ 
	does not allow any station $v$ to transmit successfully within 
	$O(k\log k \log N)$  
	is smaller than 
	\begin{equation}\label{P}
      k^3 P = (Nk^2T)^{-k}.
	\end{equation}
		
	Each station can be activated in any time slot within one of the possible $k$ blocks
	in which any execution of the algorithm can be partitioned (see
	Fact~\ref{fact:kt}). Since the length of any block is at most 
	$k\cdot T(2\lceil \log k\rceil+1)$, it follows that each station,
relatively to its activation time, can take a total of	
	$k^2\cdot T(2\lceil \log k\rceil+1) +1$ configurations:
one corresponds to the case the station is not activated 
and the other $k^2\cdot T(2\lceil \log k\rceil+1)$ correspond to the possible activation times.

	Furthermore, there are $\binom{N}{k}$ possible ways of choosing the $k$ stations out of an
	ensemble of size $N$.  
	Hence, the total number of wake-up patterns (possible ways in which the activation 
	times of up to $k$ stations can be arranged), is at most
    \begin{eqnarray*}
        \binom{N}{k}\left(k^2\cdot T(2\lceil \log k\rceil+1)+1\right)^k 
          &<& \left(\frac{N\cdot e}{k}\right)^k \left(2k^2\cdot T(2\lceil \log k\rceil+1)\right)^k \\
          &<& \left( Nk^2T \right)^k,
    \end{eqnarray*}
	where the last inequality holds for sufficiently large $k$.
	
	Resuming from (\ref{P}), the probability that there exists a wake-up pattern on which our algorithm
    fails, is therefore smaller than $(Nk^2T)^{-k} (Nk^2T)^k = 1$.
    This implies that there must exist some schedule $\cR$ such that
    all active stations transmit successfully in any execution of the algorithm
    (\textit{i.e.}, for any choice of the active stations and any wake-up pattern).

    As in the proof of Theorem~\ref{t:known-random}, the bound
    on latency  can be easily derived
    by the simple observation that each station is active for at most 
    $T(2\log k+1) = O(k\log k \log N)$ 
    consecutive rounds.
\end{proof}

\section{Lower bound for unknown contention size}
\label{s:lower}

In this section we prove Theorem~\ref{lb:unknown_k}: the ignorance of the number $k$
of stations (or any linear upper bound on it), 
has a considerable impact on the time complexity of non-adaptive
deterministic contention resolution algorithms, even when the
stations are allowed to switch off after a successful transmission.
 
An arbitrary deterministic non-adaptive algorithm
 can be characterized by the following assumptions. 
Each station has a unique ID in $[N] = \{1,2,\ldots , N\}$.
When a station \gdm{is awakened}, its local clock starts
and its ID determines in which rounds it transmits. Specifically,
the transmission schedule of a station is generated in advance and 
depends only on the ID of the station.

An \textit{input instance}  \gdm{specifies a subset of $k$ ID's 
	and a wake-up pattern for them}, 
\textit{i.e.} the time at which each \gdm{ of the specified $k$ stations} is activated.
In the following analysis, our aim will be to build a worst-case 
input instance for any deterministic non-adaptive algorithm. 

For any fixed input instance of $k$ stations and an algorithm $\mathscr{A}$, 
let us denote by $t_\mathscr{A}(k)$ the maximum latency of $\mathscr{A}$ among all 
activated stations in the input instance. 
We will prove that for any algorithm $\mathscr{A}$, $t_\mathscr{A}(k)$ 
is $\Omega(k^2/\log k)$, or more precisely, we will show
that \textit{no algorithm} $\mathscr{A}$ can achieve a maximum latency 
$t_\mathscr{A}(k) = \frac{k^2}{(4+o(1))\log k}$.

%
%
\remove{
\begin{theorem}
	For any $c=1/(4+o(1))$ there exists an input instance of $k$ stations on which
	no deterministic non-adaptive algorithm $\mathscr{A}$ achieves a maximum 
	latency $t_\mathscr{A}(k) = ck^2/\log k$,
	even when the stations can switch off after a successful transmission.
{[[[I thought it should be: for any algorithm there is an instance ...]]]}
\end{theorem}
}
Our proof will be by contradiction:
we fix a non-adaptive deterministic contention resolution 
algorithm $\mathscr{A}$ that for every sufficiently large $k$ 
 has maximum latency $t_\mathscr{A}(k) = \frac{k^2}{(4+\delta)\log k}$ 
 on every instance of $k$ stations, {where $\delta>0$ could be arbitrarily small.
 We denote $c=\frac{1}{4+\delta}$.}
 

All the following results are meant to hold for any $k$ sufficiently large.
We start with a simple lemma on the least number of transmissions that
algorithm $\mathscr{A}$ assigns to any station in 
the first $t_\mathscr{A}(k)$ rounds of its activity. 
It is important to point out that this is the number of transmissions
assigned to the station by the algorithm, not necessarily the actual number of
times that the station transmits, as it may switch off 
{if its successful transmission takes place during the considered time period (i.e., the actual number of transmissions depends on an execution of the whole channel: input instance and transmission schedules of other stations).}

\begin{lemma}
	\label{bound:sk}
	Each station is assigned at least $k$ transmissions within
	the first $t_\mathscr{A}(k)$ rounds from its activation.
\end{lemma}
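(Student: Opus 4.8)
The plan is to argue by contradiction. Suppose some station $v$ is assigned fewer than $k$ transmissions within the first $T := t_\mathscr{A}(k)$ rounds of its local clock; say its transmission rounds in this window are $r_1 < r_2 < \cdots < r_s$ with $s \le k-1$. I will exhibit a wake-up pattern for a set of at most $k$ stations (including $v$) on which $v$ fails to transmit successfully within $T$ rounds of its activation, contradicting the assumption that $\mathscr{A}$ achieves maximum latency $t_\mathscr{A}(k)$ on every $k$-station instance.

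Since $\mathscr{A}$ is a correct contention resolution algorithm, every station transmits at least once; being non-adaptive, the transmission schedule of a station depends only on its ID, so for each ID $w$ there is a well-defined index $f(w) \ge 1$ of its first transmission round (counted from its own activation). Crucially, a station cannot switch off before round $f(w)$: switching off requires a previous successful transmission, and $w$ has made no transmission whatsoever before round $f(w)$. This is the one place where the acknowledgement/switch-off feature must be handled, and it is handled automatically. Now choose $s$ pairwise distinct IDs $w_1,\dots,w_s$, all different from $v$ (possible because $N > k > s$), activate $v$ at some global time $\tau$ large enough that all activation times used below are nonnegative, and activate each $w_i$ at global time $\tau + r_i - f(w_i)$. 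Then $v$ transmits precisely at global rounds $\tau + r_1,\dots,\tau + r_s$ during its first $T$ local rounds (in this instance $v$ never transmits successfully, hence never switches off, so all these transmissions do occur), and $w_i$ transmits at global round $\tau + r_i$ — this being exactly $w_i$'s first transmission, so $w_i$ is certainly still active then. Hence in every global round in which $v$ transmits during its first $T$ local rounds there are at least two transmitters, so $v$ has no successful transmission in that window.

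This instance uses $s + 1 \le k$ stations; if an instance of exactly $k$ stations is wanted, add $k - 1 - s$ dummy stations activated after global round $\tau + T$, which cannot affect the window $[\tau+1,\tau+T]$ relevant to $v$. Either way we obtain a $k$-station instance on which $v$'s latency exceeds $T = t_\mathscr{A}(k)$, the desired contradiction, so every station must be assigned at least $k$ transmissions in its first $t_\mathscr{A}(k)$ rounds. The main (and essentially only) delicate point is the interaction with switching off, resolved as above by timing each collision to coincide with the blocker's very first transmission; the remaining ingredients — non-adaptivity, the adversary's freedom to shift activation times arbitrarily, and the abundance of distinct IDs — are routine.
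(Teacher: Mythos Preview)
Your proof is correct and follows essentially the same approach as the paper's: assume some station $v$ has at most $k-1$ scheduled transmissions in its first $t_\mathscr{A}(k)$ rounds, then align the \emph{first} transmission of each of $s\le k-1$ other stations with one of $v$'s attempts so that every attempt of $v$ collides. You are in fact slightly more careful than the paper---which simply declares $v$ to be activated first without addressing the possibility $f(w_i)>r_i$---by shifting all activations by a large $\tau$ and by explicitly noting that a blocker cannot switch off before its first transmission.
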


\begin{proof} 
	Algorithm $\mathscr{A}$ will assign a possibly different number of transmissions,
	within its first $t_\mathscr{A}(k)$ rounds of activity, to each different station.
	Let $s_k$ be the smallest of these numbers. We want to prove that $s_k \ge k$.
	Assume, to the contrary, that $s_k\leq k-1$ and that this number 
	of transmissions is assigned to some station $v$.
	Let
	$1\le r_1 < r_2 < \ldots < r_{s_k} \le t_\mathscr{A}(k)$ be the rounds, with
	respect to $v$'s local clock, at which $v$ transmits.

	Let us consider an input instance in which a total of $s_k +1 \le k$ stations
	are activated. We now define the wake-up patterns for $v$ 
	and the other $s_k$ stations in such a way that all sending attempts of
	$v$ will collide with some other station's transmission.
	(Remember that all transmission schedules for these $s_k+1$ stations are fixed 
	in advance by algorithm $\mathscr{A}$ independently of the ways in which 
	the activation times can be chosen.)
	We let station $v$ be the first to be activated.
	The activation times for the other $s_k$ stations are arranged in such a way
	that the $i$th station, for $1\le i\le s_k$, has its first transmission in
	round $r_i$ of $v$'s local clock.

	In this way, $v$  is not able to transmit successfully 
	within its first $t_\mathscr{A}(k)$ rounds, which contradicts the initial assumption
	that $\mathscr{A}$ is a contention resolution algorithm with latency $t_\mathscr{A}(k)$.
\end{proof}

The following relatively straightforward lemma provides an upper bound on the
probability that at most one event, out of $m$, occurs, expressed in terms of 
the sum of probabilities of the single events.

\begin{lemma}\label{bound:prob}
	Let ${\cal E}_1,{\cal E}_2,\ldots, {\cal E}_m$ be $m$ \gia{mutually independent} events,
	for any $m>0$,
	occurring with probabilities $p_1,p_2,\ldots,p_m$ respectively and 
	\gdm{$p_1+p_2+\cdots+p_m \ge s$, \gia{for some $s \ge 1$}}.
	The probability $P$ that at most one of these events occurs is at most $(s+1)e^{1-s}$. 
\end{lemma}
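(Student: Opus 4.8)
The plan is to introduce the counting variable $X=\sum_{i=1}^m \mathbf{1}_{{\cal E}_i}$, so that $P=\Pr(X\le 1)=\Pr(X=0)+\Pr(X=1)$, and to bound the two terms separately using only mutual independence and the hypothesis $S:=p_1+\cdots+p_m\ge s\ge 1$. For the first term I would use the standard estimate $\Pr(X=0)=\prod_i(1-p_i)\le\prod_i e^{-p_i}=e^{-S}\le e^{-s}$. For the second, I would expand $\Pr(X=1)=\sum_j p_j\prod_{i\ne j}(1-p_i)$ (valid by independence) and bound each inner product by $\prod_{i\ne j}(1-p_i)\le e^{-(S-p_j)}$, which gives $\Pr(X=1)\le e^{-S}\sum_j p_j e^{p_j}$.

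The only step that is not completely routine --- and the one that produces the extra factor $e$ appearing in the statement --- is controlling the sum $\sum_j p_j e^{p_j}$, since the naive per-term bound $p_j e^{p_j}\le e$ is useless when $m$ is large. The idea is to exploit that $p\mapsto e^{p}$ is convex, hence lies below the chord through $(0,1)$ and $(1,e)$ on $[0,1]$, so that $e^{p_j}\le 1+(e-1)p_j$; combining this with $p_j^2\le p_j$ for $p_j\in[0,1]$ yields the linear bound $p_j e^{p_j}\le e\,p_j$, and therefore $\sum_j p_j e^{p_j}\le eS$. This turns the estimate into $\Pr(X=1)\le e^{-S}\cdot eS=S e^{1-S}$.

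To finish, I would use monotonicity: the function $x\mapsto x e^{-x}$ is nonincreasing on $[1,\infty)$ (its derivative is $(1-x)e^{-x}\le 0$ there), so $S\ge s\ge 1$ gives $S e^{1-S}\le s e^{1-s}$, and likewise $e^{-S}\le e^{-s}=e^{-1}e^{1-s}$. Summing the two contributions, $P\le e^{-1}e^{1-s}+s e^{1-s}=(s+e^{-1})e^{1-s}\le (s+1)e^{1-s}$, which is the claim. It is worth noting that the assumption $s\ge 1$ is used exactly once, precisely to place $S$ in the regime where $xe^{-x}$ is monotone, so I would not expect to be able to drop it. The main obstacle is really just finding the right linearization of $pe^{p}$ on $[0,1]$; the rest is bookkeeping.
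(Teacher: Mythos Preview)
Your proof is correct and follows the same decomposition and monotonicity argument as the paper. The one place you work harder than necessary is the bound $\sum_j p_j e^{p_j}\le eS$: you invoke convexity of $e^p$ and then $p_j^2\le p_j$, but since each $p_j\le 1$ one has $e^{p_j}\le e$ immediately, hence $p_j e^{p_j}\le e\,p_j$ in one step---this is exactly what the paper does, writing $\prod_{i\ne j}(1-p_i)\le e^{-(S-p_j)}\le e^{-S+1}$ directly (the ``naive'' bound you dismissed was $p_j e^{p_j}\le e$, not $e^{p_j}\le e$).
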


\begin{proof}
		We have:
	\begin{eqnarray*}
		P &=    & \prod_j (1-p_j)+ \sum_j p_j\prod_{i\neq j}(1-p_{i})\\
		&\leq &  e^{-(p_1+p_2+\cdots+p_m)}+\sum_j p_j e^{-(p_1+p_2+\cdots+p_m)+1}\\
        &\le &  \gia{e^{-s} + e\cdot\sum_j p_j e^{-\sum_j p_j}}\\
		&\leq & \gia{e^{-s} + e\cdot s e^{-s}}
        \leq (s+1)e^{1-s},
	\end{eqnarray*}
\gia{ where the second to last inequality is due to the fact that
 $s e^ {-s}$ is decreasing for $s \ge 1$.}
\end{proof}


\gdm{In the next lemmas we will use the probabilistic method 
to show the existence of a wake-up pattern for $k$ stations such that
none of them successfully transmits in the interval $[1, \Omega(k^2/\log k)]$,
where round 1 corresponds to the first wake-up time of the $k$ stations.
More precisely, we show that there exists a wakeup schedule 
for $k$ stations such that, 
	conventionally assuming a clock starting at the first wake-up time,}
no successful transmission occurs in rounds $i\in[1,ck^2/\log k]$ for a 
sufficiently large $k$, where we recall $c=1/(4+\delta)$.
The existence of such a schedule implies that $t_\mathscr{A}(k)>ck^2/\log k$
which contradicts the assumption that $t_\mathscr{A}(k)\le ck^2/\log k$ and ends the proof of  Theorem~\ref{lb:unknown_k}.

\begin{lemma}\label{interval}
Assume that no station is switched off.
Let $0<a<b\leq 1$ and $t_\mathscr{A}(\kappa)\le c\kappa^2/\log\kappa$ for any $\kappa\ge ka$.
Let us distribute randomly and uniformly {the wake-up times of}
$x$ stations in 
the interval $[1,ck^2b^2/\log (bk)]$.
In any round 
$t\in [ck^2a^2/\log (ak),ck^2b^2/\log (bk)]$,
the sum of {transmission probabilities} $\sigma(t)$ is at least
$ax\log(bk)/(ckb^2)$.
\end{lemma}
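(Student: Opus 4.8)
The plan is to fix one of the $x$ stations and one round $t$ in the stated interval, to lower-bound the probability $p_v(t)$ that $v$ transmits in round $t$ (the randomness being over the wake-up times), and then to sum over the $x$ stations. Write $L=ck^2b^2/\log(bk)$ for the length of the interval into which the wake-up times are thrown, and for a station $v$ let $N_v(r)$ denote the number of transmissions that the fixed, ID-dependent schedule of $v$ prescribes among its first $r$ local rounds.

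First I would relate $p_v(t)$ to $N_v(t)$ by a direct counting argument. Since $t\le L$, a station $v$ with wake-up time $w$ transmits in global round $t$ precisely when $w\le t$ and its schedule carries a $1$ at local position $t-w+1$; as $w$ ranges over $[1,t]\subseteq[1,L]$ the local position $r=t-w+1$ ranges bijectively over $[1,t]$, and under the uniform wake-up distribution each such $w$ has probability $1/L$. Hence
\[
p_v(t)=\frac{1}{L}\,\bigl|\{\,r\in[1,t]:\ v\ \text{transmits at local round}\ r\,\}\bigr|=\frac{N_v(t)}{L}.
\]

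Second, I would bound $N_v(t)$ from below using Lemma~\ref{bound:sk} with $ak$ in the role of $k$; this is legitimate because $ak\le k\le N$ and $ak$ is large for $k$ large, and the counting argument of that lemma only uses that $\mathscr{A}$ is a correct non-adaptive algorithm with latency $t_\mathscr{A}$. It gives that every station is assigned at least $ak$ transmissions within its first $t_\mathscr{A}(ak)$ local rounds. By hypothesis $t_\mathscr{A}(ak)\le c(ak)^2/\log(ak)=ck^2a^2/\log(ak)\le t$, the last inequality being exactly the left endpoint of the interval for $t$; since $N_v(\cdot)$ is non-decreasing, $N_v(t)\ge N_v\!\bigl(t_\mathscr{A}(ak)\bigr)\ge ak$. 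Combining the two steps,
\[
p_v(t)\ \ge\ \frac{ak}{L}\ =\ \frac{ak\log(bk)}{ck^2b^2}\ =\ \frac{a\log(bk)}{ckb^2},
\]
and summing over the $x$ stations yields $\sigma(t)=\sum_v p_v(t)\ge ax\log(bk)/(ckb^2)$, as claimed.

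I expect the only steps with any content to be the two consistency checks around the endpoints: that the interval $[ck^2a^2/\log(ak),\,ck^2b^2/\log(bk)]$ is nonempty, which holds because $u\mapsto u^2/\log(uk)$ is increasing once $uk>\sqrt e$ and hence $a<b$ forces $ck^2a^2/\log(ak)<ck^2b^2/\log(bk)$ for $k$ large; and that $t\le L$, which is precisely the right endpoint and is exactly what makes the identity $p_v(t)=N_v(t)/L$ clean, since no relevant wake-up time can then exceed $t$. Everything else is the uniform-distribution bookkeeping of the first step.
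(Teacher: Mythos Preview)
Your proposal is correct and follows essentially the same approach as the paper: both invoke Lemma~\ref{bound:sk} with parameter $ak$ to obtain at least $ak$ scheduled transmissions within the first $ck^2a^2/\log(ak)$ local rounds, observe that each such transmission lands in a given global round $t$ with probability $1/L$ under the uniform wake-up distribution, and sum over the $x$ stations. Your explicit identity $p_v(t)=N_v(t)/L$ and the accompanying endpoint checks make the bookkeeping slightly more transparent than the paper's version, but the underlying argument is the same.
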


\begin{proof}
Since 
$t_\mathscr{A}(ak)\le c(ak)^2/\log (ak)$,
   {by Lemma~\ref{bound:sk}}
there are at least $ak$ scheduled transmissions
of any station during its first $a^2 ck^2/\log (ak)$ rounds
\gst{regardless of its ID}.
\gst{
Consider first $ak$ transmissions of a station woken up 
randomly and uniformly in $[1,ck^2b^2/\log (bk)]$.
Each of these transmissions is present
in each round $t\in [ck^2a^2/\log(ak),ck^2b^2/\log(bk)]$
with probability $\log(bk)/(ck^2b^2)$. 
Because of this any station woken up
randomly and uniformly in $[1,ck^2b^2/\log (bk)]$,
increases $\sigma(t)$ by at least $a\log(bk)/(ckb^2)$
for any round $t\in [ck^2a^2/\log(ak),ck^2b^2/\log(bk)]$.
Thus for $x$ stations distributed randomly and uniformly in
$[1,ck^2b^2/\log (bk)]$,
in any round $t\in [ck^2a^2/\log(ak),ck^2b^2/\log(bk)]$
the sum of transmission probabilities $\sigma(t)$ is at least
$ax\log(bk)/(ckb^2)$.
}
%
\end{proof}

From this lemma we can derive the next lemma, which then directly implies Theorem~\ref{lb:unknown_k}.
In Lemma~\ref{lem:lb-final}
we express constant $c=\frac{1}{4+\delta}$ in a different form, more suitable for the proof, mainly,
$c = (1-2\epsilon)/(4+2\epsilon)$.
These two forms are equivalent for $\delta=\frac{10\epsilon}{1-2\epsilon}$, and note that $\delta>0$ could be made arbitrarily small for a sufficiently small $0<\epsilon<1/2$, as assumed in the beginning of the proof.

\begin{lemma}
	\label{lem:lb-final}
	For any $0<\epsilon<1/2$, there exists $k_0\in{\mathbb N}$ such that
	for any $k > k_0$ there is a wakeup pattern of $k$ stations such that,
	there is no successful transmission in rounds
	$i\in\left[1, ck^2/\log k\right]$ 
	for a constant $c = (1-2\epsilon)/(4+2\epsilon)$.
\end{lemma}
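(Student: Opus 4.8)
The plan is to build the worst-case wake-up pattern in a greedy, recursive fashion: partition the target interval $[1,ck^2/\log k]$ into a geometric sequence of nested sub-intervals, and at each stage "kill" the successful transmissions that could occur in the current sub-interval by distributing a fresh batch of new stations in a suitable prefix sub-interval. The key quantitative tool is Lemma~\ref{interval}, which says that if we drop $x$ stations uniformly at random into an interval of length $\approx ck^2b^2/\log(bk)$, then in every round $t$ of the tail sub-interval $[ck^2a^2/\log(ak),ck^2b^2/\log(bk)]$ the sum of transmission probabilities is already $\sigma(t)\ge ax\log(bk)/(ckb^2)$. Choosing the batch size $x$ a small constant fraction of $k$ makes $\sigma(t)$ at least some constant $s\ge 1$ on that whole range, and then Lemma~\ref{bound:prob} bounds the probability that \emph{at most one} station transmits successfully (i.e.\ that there is a successful transmission, since we only need to rule out the "exactly one transmitter" event) in a single fixed round $t$ by $(s+1)e^{1-s}$, which can be pushed below any constant by enlarging $s$, hence below $1/(\text{number of rounds})$ by taking $s=\Theta(\log k)$ — this is exactly where the $\log k$ in the denominator of $ck^2/\log k$ is consumed.

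Concretely I would: (i) fix the geometric scales $b_\ell = (1-2\epsilon)^{\ell/2}$ (or a similar ratio bounded away from $1$), so that a bounded number $L=O(\log k)$ of levels covers the interval down to length $\Theta(k)$, where the base case is trivial because with fewer than $O(k)$ rounds Lemma~\ref{bound:sk} already forbids a success among the already-placed stations; (ii) at level $\ell$, having already placed stations filling the interval $I_{\ell-1}$ up to round $ck^2 b_{\ell-1}^2/\log(b_{\ell-1}k)$, add a new batch of $x_\ell = \Theta(\epsilon k)$ stations uniformly at random in $I_{\ell-1}$, invoke Lemma~\ref{interval} with $a=b_{\ell-1}$, $b=b_\ell$ (after checking the hypothesis $t_\mathscr{A}(\kappa)\le c\kappa^2/\log\kappa$ holds for all $\kappa\ge k b_\ell$, which follows from the standing contradiction assumption), to conclude $\sigma(t)\ge s := c'\epsilon\log(b_\ell k)\ge s_0\log k$ for all $t$ in the newly exposed window $[ck^2b_\ell^2/\log(b_\ell k),\,ck^2b_{\ell-1}^2/\log(b_{\ell-1}k)]$; (iii) apply Lemma~\ref{bound:prob} in each such round $t$ and take a union bound over the at most $ck^2/\log k$ rounds and the $L=O(\log k)$ levels, so the total failure probability is at most $ck^2/\log k \cdot O(\log k)\cdot(s+1)e^{1-s} = \mathrm{poly}(k)\cdot e^{-\Omega(\log k)}$, which is $<1$ once $s_0$ (hence the implicit constants, encoded via the choice of $c=(1-2\epsilon)/(4+2\epsilon)$) is taken large enough; (iv) finally check that the total number of stations used, $\sum_\ell x_\ell = O(L\cdot\epsilon k)=O(\epsilon k\log k)$, does not exceed $k$ — this is the reason the batches must be of size $\Theta(\epsilon k/\log k)$ rather than $\Theta(\epsilon k)$, or alternatively the reason one must be careful that the number of levels is only $O(1/\epsilon)$ rather than $O(\log k)$; reconciling the "need $\sigma(t)=\Theta(\log k)$" requirement with the "only $k$ stations available" budget is the crux of the whole calculation and dictates the precise shape $k^2/\log k$.

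The main obstacle is precisely this budget bookkeeping in step (iv): a single batch of $\Theta(\epsilon k)$ stations spread over an interval of length $\Theta(k^2/\log k)$ gives per-round density $\Theta(\log k/k)$, so to reach $\sigma(t)=\Theta(\log k)$ one needs a \emph{telescoping} contribution — the earlier, denser sub-intervals contribute much more per station because the interval they were dropped into is shorter, and Lemma~\ref{interval}'s factor $a\log(bk)/(ckb^2)$ quantifies exactly this amplification as $a/b^2$ grows down the recursion. The delicate point is to verify that the geometric ratios and batch sizes can be chosen simultaneously so that (a) $\sigma(t)\ge s_0\log k$ uniformly on each exposed window, (b) the union bound closes with room to spare, and (c) $\sum_\ell x_\ell\le k$; tracking the $\log(b_\ell k)$ versus $\log k$ discrepancy (which is a $1+o(1)$ factor only when $\ell=o(\log k)$) is what forces the $o(1)$ in $c=1/(4+o(1))$ and what must be handled by taking $\epsilon$ small and $k_0$ large at the end.
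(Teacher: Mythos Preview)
Your overall framework is the paper's: nested geometric sub-intervals, Lemma~\ref{interval} to lower-bound $\sigma(t)$ on each window, Lemma~\ref{bound:prob} plus a union bound over the $O(k^2/\log k)$ rounds. But two steps as written are not yet arguments.

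The base case is a genuine gap. Lemma~\ref{bound:sk} does not ``forbid a success'' in the first $\Theta(k)$ rounds: it only lower-bounds how many scheduled transmissions a single station has over its first $t_{\mathscr{A}}(k)$ rounds, and says nothing about collisions in any specific round. Worse, the randomly placed stations are mostly not even awake during the earliest rounds, so a lone early transmitter could succeed there. The paper handles the prefix $[1,k\epsilon]$ \emph{deterministically}: it selects a station $v$ with maximal first-transmission delay $\delta_v$, wakes it at round~$1$, and then spends $2k\epsilon$ of the $k$ stations by shifting their wake-up times so that two first transmissions land in each round of $[1+\delta_v,\,k\epsilon]$, forcing a collision everywhere. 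Only the remaining $k'=k(1-2\epsilon)$ stations are placed randomly.

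On the budget bookkeeping you correctly isolate the crux but do not resolve it: the proposal oscillates between ``$L=O(\log k)$ levels with batches $\Theta(\epsilon k)$'' (overspends) and ``batches $\Theta(\epsilon k/\log k)$'' or ``$O(1/\epsilon)$ levels'' (underspends on density), without a scheme that simultaneously gives $\sigma(t)\ge(2+\epsilon)\log k$ everywhere and $\sum_\ell x_\ell\le k$. The paper's allocation is \emph{asymmetric}: one large batch $S_0$ of size $k'/2$ is spread over the full interval, and geometrically shrinking batches $|S_j|\approx k'\epsilon(1-\epsilon)^j/2$ are spread over the nested sub-intervals $[1,(1-\epsilon)^{2j}ck^2/\log((1-\epsilon)^jk)]$. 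In the $j$-th window $S_0$ contributes $\approx(1-\epsilon)^j\cdot\frac{k'\log k}{2ck}$ and each $S_q$ with $0<q<j$ contributes $\approx\epsilon(1-\epsilon)^{j-q}\cdot\frac{k'\log k}{2ck}$; the sum telescopes exactly to $(1-\epsilon)\cdot\frac{k'\log k}{2ck}$, independent of $j$. Requiring this to be at least $(2+\epsilon)\log k$ yields $c\le(1-2\epsilon)/(4+2\epsilon)$, while the geometric decay keeps $\sum_j|S_j|\le k'/2$ regardless of the number of levels (which is $\Theta((\log k)/\epsilon)$, not $O(1/\epsilon)$). Your telescoping intuition is correct; the specific one-large-batch-plus-shrinking-top-ups split is the missing ingredient.
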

\begin{proof}
	Since algorithm $\mathscr{A}$ has worst-case latency \gdm{$ck^2/\log k$,} 
	there exists 
	a constant
	$\kappa_0\in\mathbb N$
	such that for any $k > \kappa_0$ we have 
	\gdm{$t_\mathscr{A}(k) \le ck^2/\log k$}.
	
	{For any $\epsilon > 0$}, let us fix \gdm{any set $X$ of} $k$ stations such that $k\epsilon(1-\epsilon)^2>c\kappa_0^2/\log\kappa_0$.
	Our aim is to define a worst-case wake-up pattern for the $k$ stations
	in $X$.
	
	\gdm{For each station $x\in X$, let $\delta_x$ be the delay between 
		the wake-up time and the first transmission of $x$.}
	\gdm{We start by activating first any station $v$ having the maximum 
		delay between its 
		wake-up time and its first transmission, \textit{i.e.} $v$ is chosen
		such that $\delta_v \ge \delta_x$, for every $x\in X\setminus\{v\}$.
		
	Let us consider the interval of rounds 
	$\left[1 +\delta_v,\ ck^2/\log k\right]$ 
	starting with the first transmission of $v$. 
	In the sequel, our aim will be to find a 
	wake-up time $\tau_x \ge 1$ for each station $x\in X\setminus \{v\}$ such that 
	no station transmits before round $1+\delta_v$ and no successful transmission 
	can occur within $\left[1+\delta,\ ck^2/\log k\right]$.
    Let $x_1\ldots, x_{k-1}$ be any ordering of the stations in $X\setminus \{v\}$.
    We can force a collision (with station $v$) at round $1+\delta_v$ 
    by choosing the wake-up time $\tau_{x_1}$ of $x_1$ so that its first transmission occurs at this time $1 + \delta_v$,
    that is letting $\tau_{x_1} = \delta_v - \delta_{x_1} +1$.
    For stations $x_2$ and $x_3$ we
    let $\tau_{x_2} = \delta_v - \delta_{x_2} + 2$
    and $\tau_{x_3} = \delta_v - \delta_{x_3} + 2$,
    so  that their first transmissions arise at time $2+\delta_v$.	
	Analogously, we can choose the wake-up times of the next pair of stations,
	$x_4$ and $x_5$, in such a way that their first transmissions arise at time 
	$3 + \delta_v$.
}   
	Proceeding in this way, we can arrange the wake-up times of 
	{$2k\epsilon$} stations so that
	\gdm{no station transmits before round $1+\delta_v$ and 
	in each round of interval 
	$[1 + \delta_v, k\epsilon]$
	we have the first transmission of two stations. }
	This assures that a collision occurs in each round of this interval and
	therefore no successful transmission can take place in any round of interval
	$[1,k\epsilon]$.
	
    Our next target will be to distribute randomly the activation times of the
    remaining {$k'=k(1-2\epsilon)$} stations so to have small chances of a successful transmission
    throughout the leftover interval $\left({k\epsilon},\ ck^2/\log k\right]$.   
   
    Let $j'$ be the largest integer such that $ck^2(1-\epsilon)^{2j'}/\log (k(1-\epsilon)^{j'}) \ge k\epsilon$.
    Note that $ck^2(1-\epsilon)^{2j'}\ge ck^2(1-\epsilon)^{2j'}/\log (k(1-\epsilon)^{j'}) \ge k\epsilon$, which  
implies that $k(1-\epsilon)^{j'}$ tends to infinity, when $k$ goes to infinity.
    Consider any partition of the remaining {$k'$} stations 
    in $j'+2$ disjoint subsets $S_0,S_1,\ldots,S_{j'},S_{j'+1}$ 
    such that $|S_0|=k'/2$ and $|S_j| = k'\epsilon(1-\epsilon)^j\log k/(2\log (k(1-\epsilon)^j))$
    for $j=1,\ldots, j'$ {($S_{j'+1}$ is filled with the remaining stations)}.
    {In order to see that this is consistent with a partition of $k'$ stations, 
    we need to prove that $\sum_{j=1}^{j'} |S_j| \le k'/2$.}
   Indeed, for a sufficiently large $k$ and small $\epsilon>0$, we observe:
    \begin{equation}
\label{eq:bound-S}
\sum_{j=1}^{j'} \frac{k'\epsilon(1-\epsilon)^j\log k}{2\log (k(1-\epsilon)^j)}= 
      \sum_{j=1}^{j'} \frac{k'\epsilon(1-\epsilon)^j}{2}+
      \sum_{j=1}^{j'} k'\epsilon(1-\epsilon)^j\frac{\log k-\log (k(1-\epsilon)^j)}{2\log (k(1-\epsilon)^j)}
\ .
    \end{equation}
  The first part of Equation~(\ref{eq:bound-S}) could be upper bounded as follows:
    \[ 
      \sum_{j=1}^{j'} \frac{k'\epsilon(1-\epsilon)^j}{2}\le \frac{k'(1-\epsilon)}{2}
\ .
    \]
   Regarding the second part of Equation~(\ref{eq:bound-S}), when we fix $\epsilon>0$, then for $k$ large enough we have the inequalities stated below.
    To prove them, we first replace power $j$ in the denominator by its upper bound $j'$, then we
use the fact that $\log(k(1-\epsilon)^{j'})$ tends to infinity for $k$ going to infinity, and thus $\frac{-\log(1-\epsilon)}{\log(k(1-\epsilon)^{j'})}$ could be upper bounded by $\epsilon^2$ for a sufficiently large $k$, and finally we use straightforward upper bound and arithmetic:
    \[\sum_{j=1}^{j'} k'\epsilon(1-\epsilon)^j\frac{\log k-\log (k(1-\epsilon)^j)}{2\log (k(1-\epsilon)^j)}\le
      \sum_{j=1}^{j'} k'\epsilon(1-\epsilon)^j\frac{-j\log (1-\epsilon)}{2\log (k(1-\epsilon)^{j'})}\le
      \sum_{j=1}^{j'} \frac{k'j\epsilon^3(1-\epsilon)^j}{2}
    \]
    \[
\le
\sum_{j=1}^{\infty} \frac{k'j\epsilon^3(1-\epsilon)^j}{2}= 
       \frac{k'\epsilon(1-\epsilon)}{2}\le\frac{k'\epsilon}{2}
\ .
    \]
  When putting the two above estimates together to Equation~(\ref{eq:bound-S}), we get the sought upper bound:
    \[
\sum_{j=1}^{j'} |S_j| \le
\sum_{j=1}^{j'} \frac{k'\epsilon(1-\epsilon)^j\log k}{2\log (k(1-\epsilon)^j)}\le
      \frac{k'(1-\epsilon)}{2}+\frac{k'\epsilon}{2}\le \frac{k'}{2}
\ .
    \]

We now describe how to distribute randomly the activation times of the
stations in $S_0 \cup S_1 \cup \ldots \cup S_{j'}$ 
(the remaining stations {in $S_{j'+1}$} are discarded).

The wake-up times of the stations from $S_j$ are distributed randomly and uniformly in the interval
$[1,(1-\epsilon)^{2j}ck^2/\log((1-\epsilon)^jk)]$, for $j=0,1,2,\ldots,j'$.
{By Lemma \ref{interval} applied on parameters 
$a = (1-\epsilon)^j$, $b = (1-\epsilon)^{j-1}$ and $x = |S_0| = k'/2$,
it follows that}
for any $t\in [\frac{(1-\epsilon)^{2j}ck^2}{\log((1-\epsilon)^jk)},\frac{(1-\epsilon)^{2(j-1)}ck^2}{\log((1-\epsilon)^{j-1}k)}]$
the sum of transmission probabilities for the stations in
set $S_0$ is at least 

\[
  \frac{(1-\epsilon)^j (k'/2)\log k}{ck}=
      \frac{k'(1-\epsilon)^j \log k}{2ck} \ .
\]

By the same lemma, in this interval the sum of transmission probabilities for
the stations in $S_{q}$, for $0<q<j$, is at least
\[(1-\epsilon)^j \cdot
  \frac{k'\epsilon(1-\epsilon)^{q}\log k}
    {2\log((1-\epsilon)^{q}k)} \cdot
  \frac{\log((1-\epsilon)^{q}k)}
        {ck(1-\epsilon)^{2q}}=
  \frac{k'\epsilon(1-\epsilon)^{j-q}\log k}{2ck} \ .
\]      
Sets $S_{q}$ for $q\ge j$ do not contribute to this probability sum.
Summing these formulas up we get that for
any 
$t\in [\frac{(1-\epsilon)^{2j}ck^2}{\log((1-\epsilon)^jk)},\frac{(1-\epsilon)^{2(j-1)}ck^2}{\log((1-\epsilon)^{j-1}k)}]$
the sum of transmission probabilities is at least
\[\frac{k'\log k}{2ck}\left((1-\epsilon)^{j}+
   \sum_{q=1}^{j-1}\epsilon(1-\epsilon)^{j-q}
  \right)=\frac{k'\log k}{2ck} (1-\epsilon)
	\ .
\]
Thus the sum of transmission probabilities for any round
$t\in[{k\epsilon},ck^2/\log k]$ is  
\begin{equation}\label{equation:sigma}
    \sigma(t) \ge \frac{k'\log k}{2ck} (1-\epsilon)
		\ .
\end{equation}
{Let us call \textit{jammed} a round in which a collision occurs. }
If the sum (\ref{equation:sigma}) is at least $(2+\epsilon)\log k$, 
\gia{which for $k$ sufficiently large is greater than or equal to 1,}
by {Lemma \ref{bound:prob}} the probability of 
having a success or no transmission in any round $t$
{(\textit{i.e.}, the probability that $t$ is not jammed)} is at most
\[((2+\epsilon)\log k+1)e^{\gdm{1-(2+\epsilon)\log k}}
\ ,
\]
which for sufficiently large $k$ is smaller than $k^{-2}$.
Hence, by the union bound over all rounds in 
$[k\epsilon,ck^2/\log k]$, 
{which for sufficiently large $k$ are less than $k^2$}, 
the probability {that there exists a} not jammed round 
in this interval is smaller {than $1$}.
{This implies that a wake-up pattern 
 in which all these rounds are jammed must exist.}

{It remains to show the value of the constant $c$ such that
$\sigma(t) \ge (2+\epsilon)\log k$. }
We have that $\frac{k'\log k}{2ck}(1-\epsilon)\ge(2+\epsilon)\log k$ for $c\le (1-2\epsilon)/(4+2\epsilon)$.
Note that, by taking a sufficiently small $\epsilon$, we can get $c$ arbitrarily close to $1/4$, i.e., $c=\frac{1}{4+\delta}$ for an arbitrary small $\delta$.
\end{proof}

\section{Algorithms with unknown contention size}\label{s:unknown-k}

In this section, we present upper bounds for the case 
when the size of the contention is
unknown. Our results are almost optimal in view of the
lower bound proved in the previous section.

We start with an algorithm for 
the case when acknowledgments are not available; then, 
in Subsection~\ref{s:unknown-k-ack},
we will show how to improve the performance of this algorithm if the stations
are allowed to switch off after a successful transmission. 

\subsection{Algorithm without acknowledgments}
\label{s:unknown-k-noack}

The following algorithm 
works even if the
stations remain active (do not switch off) after a successful transmission. 
As in Algorithm \ref{alg:suniform}, we use a transmission schedule $\cS$
(composed of $N$ binary arrays $\cS_v$)
which specifies for each active station when to transmit and when to stay silent.

The following protocol is 
executed by any station, starting from 
the time at which it is activated. It
takes as input a constant parameter $b$, the number $N$ and a transmission schedule $\cS$. 
We will show that there exists a constant $b$ such that
for any $N$ there is a schedule $\cS$ for which the latency for $k$ 
\gdm{contending} stations is $O(k^2\log N)$.

The protocol's execution mode is organized
in $16N^2$ consecutive phases, scanned by index  
\gdm{$i = 1,\ldots, 16N^2$}, each one lasting 
\gdm{$T = b \ln N$} time rounds, for $j = 1,2,\ldots,T$. 
\gdm{Recall that $N$ is known, so
	 it can be given as an input parameter to the algorithm.}
The $j$th round of phase $i$ will correspond to round
number \gdm{$(i-1)\cdot T+j$} of $v$'s local clock. The entry of $\cS$ 
corresponding to this round will be denoted $\cS_{v,i,j}$. 
Hence, station $v$ will transmit in the $j$th round of phase $i$
(or equivalently in round $i\cdot T+j$ of its local clock) if and only if 
$\cS_{v,i,j} = 1$.
\gdm{Even though the structure is similar as the one in Algorithm 
	\ref{alg:suniform}, the two algorithms will behave very differently.
Indeed, while the main feature of Algorithm \ref{alg:suniform} is a
slow increase of the frequency of transmissions, Algorithm \ref{alg:non-adaptive}
is characterized by a gradual decrease of the transmission rate, 
from one phase to the next, defined by a small polynomial function 
(see Definition \ref{sched}. For such a reason we call this protocol
\unknown\ (\textit{Small Polynomial Rate Decrease}).}

\begin{algorithm}[h]\label{protocol}
	\caption{\unknown$(b,N)$: executed by station $v$}
	\label{alg:non-adaptive}
	\begin{algorithmic}[1]

        \State{\gia{Compute $\cS_v$}}
 
		\For{$i \gets 1,2,3 \ldots, 16 N^2$}
		\For{$j \gets 1,2,3\ldots, T = b \ln N$}
		\State at round $i\cdot T+j$ of local clock transmit 
		if and only $\cS_{v,i,j} = 1$
		\EndFor
		\EndFor
	\end{algorithmic}
\end{algorithm} 

Although the total number of sets in the input schedule is \gdm{$16N^2b \ln N$}, we will prove that
the algorithm terminates within the time at which the first for-loop has reached index 
$i \le 16k^2$.
Namely, we will prove the existence of a schedule $\cS$ for which any 
station sends its message successfully in some round corresponding to a pair
$(i,j)$ such that $i \leq 16k^2$ and $1\le j\le T$. 
This guarantees that the algorithm terminates
within $16k^2 T = 16bk^2 \lN$ rounds.
The rest of the section is substantially devoted to show the existence of such a schedule $\cS$.
To this aim we proceed as follows.

We start by defining a probabilistic construction of the input schedule $\cS$ (Definition~\ref{sched}). Then, we will prove that such a random schedule guarantees a ``good'' probability of successful transmission within $O(k^2 \ln N)$ rounds for any participating station (Lemma~\ref{prob}). Finally, by a straightforward application of the probabilistic method, we argue that such a ``good'' probability guarantees the existence of a schedule $\cS$ such that for any possible instance of any number $k\leq N$ of participating stations, and any possible setting of activation times for these $k$ stations, protocol \unknown$(N;b,\cS)$ solves the contention with latency $O(k^2 \ln N)$ (Theorem~\ref{derand}).
All the results are meant to hold for $k,N$ and $b$ sufficiently large.

In the following, we will refer to the probabilistic protocol determined 
by using, as an input, the random schedule $\cS$ given in the following definition.

\begin{definition}\label{sched}
	Let $\cS$ be such that
	every $\cS_v$ is probabilistically formed as follows:
	for $1\leq i \leq 3$ and $j = 1,2,3,\ldots, b\lN$,
	we let $\Pr (\cS_{v,i,j}=1) = 1/2$;
	for $i = 4,5,6,\ldots, 16N^2$ and $j = 1,2,3,\ldots, b\lN$,
	we let $\Pr (\cS_{v,i,j}=1) = 1/\sqrt{i}$.
\end{definition}

Notice that the probability of transmission depends only on the round number of the local clock, in particular it does not depend on the ID of the transmitting station:
stations that are activated at the same time will transmit always with the same probability; while two stations $v$ and $w$ that are activated at different times may transmit according to different transmission probabilities depending on the gap between the two respective activation times. For example, if at a given round a station $v$ transmits with probability $1/\sqrt{i}$ for some $i$, another station $w$ that has been activated 
after $v$'s activation, will transmit with probability $1/\sqrt{i'}$ for some 
$i' \geq i$.

The following notation will allow us to measure time, for all the stations,
by referring to the local clock of some station $v$.
Given two integers $\alpha,\beta$, with $\alpha < \beta$,
we define a time interval with respect to $v$'s local clock, as follows:
\begin{equation}\label{interval_v}
	[\alpha,\beta]_v = \{r|\ \text{$r$ is a round of $v$'s local clock }, \alpha\le r \le \beta\ \} \ .
\end{equation}
For any station $w \not= v$, let $\Delta_{v,w}$ be the difference between the
activation times of $v$ and $w$. This value can be positive, negative or zero depending on whether $w$ has been activated after, before or at the same time of $v$, respectively.
For any round $r$ of $v$'s local clock, 
we can evaluate the sum of the transmission probabilities at this round, as follows:
\begin{equation}\label{sum_v}
	\sigma(r) = \sum_{w\in V(r)} p(r-\Delta_{v,w}) \ ,
\end{equation}
where $V(r)$ denotes the set of active stations at the $r$th round of $v$'s local clock.

Let $p(r)$ be the probability that an arbitrary station, following the protocol, 
transmits at round $r$ of its local clock. 
We can evaluate the probability that an arbitrary station $v$ transmits successfully
at the $r$th round of its activity. The following result holds 
\gdm{for every algorithm such that each station transmits with probability at most 
	$1/2$ in each slot}, and will
be used in the next subsection as well.

\begin{lemma}\label{success}
	Fix a station $v$ and let $r$ be the $r$th round of $v$'s local clock.
	If each station transmits with probability at most $1/2$, then
	the probability that $v$ transmits successfully at this round $r$ is larger than
	\[
	p(r)\cdot 4^{-\sigma(r)} \ .
	\]
\end{lemma}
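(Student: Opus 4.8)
The plan is to bound the probability that $v$ transmits successfully at round $r$ from below by $p(r)$ times the probability that \emph{no other} active station transmits at that round, and then estimate that product using the elementary inequality $1-x \ge 4^{-x}$ valid for $0\le x\le 1/2$. Concretely, since $v$'s transmission at round $r$ is independent of the transmissions of the other active stations (each station's schedule is produced independently), the probability of a successful transmission by $v$ is exactly
\[
p(r)\cdot \prod_{w\in V(r),\, w\neq v} \bigl(1 - p(r-\Delta_{v,w})\bigr).
\]

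First I would record that the hypothesis ``each station transmits with probability at most $1/2$'' means $p(r-\Delta_{v,w}) \le 1/2$ for every other active $w$, so I may apply $1-x\ge 4^{-x}$ with $x = p(r-\Delta_{v,w})$ to each factor of the product. This gives
\[
\prod_{w\neq v} \bigl(1 - p(r-\Delta_{v,w})\bigr) \;\ge\; \prod_{w\neq v} 4^{-p(r-\Delta_{v,w})} \;=\; 4^{-\sum_{w\neq v} p(r-\Delta_{v,w})} \;\ge\; 4^{-\sigma(r)},
\]
where the last step uses the definition~\eqref{sum_v} of $\sigma(r)$ together with the fact that dropping the $w=v$ term only decreases the exponent's magnitude (all summands are nonnegative). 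Multiplying through by $p(r)$ yields the claimed lower bound $p(r)\cdot 4^{-\sigma(r)}$. The strict inequality in the statement comes from the fact that $1-x > 4^{-x}$ is strict on $(0,1/2]$ whenever there is at least one other station with positive transmission probability (and if there are no other active stations, $v$ trivially succeeds with probability $p(r) > p(r)4^{-\sigma(r)}$ as long as $\sigma(r)>0$; the degenerate case $\sigma(r)=0$ can be handled by noting $v$ itself is in $V(r)$ so $\sigma(r)\ge p(r)>0$).

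The only real point requiring care is the inequality $1-x\ge 4^{-x}$ on $[0,1/2]$: this is a one-variable calculus fact (equivalently $\frac{1}{1-x}\le 4^{x}$, i.e. $-\ln(1-x)\le x\ln 4$, which holds at $x=0$ with matching derivatives there and convexity thereafter up to $x=1/2$, where both sides equal $2$). This mirrors exactly the estimate $(1-p_w)^{1/p_w} > 1/4$ used in the proof of Lemma~\ref{lemma:cici}, so I expect no obstacle — the argument is a direct, slightly more general reprise of that computation, with $\sigma(r)$ now allowed to be arbitrary rather than bounded by $2$, and with the transmission probabilities indexed by the local-clock offsets $\Delta_{v,w}$ to account for asynchronous activation.
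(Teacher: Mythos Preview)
Your proof is correct and follows essentially the same approach as the paper: write the success probability as $p(r)\prod_{w\neq v}(1-p(r-\Delta_{v,w}))$, apply the inequality $1-x\ge 4^{-x}$ for $x\in[0,1/2]$ (equivalently $(1-x)^{1/x}\ge 1/4$, exactly as in the paper), and then pass from $\sum_{w\neq v}$ to $\sigma(r)$. One tiny quibble: the inequality $1-x>4^{-x}$ is \emph{not} strict at $x=1/2$, but this does not matter since, as you also observe, the strict inequality already comes from replacing $\sum_{w\neq v}p(r-\Delta_{v,w})$ by the strictly larger $\sigma(r)$ (which includes $v$'s own positive term $p(r)$).
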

\begin{proof}
	In our algorithm every station starts transmitting with probability $1/2$ and the transmission probabilities will decrease with time. 
	So, we are in the hypothesis of the lemma.
	
	Station $v$ transmits successfully at the $r$th round of its clock, if and only if 
	$v$ transmits at this round while all the other stations stay silent. 
	Therefore, the probability stated in the lemma is 
	\begin{align*}
		p(r)\cdot \prod_{w\not =v}(1-p(r-\Delta_{v,w}))  
		&=  p(r)\cdot \prod_{w\not =v}(1-p(r-\Delta_{v,w}))^{\frac{1}{p(r-\Delta_{v,w}))} \cdot p(r-\Delta_{v,w})} \\ 
		&\geq p(r)\cdot \prod_{w\not =v}(1/4)^{p(r-\Delta_{v,w})} \;\;
		\mbox{ [because $p(r-\Delta_{v,w}) \leq  1/2$]}\\
		&>    p(r)\cdot (1/4)^{\sigma(r)} \ .
	\end{align*}	
	
\end{proof}

\gdm{Our next target is to show in Lemma \ref{sub} that there are sufficiently many 
rounds $r$ such that $\sigma(r) < 1$. To this aim we need first the following upper
bound on $s(r) = p(1) + p(2) + \cdots + p(r)$, the sum of the transmission probabilities 
of an arbitrary station up to the $r$th round of its local clock.
}

\begin{lemma}\label{sum}
	We have $s(r) < 2\sqrt{r b\lN}$.
\end{lemma}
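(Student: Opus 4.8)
The claim is that $s(r) = \sum_{\ell=1}^{r} p(\ell) < 2\sqrt{rb\ln N}$, where $p(\ell)$ is the per-slot transmission probability dictated by Definition~\ref{sched}. The plan is to bound the sum directly using the explicit form of $p(\ell)$. Recall that each phase has length $T = b\ln N$, and in phase $i$ every slot carries transmission probability $1/\sqrt{i}$ (for $i \ge 4$), with probability $1/2$ in phases $1,2,3$. So $s(r)$ is, up to the incomplete last phase, a sum of the form $T \sum_{i} 1/\sqrt{i}$.

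**Main steps.** First I would write $r$ in terms of phases: if $r$ lies in phase $m$ (so $m \le r/T + 1$, equivalently $m \le \lceil r/T\rceil$), then
\[
s(r) \le \sum_{i=1}^{m} T \cdot p_i \le T\left( \tfrac{3}{2} + \sum_{i=1}^{m} \tfrac{1}{\sqrt{i}}\right),
\]
where $p_i$ denotes the common transmission probability of slots in phase $i$ (so $p_i = 1/\sqrt i$ for $i\ge 4$, and I crudely bound the first three phases' contribution $3\cdot\tfrac12 \cdot T$ by absorbing it, or more simply note $1/2 \le 1/\sqrt{i}$ fails but $3T/2$ is a lower-order additive term). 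Second, I would apply the standard integral estimate $\sum_{i=1}^{m} 1/\sqrt{i} \le 1 + \int_1^m x^{-1/2}\,dx = 2\sqrt{m} - 1 < 2\sqrt{m}$. Third, substitute $m \le r/T + 1$ to get $s(r) \le T \cdot 2\sqrt{m} + O(T) \le 2T\sqrt{r/T + 1} + O(T) = 2\sqrt{Tr + T^2} + O(T)$. Fourth, since $T = b\ln N$ and we work in the regime $r \ge T$ (or at least $r$ not too small — in the relevant range $r = \Omega(k^2 \ln N) \gg T$), the terms $T^2$ and the additive $O(T)$ are lower-order compared to $Tr$, so $2\sqrt{Tr + T^2} + O(T) < 2\sqrt{Tr}\cdot(1+o(1)) $, and with a slightly careful accounting this is $\le 2\sqrt{rb\ln N}$; alternatively one argues the bound holds for all $r$ of interest by noting the slack in $2\sqrt{rb\ln N}$ versus the leading term $2\sqrt{(r/T)\,T^2} = 2\sqrt{rT}$ already matches, and the lower-order terms are swallowed for $N$ (hence $T$) large as assumed at the end of the section.

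**Where the difficulty lies.** There is essentially no deep obstacle here; the only mild care needed is bookkeeping of the incomplete final phase and the first three phases of probability $1/2$, and making sure the lower-order terms ($+O(T)$ from the first phases, $+T^2$ under the square root from the phase-counting ``$+1$'') genuinely fit inside the stated bound $2\sqrt{rb\ln N}$ rather than forcing a larger constant. I expect this is handled by the blanket assumption that $N$, $k$, $b$ are sufficiently large and that the lemma is only invoked for $r$ in the range $[\,k\epsilon,\ 16k^2 b\ln N\,]$ or similar, where $r \gg b\ln N$; in that range $m = \Theta(r/T)$ and the ``$+1$'' and first-phase corrections are negligible. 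So the real content is just: group into phases, pull out the factor $T$, use $\sum 1/\sqrt i \le 2\sqrt m$, and substitute $m \approx r/T$.
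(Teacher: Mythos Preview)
Your approach is essentially the paper's: group into phases, pull out the common factor $T$, and bound $\sum_i 1/\sqrt{i}$ by an integral. Two small remarks. First, your worry that ``$1/2 \le 1/\sqrt{i}$ fails'' is misplaced: in fact $1/\sqrt{i} \ge 1/2$ for $i=1,2,3$, so $\min\{1/2,1/\sqrt{i}\}\le 1/\sqrt{i}$ for every $i\ge 1$ and no separate additive $3T/2$ is needed. Second, the paper avoids your asymptotic fudge entirely: it bounds the first three phases by $2T$, then compares $\sum_{i=4}^{\lceil r/T\rceil} 1/\sqrt{i}$ with $\int_2^{r/T} x^{-1/2}\,dx = 2\sqrt{r/T}-2\sqrt{2}$; the resulting correction $2T(1-\sqrt{2})<0$ swallows the additive $2T$, yielding the clean strict inequality $s(r)<2\sqrt{rT}$ without any appeal to $r\gg T$.
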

\begin{proof}
	By Definition~\ref{sched}, it follows that 
	in every phase $i\ge 1$ 
	any station executing the protocol 
	transmits with probability 
	$\Pr(\cS_{v,i,j} = 1) = \min \{1/2, 1/\sqrt{i}\}$,
	for all $j = 1,2,\ldots,T$, \textit{i.e.} in all $T=b\lN$ rounds of phase $i$.
	Therefore, we have: 
\begin{eqnarray*} 
	s(r) &\le & T \sum_{i=1}^{\lceil r/T \rceil} \min\{1/2, 1/\sqrt{i}\}\\
	     &<& 2T + T\sum_{i=4}^{\lceil r/T \rceil} \frac{1}{\sqrt{i}}\\
    	 &<& 2T + \dk{T\sum_{i=4}^{\lceil r/T \rceil} \int_{i-2}^{i-1} \frac{1}{\sqrt{x}}\, dx}\\
    	 &<& 2T + \dk{T\int_2^{r/T} \frac{1}{\sqrt{x}}\, dx}\\
	     &=& 2T + \dk{T\left(2\sqrt{\frac{r}{T} } - 2\sqrt{2}\right)}\\
	     &<& 2\sqrt{rT} \ ,
\end{eqnarray*}%
which concludes the proof.	
\end{proof}

 The following lemma states that, in any execution
of the protocol by any station $v$, in at least half of the 
first $\Omega(k^2\ln N)$ rounds, counted
since $v$'s activation,
the sum of transmission probabilities is small,
namely less than 1. 

\begin{lemma}\label{sub}
	Let $v$ be an arbitrary station and fix $i = 16k^2$.
	There are at least \gdm{$(i/2)\cdot b\lN $} rounds \gdm{$r\in [1,i\cdot b\lN]_v$} 
	such that $\sigma(r) < 1$.
\end{lemma}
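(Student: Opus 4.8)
The plan is to bound the aggregate $\sum_{r=1}^{i b\lN}\sigma(r)$ over the whole window and then apply a Markov-type counting step. First I would swap the order of summation in~\eqref{sum_v}:
\[
\sum_{r=1}^{i b\lN}\sigma(r)=\sum_{r=1}^{i b\lN}\sum_{w\in V(r)}p(r-\Delta_{v,w})=\sum_{w}\ \sum_{\substack{1\le r\le i b\lN\\ w\in V(r)}}p(r-\Delta_{v,w}),
\]
where the outer sum runs over the at most $k$ stations that ever become active (one of which is $v$ itself).

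The crux is to show that each inner sum is at most $s(i b\lN)$. Fix a station $w$. As $r$ runs over the rounds of $v$'s clock inside $[1,i b\lN]$ at which $w$ is active, the shifted index $r'=r-\Delta_{v,w}$ runs over a block of at most $i b\lN$ consecutive rounds of $w$'s own clock, all of them at least $1$ (a station transmits only after its activation). By Definition~\ref{sched} the transmission probability $p(\cdot)$ is non-increasing in the local round number, so the sum of $p$ over any block of $L\le i b\lN$ consecutive rounds starting at a round $\ge 1$ is at most the sum over the initial block $\{1,\ldots,i b\lN\}$, that is, at most $s(i b\lN)$. (This bound is comfortably loose when $w$ was activated long before $v$, in which case the relevant block of $p$-values is a tail of the sequence; and when $w$ was activated after $v$ the block is simply shorter.) Hence $\sum_{r=1}^{i b\lN}\sigma(r)\le k\,s(i b\lN)$.

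Next I would invoke Lemma~\ref{sum} at $r=i b\lN$: $s(i b\lN)<2\sqrt{i b\lN\cdot b\lN}=2\sqrt{i}\,b\lN$. Substituting $i=16k^2$ gives $\sqrt{i}=4k$, so $s(i b\lN)<8k\,b\lN$ and therefore $\sum_{r=1}^{i b\lN}\sigma(r)<8k^2\,b\lN=(i/2)\,b\lN$. If $M$ denotes the number of rounds $r\in[1,i b\lN]_v$ with $\sigma(r)\ge 1$, then $M\le\sum_{r=1}^{i b\lN}\sigma(r)<(i/2)\,b\lN$, so the number of rounds with $\sigma(r)<1$ is at least $i b\lN-(i/2)\,b\lN=(i/2)\,b\lN$, which is exactly the claim.

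The only genuinely delicate point is the per-station bound in the second paragraph: because the stations' local clocks are shifted with respect to one another, one has to argue that \emph{every} active station's contribution over the window is dominated by the head block $s(i b\lN)$, and this is precisely where monotonicity of $p$ (guaranteed by Definition~\ref{sched}, since $1/\sqrt{4}=1/2$) is used. Once that is in place, the rest is just the substitution $i=16k^2$ together with the elementary averaging inequality.
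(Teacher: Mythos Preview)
Your proof is correct and follows essentially the same route as the paper: bound $\sum_{r=1}^{T i}\sigma(r)\le k\,s(Ti)$, apply Lemma~\ref{sum} with $i=16k^2$ to get the sum strictly below $(i/2)\,b\lN$, and finish with the Markov-type count. Your second paragraph actually makes explicit the monotonicity argument for shifted stations that the paper leaves implicit in the one-line inequality $\sum_r\sigma(r)\le k\cdot s(Ti)$, so the presentation is, if anything, a bit more careful.
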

\begin{proof}
	Fix a station $v$. Since at most $k$ stations can be active in 
	any round \gdm{$r\in [1,i\cdot b\lN ]_v$} we have:
	\[
	\sum_{r=1}^{T\cdot i} \sigma(r) \leq k \cdot s(T\cdot i) \ .
	\]
	By Lemma~\ref{sum},
	\begin{eqnarray*}
		k \cdot s(T\cdot i)
		&<& k \cdot 2\sqrt{T^2 i} 
		\ = \ 
		k\cdot 2 T \sqrt{i} 
		\ = \ k \cdot 2T\sqrt{16k^2} 
		\ = \ 8 k^2 T 
		\ < \ \frac{T\cdot i}{2} \ .
	\end{eqnarray*}
	
	Therefore, 
	in at least \gdm{$(i/2) \cdot b \lN $} rounds 
	\gdm{$r\in [1,i\cdot b\lN]_v$}  the sum $\sigma(r)$ must be smaller than 1.
\end{proof}

Now we can derive a lower bound on the probability of having a
successful transmission for an arbitrary station.

\begin{lemma}\label{prob}
	For a sufficiently large constant $b$,
	Protocol \unknown$(N;b,\cS)$ allows any station $v$ to transmit 
	successfully within the first
	$16bk^2\lN$ rounds of its local clock with probability at least 
	$
	1 - N^{-bk/2}.
	$
\end{lemma}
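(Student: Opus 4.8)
The plan is to combine the previous lemmas to lower bound the probability that a given station $v$ transmits successfully somewhere in the window $[1, 16bk^2\ln N]_v$. First I would fix $i = 16k^2$ and invoke Lemma~\ref{sub} to obtain a set $G$ of at least $(i/2)\cdot b\ln N = 8bk^2\ln N$ rounds $r\in[1,i\cdot b\ln N]_v$ with $\sigma(r) < 1$. On each such round, since every station transmits with probability at most $1/2$ (Definition~\ref{sched}), Lemma~\ref{success} gives that $v$ transmits successfully with probability larger than $p(r)\cdot 4^{-\sigma(r)} > p(r)/4$. It remains to control $p(r)$ from below on the rounds of $G$: for $r$ in phase $i' \le 16k^2$ we have $p(r) = \min\{1/2, 1/\sqrt{i'}\} \ge 1/\sqrt{16k^2} = 1/(4k)$. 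Hence on every round $r\in G$, the (conditional) probability that $v$ succeeds at round $r$ is at least $\frac{1}{4}\cdot\frac{1}{4k} = \frac{1}{16k}$.

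Next I would turn these per-round success probabilities into a statement about success \emph{within} the window. The subtlety is that the events ``$v$ succeeds at round $r$'' across different $r$ are not independent. The clean way around this is to note that if $v$ has \emph{not} succeeded by round $r$, then all other stations were silent at earlier rounds of $G$ only in so far as it matters, but actually the simplest route is: condition on the transmission bits of all stations $w\neq v$, observe that the rounds in $G$ where all $w\neq v$ are silent form a (random) subset $G' \subseteq G$, and on those rounds $v$ succeeds iff it transmits; these transmission bits of $v$ are independent Bernoulli$(p(r))$. Rather than chase this, I would instead use the standard trick of bounding the probability of \emph{total failure} by a product over the rounds of $G$: the probability that $v$ fails to transmit successfully in \emph{any} round of $G$ is at most $\prod_{r\in G}\bigl(1 - \Pr(\text{$v$ succeeds at }r \mid \text{no success before})\bigr)$; since each such conditional factor is $\le 1 - \frac{1}{16k}$ (the bound $1/(16k)$ from Lemma~\ref{success} holding conditionally, because $\sigma(r)<1$ bounds the \emph{unconditional} interference and conditioning on ``$v$ not yet successful'' only removes some histories), we get failure probability at most $(1 - \frac{1}{16k})^{|G|} \le (1-\frac{1}{16k})^{8bk^2\ln N} \le \exp\bigl(-\frac{8bk^2\ln N}{16k}\bigr) = \exp\bigl(-\frac{bk\ln N}{2}\bigr) = N^{-bk/2}$.

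Therefore the probability that $v$ succeeds within the first $16bk^2\ln N = i\cdot b\ln N$ rounds of its local clock is at least $1 - N^{-bk/2}$, which is exactly the claim, valid for $b$ (and $k,N$) sufficiently large so that the various ``sufficiently large'' hypotheses of Lemmas~\ref{sum}--\ref{sub} hold and so that $16bk^2\ln N$ does not exceed the total schedule length $16N^2 b\ln N$ (true since $k\le N$).

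The main obstacle I anticipate is making the ``product over rounds of $G$'' bound rigorous given the dependencies: one must be careful that the factor $1/(16k)$ is a valid lower bound on the conditional probability of first success at round $r$ given no success in earlier rounds of $G$, not merely on the unconditional probability. The key observation that makes this work is that Lemma~\ref{success}'s hypothesis $\sigma(r) < 1$ is a deterministic property of the wake-up pattern (it only uses that $\le k$ stations are active and the fixed transmission probabilities), so the interference bound $4^{-\sigma(r)} > 1/4$ holds regardless of what we condition on; and conditioning on ``$v$ did not yet succeed'' is equivalent to conditioning on an event in the sigma-algebra of the other stations' earlier bits and $v$'s earlier bits, which is independent of $v$'s bit at round $r$ and can only shrink the relevant interference further. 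I would either spell this out via a union-bound-free martingale-style argument or, more simply, partition $G$ and argue phase by phase; the write-up should pick whichever is cleanest.
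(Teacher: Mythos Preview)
Your proposal is correct and follows essentially the same route as the paper: fix $i=16k^2$, use Lemma~\ref{sub} to get $\ge (Ti)/2$ rounds with $\sigma(r)<1$, apply Lemma~\ref{success} together with $p(r)\ge p(Ti)=1/(4k)$ to get a per-round success probability $>1/(16k)$, and then take the product to bound the failure probability by $(1-1/(16k))^{Ti/2}\le N^{-bk/2}$.

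The one point where you over-complicate things is the dependency discussion. Recall that \unknown\ is analyzed in the \emph{no-acknowledgment} setting (Section~\ref{s:unknown-k-noack}): stations never switch off, so for a fixed wake-up pattern the event ``$v$ succeeds at round $r$'' is determined by the bits $\cS_{w,i_w(r),j_w(r)}$ for all active $w$, and these bit families are disjoint across distinct rounds $r$. Hence the success events at different rounds are genuinely independent, and the product bound is immediate---no martingale or conditional argument is needed. The paper accordingly just writes $\bigl(1-p(Ti)/4\bigr)^{Ti/2}$ without further comment.
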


\begin{proof}
	Let $v$ be an arbitrary station and fix $i = 16k^2$. 
	By Lemma~\ref{sub} there is a set $I$ of $(T\cdot i)/2$ rounds, 
	within the first $T\cdot i$ rounds of $v$'s local clock, 
	such that $\sigma(r) < 1$ for every $r\in I$. 
	
	Recalling Lemma \ref{success}, the probability that
	$v$ transmits successfully in any round $r \in I$ is larger than
	\begin{eqnarray*}
	p(r)\cdot (1/4)^{\sigma(r)} 
		&>& p(r)/4 \;\;\;\;\;\;\;\;\;\;\;\;\mbox{ [because $\sigma(r) < 1$ for every $r\in I$]}\\
		&\geq& p(T\cdot i)/4 \ ,
	\end{eqnarray*}
	where the last inequality follows from the observation that $v$
	transmits with the lowest probability in the last phase.
	
	Therefore, the probability that $v$ does not manage to transmit successfully within the first $T\cdot i$ rounds of its activity is at most
	\begin{eqnarray*}
		\lefteqn{\left( 1 - \frac{p(\gdm{T\cdot i})}{4} \right)^{\frac{T\cdot i}{2}}} \\
		&=& \left(1 - \frac{1}{4 \sqrt{16k^2}} \right)^{\frac{T\cdot i}{2}} 
		= \left(1 - \frac{1}{16 k} \right)^{\frac{b\lN 16k^2}{2}} \\
		&=& \left(1 - \frac{1}{16 k} \right)^{\frac{16k\cdot b\lN k}{2}}\\
		&<& e ^{-b\lN k/2} \\
		&=& N ^{-bk/2} \ ,
	\end{eqnarray*}%
which concludes the proof.
\end{proof}

Finally, 
Theorem~\ref{derand} establishes {the upper bound on the maximum latency} 
of our algorithm and represents the main result of this section.

\remove{
\begin{theorem}\label{derand}
	There exists some constant $b>0$ and a (deterministic) schedule $\cS$ 
	such that algorithm \unknown $(N;b,\cS)$ allows 
	any station $v$ to transmit successfully within 
	$O(k^2\ln N)$ rounds. This implies channel utilization
	at least $\Omega(1/(k\log N))$.
\end{theorem}
}

\begin{proof}[Proof of Theorem~\ref{derand}]
	For every station $v$, let us consider the first 
	\gdm{$T' = 16bk^2\lN$} rounds of its activity.
	During the interval $[1,T']_v$, other stations can be activated. Relatively to its
	activation time, any other station can take a total of $T'+1$ configurations: 
	one corresponds to the case the station is not activated during the interval $[1,T']_v$
	and the other $T'$ correspond to the possible activation times that it can have within that interval.
	
	For every fixed station $v$, out of $N$, there are up to $k-1$ other stations, out of the remaining $N-1$, that can become active within the interval $[1,T']_v$. Therefore, the total number of configurations is less than
	$N\cdot \binom{N(T'+1)}{k}$. 
	
	Recalling Lemma~\ref{prob} and taking the union bound over all possible configurations, we get that the probability that there exists a station $v$ that does not transmit successfully, within
	the first $T' = 16bk^2\lN$ rounds of its activity, is less than 
\[
\hspace*{-22em}
		N\cdot \binom{N(T'+1)}{k} \cdot N ^{-bk/2} \ \ \leq
\]
\[
\hspace*{5em}
\leq \ \ 
		\exp [\ln N + k\ln N +
		k\ln((16bk^2\ln N + 1)e/k) -
		(bk/2) \ln N ] 
		\ \ <  \ \ 1 \ ,
\]
	the last inequality holds for a sufficiently large 
	constant~$b$.
	
	Hence, there exists a constant $b$ and a schedule $\cS$ 
	such that the protocol allows any station to transmit successfully within $T'=O(k^2\ln N)$ rounds from its activation.
\end{proof}

\subsection{Algorithm with acknowledgments}
\label{s:unknown-k-ack}

In this section, we show how to improve the time performance 
of our algorithm \unknown\ taking advantage of the 
acknowledgements upon successful transmissions 
\gdm{and under the additional assumption that 
	$\ln\ln N = O(\ln k)$}.
The decrease of the contention size as stations successfully transmit, 
allows us to use slightly larger transmission probabilities
in the probabilistic construction of $\cal S$ (Definition~\ref{schedack}),
which will in turn translate into shorter schedules.

The following protocol, called 
\unknownack\ (\textit{Small Polynomial Rate Decrease with Acknowledgements}), 
is a slight modification of \unknown.
We will show that there exists an input parameter $c$ such that
for any $N$ and $k$ sufficiently large, 
\gdm{with $\ln\ln N = O(\ln k)$},  
there is a schedule $\cS$ for which our protocol
\unknownack$(N;c,\cS)$ guarantees a
latency of $O(k^2\log N/\log k)$ on any set of $k$ 
contending stations.

\begin{algorithm}[ht]\label{protocolack}
	\caption{\unknownack$(N;c)$: executed by station $v$}
	\label{alg:non-adaptive-ack}
	\begin{algorithmic}[1]

        \State{\gia{Compute $\cS_v$}}
 
		\For{$i = 1,2,3 \ldots, c N^2/\lN$}
		\For{$j=1,2,3\ldots, T=\lceil\ln N\rceil$}
		\State at round $i\cdot T+j$ of local clock transmit 
		if and only  $v\in S_{v,i,j}$
		\EndFor
		\EndFor
	\end{algorithmic}
\end{algorithm}

We proceed similarly as in the analysis of algorithm \unknown.
We will prove that there exists a constant 
$c$ such that
the algorithm terminates within the time at which the first for-loop 
has reached index $i \le ck^2/\ln k$,
which guarantees that it terminates within $O(k^2 \lN/\ln k)$ rounds \dk{as long as $\ln\ln N = O(\ln k)$}.
We start with the probabilistic construction of the input schedule $\cS$.
Then, we prove that such a random schedule guarantees high probability 
of having a successful transmission within $O(k^2 \ln N/\ln k)$ rounds 
(Lemma~\ref{suback}). 
Finally, we show that such a probability guarantees the existence of a schedule such that for any possible wake-up pattern of $k$ contending stations, 
protocol \unknownack$(N;c,\cS)$ solves the contention with latency 
$O(k^2 \ln N/\ln k)$ (see Theorem~\ref{derandack}).
 

\begin{definition}\label{schedack}
	Let $\cS$ be such that
every $\cS_v$ is probabilistically formed as follows:
for $1\leq i \leq 3$ and $j = 1,2,3,\ldots,\lN$,
we let $\Pr (\cS_{v,i,j}=1) = 1/2$;
for $i = 4,5,6,\ldots, cN^2/\lN$ and $j = 1,2,3,\ldots,\lN$,
we let $\Pr (\cS_{v,i,j}=1) = \sqrt{(\ln i)/i}$.	
\end{definition}

In the following, we will refer to the probabilistic protocol determined by using the random schedule
$\cS$ given in Definition~\ref{schedack} as an input for Protocol \unknownack.

As already observed in the analysis of protocol \unknown, the probability of transmission does not depend on the ID of the transmitting station. 
We use also the same notation: $p(r)$ is the probability that an arbitrary station following our probabilistic protocol 
transmits at round $r$ of its local clock and $s(r) = p(1) + p(2) + \cdots + p(r)$ is the sum of transmission probabilities for such an arbitrary station 
up to the $r$th round of its clock.  


The following result, analogous to Lemma~\ref{sum}, establishes an upper bound
on $s(r)$ for any station.

\begin{lemma}\label{smack}
   $s(r) < 2\sqrt{Tr \log r}$.
\end{lemma}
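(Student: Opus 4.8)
The plan is to mimic the proof of Lemma~\ref{sum}, replacing the transmission probability $1/\sqrt{i}$ used in protocol \unknown\ by the slightly larger value $\sqrt{(\ln i)/i}$ prescribed in Definition~\ref{schedack}, and then bounding the resulting sum by an integral. First I would note that by Definition~\ref{schedack}, in every phase $i\ge 1$ any station executing \unknownack\ transmits in each of the $T=\lceil\ln N\rceil$ rounds of that phase with probability $\min\{1/2,\sqrt{(\ln i)/i}\}$. Hence, writing $r$ in terms of the phase it falls in,
\[
s(r) \;\le\; T\sum_{i=1}^{\lceil r/T\rceil}\min\left\{\tfrac12,\sqrt{\tfrac{\ln i}{i}}\right\}
\;<\; 3T \;+\; T\sum_{i=4}^{\lceil r/T\rceil}\sqrt{\tfrac{\ln i}{i}}\ ,
\]
where the first few terms (those with $\sqrt{(\ln i)/i}>1/2$, a bounded range) are absorbed into the constant $3T$.

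The next step is to compare the sum $\sum_{i\ge 4}\sqrt{(\ln i)/i}$ with the integral $\int \sqrt{(\ln x)/x}\,dx$. Since $x\mapsto\sqrt{(\ln x)/x}$ is eventually decreasing, each term $\sqrt{(\ln i)/i}$ is at most $\int_{i-1}^{i}\sqrt{(\ln x)/x}\,dx$ (shifting indices as in Lemma~\ref{sum}), so
\[
T\sum_{i=4}^{\lceil r/T\rceil}\sqrt{\tfrac{\ln i}{i}} \;<\; T\int_{2}^{r/T}\sqrt{\tfrac{\ln x}{x}}\,dx \;=\; T\cdot\tfrac23\Big[(\ln x)^{3/2}\Big]_{2}^{r/T} \;<\; \tfrac23\,T\,(\ln(r/T))^{3/2}\ .
\]
Using $\ln(r/T)\le\ln r$ and then $(\ln r)^{3/2}=\sqrt{\ln r}\cdot\ln r$, together with the fact that for $r$ sufficiently large the additive term $3T$ is dominated, one gets $s(r) < 2T\sqrt{\ln r}\cdot\sqrt{?}$ — and here the target bound $2\sqrt{Tr\log r}$ suggests that the intended comparison is not purely with $(\ln r)^{3/2}$ but rather uses $r\gg T$ so that $T(\ln r)^{3/2}$ is much smaller than $\sqrt{Tr\log r}=\sqrt{T\log r}\cdot\sqrt r$. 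Concretely, since $T=O(\ln N)$ and $r$ ranges up to $\Theta(k^2)$ times $T$, we have $\sqrt r \gg T (\ln r)$, so $\tfrac23 T(\ln r)^{3/2}=\tfrac23\sqrt{T\ln r}\cdot\sqrt{T\ln r}\cdot\tfrac1{\sqrt{T}}\cdot\sqrt{?}$; cleaner is to observe directly $\tfrac23 T(\ln r)^{3/2}\le \sqrt{Tr\ln r}$ whenever $r\ge \tfrac49 T\ln r$, which holds for all relevant $r$. Adding the $O(T)$ slack, the inequality $s(r)<2\sqrt{Tr\log r}$ follows.

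The main obstacle, and the point requiring care, is exactly this last comparison: one must confirm that the range of $r$ for which the lemma is invoked (namely $r=\Theta(k^2\ln N/\ln k)$, well beyond $T$) makes the integral term $T(\ln r)^{3/2}$ genuinely smaller than $\sqrt{Tr\ln r}$, and that the constant $2$ (rather than a slightly larger one) survives after folding in the $O(T)$ correction from the small-$i$ terms. This is analogous to — and slightly more delicate than — the corresponding step in Lemma~\ref{sum}, because of the extra $\sqrt{\ln i}$ factor; but it is ultimately a routine estimate valid for $N$ and $k$ sufficiently large, as assumed throughout.
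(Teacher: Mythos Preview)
There is a genuine computational error in your integral step. You write
\[
\int_{2}^{r/T}\sqrt{\tfrac{\ln x}{x}}\,dx \;=\; \tfrac{2}{3}\Big[(\ln x)^{3/2}\Big]_{2}^{r/T}\ ,
\]
but differentiating the right-hand side gives $\dfrac{\sqrt{\ln x}}{x}$, not $\sqrt{\dfrac{\ln x}{x}}=\dfrac{\sqrt{\ln x}}{\sqrt{x}}$. The integrand you need behaves like $x^{-1/2}$ up to a slowly varying factor, so its integral grows like $\sqrt{r/T}$, not like $(\ln r)^{3/2}$. This is precisely why your subsequent comparison of $T(\ln r)^{3/2}$ with $\sqrt{Tr\ln r}$ felt both suspiciously easy and awkwardly dependent on the range of~$r$: the bound you derived is far too small to be correct.

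The fix is simple and is exactly what the paper does: instead of integrating $\sqrt{(\ln x)/x}$ directly, bound $\sqrt{\ln i}\le\sqrt{\ln r}$ uniformly over the summation range (since $i\le\lceil r/T\rceil\le r$) and pull that factor out. What remains is $T\sqrt{\ln r}\sum_{i\ge 4}^{\lceil r/T\rceil} i^{-1/2}$, which is bounded by $T\sqrt{\ln r}\cdot 2(\sqrt{r/T}-\sqrt{2})$ via the elementary integral $\int x^{-1/2}\,dx=2\sqrt{x}$. Combining with the $2T$ from the small-$i$ terms yields $s(r)<2\sqrt{Tr\ln r}$ directly, with no need for any side condition on the size of~$r$.
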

\begin{proof}
		By Definition~\ref{schedack}, it follows that 
	in every phase $i\ge 1$ 
	any station executing the protocol 
	transmits with probability 
	$\Pr(\cS_{v,i,j} = 1) = \min \{1/2, \sqrt{\ln i / i}\}$,
	for all $j = 1,2,\ldots,T$, \textit{i.e.} in all $T=\lN$ rounds of phase $i$.
	Therefore, we have: 
	\begin{eqnarray*}
		s(r) &\le & T \sum_{i=1}^{\lceil r/T \rceil} \min\{1/2, \sqrt{\ln i/i}\}\\
		&<& 2T + T\sum_{i=4}^{\lceil r/T \rceil} \sqrt{\frac{\ln i}{i}}\\
		&<& 2T + T\sqrt{\ln r}\sum_{i=4}^{\lceil r/T \rceil} \sqrt{\frac{1}{i}}\\		
		&<& 2T + T\sqrt{\ln r}\gia{\int_2^{r/T} \sqrt{\frac{1}{x}}\, dx}\\
		&=& 2T + T\sqrt{\ln r}\gia{\left(2\sqrt{\frac{r}{T} } - 2\sqrt{2}\right)}\\
		&<& 2\sqrt{Tr\ln r } \ ,
	\end{eqnarray*}%
	which concludes the proof.
\end{proof}

Recall from the previous subsection the notation that allows us to
measure time with respect to the local clock af any fixed station $v$ 
(see (\ref{interval_v}) and (\ref{sum_v})). We can
prove that in at least half of the the first $\Omega(k^2\lN/\ln k)$ rounds of  
any execution of the protocol by any station $v$, 
the sum of transmission probabilities is $O(\ln k)$.

\begin{lemma}\label{sumack}
	Let $v$ be an arbitrary station and fix $i = ck^2/\ln k$, 
	\gdm{for a sufficiently large input constant $c$}.
    There are at least $\lN \cdot (i/2)$ rounds $r\in [1, \lN\cdot i]_v$ 
	such that $\sigma(r)\leq \gdm{4\sqrt{2}  \ln k}$, \dk{as long as $\ln\ln N = O(\ln k)$}.
\end{lemma}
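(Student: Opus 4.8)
The plan is to mimic the proof of Lemma~\ref{sub}, substituting the sharper tail bound of Lemma~\ref{smack} for that of Lemma~\ref{sum} and carefully carrying the extra $\sqrt{\ln(\cdot)}$ factor. Fix the reference station $v$, set $T=\lceil\ln N\rceil$ and $i=ck^2/\ln k$. As in Lemma~\ref{sub}, the starting point is the averaging inequality
\[
\sum_{r=1}^{T\cdot i}\sigma(r)\ \le\ k\cdot s(T\cdot i),
\]
which holds because at most $k$ stations are ever active (switching off upon acknowledgement can only decrease this number, so it does not affect the present bound — acknowledgements enter only through the slightly larger transmission probabilities of Definition~\ref{schedack}), and because the per-slot transmission probability $p(\cdot)$ is at most $1/2$ and non-increasing after the first three phases, so each active station contributes at most $s(T\cdot i)$ over any window of $T\cdot i$ consecutive rounds of its local clock.

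First I would plug in Lemma~\ref{smack}, $s(T\cdot i)<2\sqrt{T\cdot(T\cdot i)\cdot\ln(T\cdot i)}=2T\sqrt{i\ln(T\cdot i)}$, to get $\sum_{r=1}^{T\cdot i}\sigma(r)<2kT\sqrt{i\ln(T\cdot i)}$. Now $T\cdot i=ck^2\ln N/\ln k$ (up to the ceiling in $T$), so $\ln(T\cdot i)=\ln c+2\ln k+\ln\ln N-\ln\ln k$. This is exactly where the hypothesis $\ln\ln N=O(\ln k)$ is used: it yields $\ln(T\cdot i)\le 2c\ln k$ for all sufficiently large $k$, provided $c$ is chosen large enough relative to the constant hidden in $\ln\ln N=O(\ln k)$. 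Combining this with $\sqrt{i}=k\sqrt{c/\ln k}$ gives
\[
\sum_{r=1}^{T\cdot i}\sigma(r)\ <\ 2kT\cdot k\sqrt{\tfrac{c}{\ln k}}\cdot\sqrt{2c\ln k}\ =\ 2\sqrt{2}\,c\,Tk^2\ =\ 4\sqrt{2}\ln k\cdot\tfrac{T\cdot i}{2}.
\]

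I would then finish with a Markov-type averaging argument: if strictly fewer than $(T\cdot i)/2$ of the rounds $r\in[1,T\cdot i]_v$ satisfied $\sigma(r)\le 4\sqrt{2}\ln k$, then strictly more than $(T\cdot i)/2$ of them would have $\sigma(r)>4\sqrt{2}\ln k$, forcing $\sum_{r=1}^{T\cdot i}\sigma(r)>4\sqrt{2}\ln k\cdot(T\cdot i)/2$ and contradicting the displayed bound; hence at least $(T\cdot i)/2=\lN\cdot(i/2)$ rounds of $[1,T\cdot i]_v$ satisfy $\sigma(r)\le 4\sqrt{2}\ln k$. The only genuinely delicate step is the second one: making the dependence of the constant $c$ on the $O$-constant in $\ln\ln N=O(\ln k)$ explicit, and verifying that all the ``sufficiently large $k$'' slack — the ceiling in $T$, the lower-order terms $\ln c$ and $-\ln\ln k$, and the strict-versus-weak inequalities — is absorbed without upsetting the final constant $4\sqrt{2}$. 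Everything else is a routine transcription of the argument already carried out for the no-acknowledgement algorithm.
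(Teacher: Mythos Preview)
Your proof is correct and follows essentially the same route as the paper: the same averaging inequality $\sum_{r}\sigma(r)\le k\cdot s(Ti)$, the same application of Lemma~\ref{smack}, the same use of $\ln\ln N=O(\ln k)$ to bound $\ln(Ti)\le 2c\ln k$, and the same final Markov-type counting argument. Your write-up is in fact slightly more explicit than the paper's about why switching off does not spoil the averaging step and about the absorption of lower-order terms into the constant~$c$.
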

\begin{proof}
	Fix a station $v$ \gdm{and, recalling the assumption that $\ln\ln N = O(\ln k)$, let $d$ be the smallest constant such that
    $\ln\ln N < d\ln k$.}
	 Since at most $k$ stations can be active in 
	any round $r\in [1,T\cdot i]_v$, we have:
	\[
	\sum_{r=1}^{T\cdot i} \sigma(r) \leq k \cdot s(T\cdot i) \ .
	\]
	By Lemma~\ref{smack},
	\begin{eqnarray*}
	k \cdot s(T\cdot i) 
		&<& k \cdot 2\sqrt{T(T\cdot i)\ln(T\cdot i)} \\
	    &=& 2kT\sqrt{i\cdot \ln(T\cdot i)} \\	
	    &=& 2kT\sqrt{i\cdot \ln(\ln N\cdot ck^2/\ln k)} \\
	    &\le& \gdm{2kT\sqrt{i\cdot(\ln\ln N + \ln(ck^2) )} }\\
	    &<& \gdm{2kT\sqrt{i\cdot(d\ln k + \ln(ck^2) )} }\\	  
	    &<& \gdm{2kT\sqrt{i\cdot(c\ln k + c\ln k )} \text{ for a suffici}}\\	   
	    &=& \gdm{2kT\sqrt{2ic\ln k } }\\	   	
	    &=& \gdm{2kT\sqrt{2 c^2 k^2 } }\\	           
		&=& \gdm{2 \sqrt{2}c k^2 T } \\		
		&=& \gdm{\frac{T\cdot i}{2}\cdot 4\sqrt{2}\ln k} \ .
	\end{eqnarray*}
	
	Therefore, 
	in at least $(T\cdot i)/2 =  \lN \cdot(i/2)$ rounds $r\in [1, \lN\cdot i]_v$ the sum $\sigma(r)$ must be smaller than \gdm{$4\sqrt{2}\ln k$}.
\end{proof}

\dk{For the purpose of simplifying the remainder of this technical analysis, let us fix a temporary variable} $P = N^{-4k}$. The next target is to show that the probability
that a station $v$ transmits successfully within the 
first $\Omega(k^2 \ln N/\ln k)$ rounds of its activity is at least $1-2P$. 
We first need to prove that, with high probability, on a constant fraction of 
these rounds,
the sum $\sigma(r)$ is not larger than 1.

\begin{lemma}\label{suback}
	Let $v$ be an arbitrary station.
	\gdm{For a sufficiently large input constant $c$, we have that, with probability at least $1-P$, a quarter of the rounds $r$ in interval $[1,ck^2\lN/\ln k]_v$
	have $\sigma(r)\leq 1$}, \dk{as long as $\ln\ln N = O(\ln k)$}.
\end{lemma}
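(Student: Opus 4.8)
\textbf{Proof proposal for Lemma~\ref{suback}.}
The plan is to use the acknowledgements: every successful transmission permanently removes a station, so the sum of transmission probabilities $\sigma(r)$ is continually eroded, and I will show this erosion pushes $\sigma(r)$ below $1$ on a constant fraction of the interval. This is unavoidable here: unlike the no-acknowledgement case of Lemma~\ref{sub} (where $i=16k^2$ forces $\sum_r\sigma(r)<Ti/2$ \emph{deterministically}), the larger rates of Definition~\ref{schedack} only give, via the computation inside the proof of Lemma~\ref{sumack} together with Lemma~\ref{smack}, the deterministic bound $\sum_{r\in[1,Ti]_v}\sigma(r)<2\sqrt2\,c k^2 T$, i.e.\ an \emph{average} $\sigma(r)$ of order $\ln k$. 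Fix $v$ and the worst-case wake-up pattern of at most $k$ stations; the randomness is over the schedule $\cS$. Write $T=\lceil\ln N\rceil$ and $i=ck^2/\ln k$, so $[1,Ti]_v$ has $Ti\approx ck^2\ln N/\ln k$ rounds, and recall $P=N^{-4k}$.

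I would classify the rounds of $[1,Ti]_v$ as \emph{calm} ($\sigma(r)\le 1$), \emph{productive} ($1<\sigma(r)\le K$), and \emph{heavy} ($\sigma(r)>K$), for a threshold $K$ to be chosen, and prove that with probability at least $1-P$ there are at least $Ti/4$ calm rounds by separately bounding the other two classes. Heavy rounds are bounded deterministically: by the displayed $\sum\sigma$ bound there are at most $2\sqrt2\,c k^2T/K$ of them, which is at most $Ti/8$ once $K\ge 16\sqrt2\ln k$. For productive rounds I would use a stochastic-domination argument. Conditioning on the history $\mathcal F_{r-1}$ up to round $r-1$, the set of not-yet-switched-off active stations — hence $\sigma(r)$ — is $\mathcal F_{r-1}$-measurable, while the round-$r$ transmission bits are fresh, so by the argument behind Lemma~\ref{success} the conditional probability of an isolated (successful) transmission in round $r$ is at least $\sigma(r)4^{-\sigma(r)}\ge K4^{-K}$ on a productive round. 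Thus the per-round success indicators, restricted to productive rounds, stochastically dominate i.i.d.\ $\mathrm{Bernoulli}(K4^{-K})$ trials, one per productive round; since at most $k$ switch-offs ever occur, having more than $Ti/8$ productive rounds would force a $\mathrm{Binomial}(Ti/8,\,K4^{-K})$ variable to be at most $k$, an event of probability $\exp(-\Omega(Ti\cdot K4^{-K}))$, which is below $P=N^{-4k}$ \emph{provided} $Ti\cdot K4^{-K}\gg k\ln N$. Combining the two bounds leaves at least $Ti-Ti/8-Ti/8\ge Ti/4$ calm rounds.

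The main obstacle is the conflicting pressure on $K$: the $\sum\sigma$ bound only tames heavy rounds for $K=\Omega(\ln k)$, but then a productive round with $\sigma(r)$ as large as $\Theta(\ln k)$ has success probability only $k^{-\Theta(1)}$, so the condition $Ti\cdot K4^{-K}\gg k\ln N$ fails for a constant $c$; the simple three-way split above therefore does not close on its own. The natural repair is to isolate the \emph{intermediate} regime $\sigma(r)\in(\kappa,\,16\sqrt2\ln k]$ (for a suitable constant $\kappa$ below which productive rounds \emph{are} handled by the Chernoff step) and to bound the number of such rounds directly, using that the $\le k$ wake-up events split $[1,Ti]_v$ into at most $k+1$ \emph{epochs} inside each of which $\sigma(r)$ is non-increasing — within an epoch no new station appears, surviving stations only advance to higher (hence lower-rate) phases, and switch-offs only help — so that $\sigma(r)$ sweeps monotonically through the intermediate band exactly once per epoch, and the decay rate of $q_\phi=\sqrt{(\ln\phi)/\phi}$ across the phases where $\sigma$ is of order $\ln k$ bounds the length of that sweep; summed over all epochs this must be shown to occupy a fraction of $[1,Ti]_v$ that can be driven below any prescribed constant by taking $c$ large. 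This last accounting is the delicate point. The hypothesis $\ln\ln N=O(\ln k)$ enters to keep $\ln(Ti)=O(\ln k)$ in the estimate $s(r)<2\sqrt{Tr\ln r}$ of Lemma~\ref{smack}, so that the averaged $\sigma$ stays $O(\ln k)$ rather than larger; and $c$ must be chosen large enough — depending on the constant in that hypothesis, on $\kappa$, and on the constant $16\sqrt2$ — for every slack in the above estimates to be strictly positive.
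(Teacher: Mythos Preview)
Your three-way split and the diagnosis of the two pressures on $K$ are correct, but the repair via epochs does not close. Take the single-epoch instance where all $k$ stations wake at round~$0$ and ignore switch-offs (which, as you note, only shorten the sweep). Then $\sigma(r)=k\sqrt{\ln\phi/\phi}$ with $\phi=\lceil r/T\rceil$, so at the end of the window $\sigma(Ti)=k\sqrt{\ln i/i}\approx\sqrt{2/c}\,\ln k$, still far above any fixed $\kappa$ once $k$ is large. The intermediate band is entered near phase $\phi_1\approx k^2/(256\ln k)$, i.e.\ round $\approx Ti/(256c)$, and is never exited inside $[1,Ti]_v$; the intermediate regime therefore occupies a $1-O(1/c)$ fraction of the whole window, which \emph{grows} toward $1$ as $c$ increases rather than shrinking. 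The single-station decay $\sqrt{\ln\phi/\phi}$ is simply too slow to traverse a band of multiplicative width $\Theta(\ln k)$ within $ck^2/\ln k$ phases --- that would require $\phi$ to grow by a factor $\Theta((\ln k)^2)$. Only random switch-offs can accelerate the descent, so you are forced back into a probabilistic argument about precisely the rounds you were trying to dispose of deterministically.

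The paper's route is different: it does not try to bound the intermediate band separately. After Lemma~\ref{sumack} produces $L/2$ rounds with $\sigma(r)\le4\sqrt2\ln k$, these are scanned in order; each such round with $\sigma(r)>1$ contributes a bit that is $1$ exactly when the round yields a successful transmission \emph{and} an extra independent coin of bias $k^{-1/2}/q_r$ comes up heads (where $q_r$ is the round's success probability), while rounds with $\sigma(r)\le1$ are skipped without contributing a bit. If this scan produces fewer than $L/4$ bits, the sequence is padded to length $L/4$ with fresh Bernoulli($k^{-1/2}$) bits. Each bit of the resulting sequence is $1$ with probability $k^{-1/2}$ independently, and every unpadded $1$ witnesses a distinct switch-off. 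Hence if fewer than $L/4$ rounds are calm (so no padding occurs) the number $X$ of ones obeys $X\le k$, whereas $\E X=(L/4)k^{-1/2}=\Theta(k^{3/2}\ln N/\ln k)\gg k$, and Chernoff gives $\Pr(X\le k)\le P$. The thinning by $k^{-1/2}/q_r$ flattens the varying per-round success probabilities into a single Bernoulli rate, so the global budget of $k$ switch-offs --- the same constraint you identified --- is compared to the window length in one Chernoff step, with no separate treatment of any sub-band and no need for a uniform $\Omega(1)$ lower bound on $q_r$.
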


\begin{proof}
	Let us fix any station $v$, and
	let $L = ck^2\lN/\ln k$ and \gdm{$\gamma = (4\sqrt{2})$}. 
	By Lemma \ref{sumack}, 
	we can define a set $I = \{u_1,u_2,\ldots,u_{L/2}\}$ 
    of the smallest $L/2$ rounds in the 
	interval $[1,L]_v$ such that $\sigma(r)\leq \gamma\ln k$,
	for all $r\in I$.
	
	We now describe a procedure that, working in two phases,
	produces a random binary sequence $\rho$ of length $L/4$.
	The first phase scans consecutively the rounds 
	$r = u_1,u_2, \ldots, u_{L/2}$
	and produces a corresponding random binary sequence $\rho$.
	If the first phase ends with 
	\gdm{$|\rho| < L/4$, the second phase 
	brings the length of $\rho$ to exactly $L/4$} by
	assigning randomly the remaining bits.	 
	We use an index $r$ for the rounds $u_1,u_2, \ldots, u_{L/2}$ and an index 
	$\ell$ for the bits of $\rho$.

  \vspace{0.5cm}
\begin{minipage}[t]{.9\textwidth}
	{\sc First Phase}.	
		
	Let $\ell = 1$. For $r = u_1,u_2, \ldots, u_{L/2}$ 
	and until $\ell \le L/4$ we do:
	\begin{enumerate}
		\item If $1<\sigma(t)\leq\gamma\ln k$ 
		\begin{enumerate}
			\item[(a)] if there is no successful transmission in round $r$, 
			           we set $\rho_{\ell} = 0$. 
			\item[(b)] if there is a successful transmission in round $r$, 
			we set 
			$$
             \rho_r = 
             \begin{cases}
                1, & \text{with probability}\ k^{-1/2}/q_r \\
                0, & \text{with probability}\ 1-k^{-1/2}/q_r
             \end{cases}
            $$			
			where $q_r$ is the probability of having a successful transmission
			(by any station) in round $r$. 
		\end{enumerate}
	    We increment $\ell = \ell+1$.
	    \item If $\sigma(r)\le 1$ we simply skip round $r$ 
	     (without incrementing $\ell$).
	\end{enumerate}

\vspace{0.1cm}
	{\sc Second Phase}. 

	If $\ell < L/4$ we add the remaining bits as follows:
		1 with probability $k^{-1/2}$ and 0 with probability $1-k^{-1/2}$.
\end{minipage}
\vspace{0.5cm}

\noindent
Note that $q_r$ is defined by the transmission probabilities of the protocol:
\gdm{
\[
 q_r = \sum_{v\in V(r)} p(r) \cdot \prod_{w\neq v} (1-p(r-\Delta_{v,w})).
\] 
}
Therefore,
\gdm{
in each bit of the sequence, the probability of having a 1 assigned in the 
first phase is well defined: $q_r\cdot k^{-1/2}/q_r = k^{-1/2}$. The same probability
holds for the bits assigned in the second phase.}
Hence, in each bit of $\rho$, the probability 
of having a 1 is exactly $k^{-1/2}$, independently of
the values of other bits of the sequence. 

For each bit set to 1 in case 1 of the first phase, 
there is a corresponding round in $\{u_1,u_2, \ldots, u_{L/2}\}$
at which some station successfully transmits.
The number of such successful transmissions cannot
exceed $k$, as we have at most $k$ active stations and each of them can 
transmit successfully only once (after which it switches off).
Therefore, if the number of 1's in $\rho$ exceeds $k$, then the extra bits
must have been added in the second phase, which in turn means that
the first phase ended with $\ell < L/4$. Consequently, there were less
than $L/4$ rounds falling in case 1 of the first phase (as in each of them there is an
increase of $\ell$). This implies that 
in more than $L/4$ rounds $r$ of $[1,ck^2\lN/\ln k]_v$ we must have 
$\sigma(r)\leq 1$ (case 2).

Let $X$ be the random variable denoting the number of 1's in $\rho$.
To conclude the proof, we need to prove that $\Pr(X \le k) \le P$.

The expected number of 1's in $\rho$ is 
\[
\mu = \E X = k^{-1/2}(ck^2\lN/\ln k) = ck^{3/2}\lN/\ln k \ .	
\]

By the Chernoff bound, $\Pr(X\leq (1-\delta)\mu)\leq\exp\left(-\frac{\delta^2\mu}{2}\right)$, for any $0<\delta<1$.
For a sufficiently large $c$, 
\gdm{$k \le ck^{3/2}\ln(N)/(2\ln(k)) = {\mu}/{2}$.}
Therefore, we can write:
\begin{eqnarray*}
	\Pr(X\leq k) &\leq & \Pr(X\leq \mu/2) 
	\ = \ \Pr(X\leq (1-1/2)\cdot \mu)
	\ \leq \ e^{-\mu/8} \ = \  e^{-ck^{3/2}\lN/(8\ln k)} \ .
\end{eqnarray*}
Assuming that $N$ and $k$ are sufficiently large, 
the lemma follows by
observing that for a sufficiently large constant $c$,  
\[
e^{-ck^{3/2}\lN/(8\ln k)} \le N^{-4k} = P \ .
\]
\end{proof}

Now we can derive a lower bound on the probability of successful transmission for
an arbitrary station.

\begin{lemma}\label{probP}

   For a sufficiently large input constant $c$,
   Protocol \unknownack$(N;c,\cS)$ allows any station $v$ to transmit 
	successfully within 
	$ck^2\lN/\ln k$ rounds with probability at least $1 - 2P$, \dk{as long as $\ln\ln N = O(\ln k)$}.
\end{lemma}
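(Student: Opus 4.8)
The plan is to follow the same two‑step scheme as in the proof of Lemma~\ref{prob} for the no‑acknowledgement case, now fed by the stronger structural input of Lemma~\ref{suback}. Fix an arbitrary station $v$ and write $L = ck^2\lN/\ln k$. By Lemma~\ref{suback}, for $c$ large enough (and using $\ln\ln N = O(\ln k)$) there is, with probability at least $1-P$, a set $I\subseteq[1,L]_v$ with $|I|\ge L/4$ such that $\sigma(r)\le 1$ for every $r\in I$. I would condition on this event, call it $\mathcal{G}$, and bound the probability that $v$ nonetheless fails to transmit successfully in any of these $\ge L/4$ ``good'' rounds.

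On $\mathcal{G}$, Lemma~\ref{success} (whose hypothesis holds since all transmission probabilities in Definition~\ref{schedack} are at most $1/2$) gives that in each $r\in I$ the station $v$ transmits successfully with probability larger than $p(r)\cdot 4^{-\sigma(r)} > p(r)/4$. Transmission probabilities are non‑increasing along the phases, so $p(r)\ge p(L)=\sqrt{(\ln i)/i}$ with $i=ck^2/\ln k$ being the index of the last phase; since $i>k$ for large $k$ we get $\ln i>\ln k$, hence $p(L)\ge \ln k/(\sqrt{c}\,k)$. Because the entries $\cS_{v,i,j}$ are independent Bernoulli trials, a standard round‑by‑round conditioning argument (the switching‑off of other stations can only help $v$, so the bound $p(r)/4$ holds conditionally on the earlier rounds as well) yields
\[
\Pr(\text{$v$ fails on all of }I \mid \mathcal{G})\le\left(1-\frac{p(L)}{4}\right)^{L/4}\le\exp\!\left(-\frac{p(L)L}{16}\right)\le\exp\!\left(-\frac{\ln k}{16\sqrt{c}\,k}\cdot\frac{ck^2\lN}{\ln k}\right)=N^{-\frac{\sqrt{c}}{16}k}.
\]
For $c$ large enough, $\sqrt{c}/16\ge 4$, so this is at most $P=N^{-4k}$.

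It then remains to strip the conditioning: the complement of $\mathcal{G}$ has probability at most $P$ by Lemma~\ref{suback}, and conditioned on $\mathcal{G}$ the failure probability is at most $P$ by the display above, so a union bound gives that $v$ transmits successfully within $ck^2\lN/\ln k$ rounds with probability at least $1-2P$. The hypothesis $\ln\ln N = O(\ln k)$ is used only through its appeal in Lemma~\ref{suback} (to keep $\sigma(r)=O(\ln k)$ on a half of the interval); everything else is $N$‑free since the relevant horizon is $i=ck^2/\ln k$. The only real friction I anticipate is the bookkeeping of constants — verifying that the loss of the constant $1/\sqrt{c}$ in the lower bound on $p(L)$ together with the factor $1/16$ from the exponentiation still leaves the exponent above the target $4k$ for all sufficiently large $c$ — which is routine.
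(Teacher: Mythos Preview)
Your proposal is correct and follows essentially the same argument as the paper: invoke Lemma~\ref{suback} to obtain, with probability at least $1-P$, a set of $L/4$ rounds with $\sigma(r)\le 1$, use Lemma~\ref{success} together with the monotonicity $p(r)\ge p(L)\ge \ln k/(\sqrt{c}\,k)$ to lower-bound the per-round success probability by $\ln k/(4\sqrt{c}\,k)$, and then bound the failure probability by $P+(1-\ln k/(4\sqrt{c}\,k))^{L/4}\le P+N^{-\sqrt{c}k/16}<2P$ for $c$ large enough. Your explicit naming of the event $\mathcal{G}$ and the remark that switching-off of other stations can only help $v$ is a slight sharpening of the paper's presentation, but the structure and constants are identical.
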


\begin{proof}
	Fix any station $v$ and let $r$ be any round of $v$'s local clock,
	\gdm{$r \le ck^2\ln N/\ln k$}, 	
	such that $\sigma(r) \le 1$. 
	Recalling Lemma \ref{success},
	the probability that $v$ transmits successfully in such a round $r$
	is larger than
\begin{eqnarray*}
	 p(r)\cdot 4^{-\sigma(r)} 
	       &\ge& p(r)/4 \;\text{ \;\;\;\;\;\;\;\;\;\;\;\;\;\;\;\;\;\;\;\;\ \ [because we assumed $\sigma(r) \le 1$]} \\
	       &\ge& p(ck^2\lN/\ln k)/4  \;\;\;\;\; \text{ [because $p(r)$ is a \gdm{non-increasing} function]}\\
	       &\ge& \frac{\ln k}{4\sqrt{ck^2}} \ .
\end{eqnarray*}	
	Lemma~\ref{suback} states that, 
\gdm{for a sufficiently large constant $c$, we have that, with probability 
	at least $1-P$, there is a set $J$
of size $|J| = ck^2\lN/(4\ln k)$ rounds, within the first $ck^2\lN/\ln k$ rounds 
of $v$'s local clock, such that $\sigma(r)\leq 1$ 
for every $r\in J$. }

	Therefore, the probability that $v$ does not manage to transmit 
	successfully within the first
	$ck^2\lN/\ln k$ rounds of its activity is at most
	\begin{eqnarray*}
		P+\left( 1 - \frac{\ln k}{4k\sqrt{c}} \right)^{|J|}
	    &=& P+\left( 1 - \frac{\ln k}{4k\sqrt{c}} \right)^
	       {\frac{4k\sqrt{c}}{\ln k} \cdot \frac{ck\lN}{16\sqrt{c}} }\\
		&<& P + e ^{-{\sqrt{c}k\lN}/{16}} \\
		&\le& P + N ^{-\sqrt{c}k/16} \\
		&<& 2P \ ,
	\end{eqnarray*}%
	for $c$ sufficiently large.
	
	Finally, the probability that $v$ transmits successfully in 
	the first $ck^2\lN/\ln k$ rounds of its activity is	
	 at least $1-2P$.
\end{proof}

{We are now ready to prove}
Theorem~\ref{derandack} that establishes 
{the upper bound on the maximum latency} of our algorithm and 
concludes the section.

\remove{
\begin{theorem}\label{derandack}
	There exists some constant $c>0$ and a (deterministic) schedule $\cS$ 
	such that algorithm \unknownack $(N;c,\cS)$ allows  
	any station $v$ to transmit successfully within 
	$O(k^2\lN/\ln k)$ rounds. This implies channel utilization
	at least $\Omega(\ln k/(k\log N))$.
\end{theorem}
}

\begin{proof}[Proof of Theorem~\ref{derandack}]
	For every active station $v$, let us consider the first 
	$L=ck^2\lN/\ln k$ rounds of its activity.
	During the interval $[1,L]_v$, other stations can be activated. Relatively to its
	activation time, any other station can take a total of $L+1$ configurations: 
	one corresponds to the case the station is not activated during the interval $[1,T]_v$
	and the other $T$ correspond to the possible activation times that it can have within that interval.
	
	For every fixed station $v$, out of $N$, there are up to $k-1$ other stations, out of the remaining $N-1$, that can become active within 
	the interval $[1,L]_v$. Therefore, the total number of configurations 
	is less than
	$
	N\cdot \binom{N(L+1)}{k} 
	$.
	
	Recalling Lemma~\ref{probP} and taking the union bound over all possible configurations, we get that the probability that there exists a station $v$ that does not transmit successfully, within
	the first $L$ rounds of its activity, is less than 
	\begin{eqnarray*}
		N\cdot \binom{N(L+1)}{k} \cdot N ^{-4k} 
		\ &\leq& \ \exp \left[\ln N + k\ln N 
		+ k\ln\frac{\frac{ck^2\lN}{\ln k}+1}{k/e} - 4k \ln N\right]
		 \ < \  1 \ ,
	\end{eqnarray*}
	where the last inequality holds for a sufficiently large $k$ and constant $c$. \dk{Note that using Lemma~\ref{probP} requires the assumption $\ln\ln N = O(\ln k)$.}
	
	This implies that there exists a constant $c$ and a schedule $\cS$ 
	such that our protocol \unknown $(N;c,\cS)$ allows any station to transmit successfully within $O(k^2\lN/\ln k)$ rounds from its activation.
\end{proof}

\section{Conclusions and Open Problems}
\label{s:conclusions}

We provided a comprehensive study for 
the deterministic contention resolution problem 
in a shared channel in the general situation in which 
each attached station can become active at any time (asynchronous start).

Our results show which parameters allow efficient contention resolution and which do not.
Surprisingly, they show a substantial impact of the
knowledge of the contention size. 
This is notable as it is known
that for a {\em synchronized channel} this feature {\em does not} asymptotically influence 
the efficiency.

The second implication concerns the impact of acknowledgments:
they exponentially improve deterministic channel utilization if 
(some linear estimate of) $k$ is known, unlike the case of  
randomized algorithms where the improvement is only at most 
polynomial~\cite{DS-17}. The acknowledgements are not particularly helpful in case of unknown contention~size. 

{Non-adaptive algorithms use fixed transmission schedules, which could be naturally translated into codes in the radio channel
-- for every transmission sequence, each occurrence of 1 should be substituted by the ID in order to create a codeword, and the contending codewords are placed on the channel with shifts defined by wake-up patterns; each player gets the feedback vector from the channel and decodes the contending IDs.
In this context our results indicate under which conditions such codes could be efficient.}	

Several questions and directions remain open.
One is to shrink the gap between lower and upper bounds for channel utilization
for both known and unknown contention size.
Another intriguing direction concerns the study of energy aspects, {which are particularly challenging when optimized together with latency and channel utilization. 
\dk{Explicit (i.e., polynomial-time) construction of efficient transmission sequences is a third challenging open problem.
Finally, it is interesting to study a tradeoff between maximum latency and the amount of randomness used by stations. It could also help in making algorithms more constructive.}}

\section*{Acknowledgments}

We thank the anonymous reviewers for several comments that helped to improve this paper.

\bibliographystyle{plain}
\bibliography{bibliography}
\end{document}